\declaretheorem[name=Theorem]{theorem}
\definecolor{su-green}{RGB}{16, 126, 59}
\definecolor{su-blue}{RGB}{0,96,170}
\definecolor{su-orange}{RGB}{255,140,0}
\newcolumntype{P}[1]{>{\centering\arraybackslash}p{#1}}
\newtheorem{proposition}{Proposition}
\newtheorem{proposition?}{Proposition?}
\newtheorem{lemma}{Lemma}
\newtheorem{corollary}{Corollary}
\newtheorem{conjecture}{Conjecture}
\newtheorem*{restate}{Theorem}
\newtheorem*{proposition*}{Proposition}
\newtheorem*{corollary*}{Corollary}
\theoremstyle{definition}
\newtheorem{remark}{Remark}
\newtheorem{definition}{Definition}
\newcommand{\no}[1]{\left\|#1\right\|} 
\newcommand{\tr}[1]{\textrm{tr}\left[#1\right]} 
\newcommand{\rank}{\mathrm{rank}\,} 
\newcommand{\id}{\mathbbm{1}} 
\newcommand{\E}{\mathsf{E}}
\renewcommand{\P}{\mathsf{P}}
\newcommand{\R}{\mathsf{R}}
\newcommand{\M}{\mathsf{M}}
\definecolor{darknavy}{RGB}{0,34,102} 
\tikzset{
  vtx/.style={circle,fill=darknavy,inner sep=0pt,minimum size=3pt},
  edg/.style={line width=0.4pt, darknavy}
}
\tikzset{
  vtx/.style = {circle, fill=black, inner sep=1.2pt},
  edg/.style = {line width=0.5pt}
}
\newcommand{\rooksCurved}[1][0.9]{%
\begin{tikzpicture}[scale=0.65, every node/.style={font=\small,transform shape}]
  \definecolor{darknavy}{RGB}{0,34,102}
  \tikzset{
    vtx/.style={circle,fill=darknavy,inner sep=0pt,minimum size=4pt},
    edg/.style={line width=0.6pt,darknavy,line cap=round,shorten >=1.3pt,shorten <=1.3pt},
    outline/.style={line width=0.8pt,darknavy,line cap=round,shorten >=1.3pt,shorten <=1.3pt}
  }

  \begin{scope}[shift={(-3.6,0)}]
    \coordinate (u1) at (-2.0,  0.8);
    \coordinate (u2) at (-2.0, -0.8);
    \coordinate (w1) at ( 2.0,  1.2);
    \coordinate (w2) at ( 2.0,  0.0);
    \coordinate (w3) at ( 2.0, -1.2);

    \draw[edg] (u1)--(w1) (u1)--(w2) (u1)--(w3);
    \draw[edg] (u2)--(w1) (u2)--(w2) (u2)--(w3);

    \foreach \P in {u1,u2,w1,w2,w3}{\node[vtx] at (\P) {};}
    \node[left=4pt]  at (u1) {$u_1$};
    \node[left=4pt]  at (u2) {$u_2$};
    \node[right=4pt] at (w1) {$w_1$};
    \node[right=4pt] at (w2) {$w_2$};
    \node[right=4pt] at (w3) {$w_3$};
  \end{scope}

  \begin{scope}[shift={(3.6,0)}]
    \coordinate (r11) at (-1.8,  0.8);
    \coordinate (r12) at ( 0.0,  0.8);
    \coordinate (r13) at ( 1.8,  0.8);
    \coordinate (r21) at (-1.8, -0.8);
    \coordinate (r22) at ( 0.0, -0.8);
    \coordinate (r23) at ( 1.8, -0.8);

    \draw[outline,opacity=0.3] (r11)--(r12)--(r13)--(r23)--(r22)--(r21)--cycle;

    \draw[edg] (r11)--(r12) (r12)--(r13);
    \draw[edg] (r21)--(r22) (r22)--(r23);

    \draw[edg] (r11)--(r21) (r12)--(r22) (r13)--(r23);

    \draw[edg] (r11) .. controls ($ (r12) + (-0.6,-#1) $) and ($ (r12) + (0.6,-#1) $) .. (r13);
    \draw[edg] (r21) .. controls ($ (r22) + (-0.6,#1) $)  and ($ (r22) + (0.6,#1) $)  .. (r23);

    \foreach \P in {r11,r12,r13,r21,r22,r23}{\node[vtx] at (\P) {};}

    \node[above=4pt] at (r11) {$\{u_1,w_1\}$};
    \node[above=4pt] at (r12) {$\{u_1,w_2\}$};
    \node[above=4pt] at (r13) {$\{u_1,w_3\}$};
    \node[below=4pt] at (r21) {$\{u_2,w_1\}$};
    \node[below=4pt] at (r22) {$\{u_2,w_2\}$};
    \node[below=4pt] at (r23) {$\{u_2,w_3\}$};
  \end{scope}
\end{tikzpicture}%
}
\newcommand{\qthreeLine}{%
\begin{tikzpicture}[scale=0.65, every node/.style={font=\small,transform shape}]
  \definecolor{darknavy}{RGB}{0,34,102}
  \tikzset{
    vtx/.style={circle,fill=darknavy,inner sep=0pt,minimum size=4pt},
    edg/.style={line width=0.6pt,darknavy,line cap=round,shorten >=1.3pt,shorten <=1.3pt},
    outline/.style={line width=0.8pt,darknavy,line cap=round,shorten >=1.3pt,shorten <=1.3pt}
  }

  \begin{scope}[shift={(-3.2,0)}]
    \coordinate (F1) at (-1.6,-1.2);
    \coordinate (F2) at ( 0.6,-1.2);
    \coordinate (F3) at ( 0.6, 1.2);
    \coordinate (F4) at (-1.6, 1.2);

    \coordinate (B1) at ( 0.0,-0.2);
    \coordinate (B2) at ( 2.2,-0.2);
    \coordinate (B3) at ( 2.2, 2.2);
    \coordinate (B4) at ( 0.0, 2.2);

    \draw[outline] (F1)--(F2)--(F3)--(F4)--cycle;
    \draw[outline] (B1)--(B2)--(B3)--(B4)--cycle;
    \draw[edg] (F1)--(B1) (F2)--(B2) (F3)--(B3) (F4)--(B4);

    \foreach \P in {F1,F2,F3,F4,B1,B2,B3,B4}{\node[vtx] at (\P) {};}
  \end{scope}

  \begin{scope}[shift={(3.2,0)}]
    \coordinate (mf12) at ({(-1.6+0.6)/2},  -1.2);
    \coordinate (mf23) at ( 0.6,              {(-1.2+1.2)/2});
    \coordinate (mf34) at ({(-1.6+0.6)/2},    1.2);
    \coordinate (mf41) at (-1.6,              {(-1.2+1.2)/2});

    \coordinate (mb12) at ({(0.0+2.2)/2},    -0.2);
    \coordinate (mb23) at ( 2.2,              {( -0.2+2.2)/2});
    \coordinate (mb34) at ({(0.0+2.2)/2},     2.2);
    \coordinate (mb41) at ( 0.0,              {( -0.2+2.2)/2});

    \coordinate (m1) at ({(-1.6+0.0)/2},  {(-1.2-0.2)/2});
    \coordinate (m2) at ({( 0.6+2.2)/2},  {(-1.2-0.2)/2});
    \coordinate (m3) at ({( 0.6+2.2)/2},  {( 1.2+2.2)/2});
    \coordinate (m4) at ({(-1.6+0.0)/2},  {( 1.2+2.2)/2});

    \draw[outline] (mf12)--(m2)--(mb23)--(mb34)--(m4)--(mf34)--(mf41)--(m1)--(mb41)--(mb12)--cycle;

    \draw[edg] (mf12)--(mf41)--(m1)--cycle;
    \draw[edg] (mf12)--(mf23)--(m2)--cycle;
    \draw[edg] (mf23)--(mf34)--(m3)--cycle;
    \draw[edg] (mf34)--(mf41)--(m4)--cycle;

    \draw[edg] (mb12)--(mb41)--(m1)--cycle;
    \draw[edg] (mb12)--(mb23)--(m2)--cycle;
    \draw[edg] (mb23)--(mb34)--(m3)--cycle;
    \draw[edg] (mb34)--(mb41)--(m4)--cycle;

    \draw[edg] (mf12)--(mf23)--(mf34)--(mf41)--cycle;
    \draw[edg] (mb12)--(mb23)--(mb34)--(mb41)--cycle;
    \draw[edg] (mf12)--(m1)--(mb12)--(m2)--cycle;
    \draw[edg] (mf23)--(m2)--(mb23)--(m3)--cycle;
    \draw[edg] (mf34)--(m3)--(mb34)--(m4)--cycle;
    \draw[edg] (mf41)--(m4)--(mb41)--(m1)--cycle;

    \foreach \P in {mf12,mf23,mf34,mf41,mb12,mb23,mb34,mb41,m1,m2,m3,m4}{\node[vtx] at (\P) {};}
  \end{scope}
\end{tikzpicture}%
}
\newcommand{\kfiveLine}{%
\begin{tikzpicture}[scale=0.65, every node/.style={font=\small,transform shape}]
  \definecolor{darknavy}{RGB}{0,34,102}
  \tikzset{
    vtx/.style={circle,fill=darknavy,inner sep=0pt,minimum size=4pt},
    edg/.style={line width=0.6pt,darknavy,line cap=round,shorten >=1.3pt,shorten <=1.3pt},
    outline/.style={line width=0.8pt,darknavy,line cap=round,shorten >=1.3pt,shorten <=1.3pt}
  }

  \begin{scope}[shift={(-3.6,0)}]
    \def\Rk{1.9}
    \foreach \name/\ang in {P1/90, P2/18, P3/-54, P4/-126, P5/162}{
      \coordinate (\name) at ({\Rk*cos(\ang)},{\Rk*sin(\ang)});
    }

    \draw[outline] (P1)--(P2)--(P3)--(P4)--(P5)--cycle;

    \draw[edg] (P1)--(P3) (P1)--(P4) (P1)--(P5);
    \draw[edg] (P2)--(P4) (P2)--(P5);
    \draw[edg] (P3)--(P5);

    \foreach \P in {P1,P2,P3,P4,P5}{ \node[vtx] at (\P) {}; }

    \node at ({(\Rk+0.35)*cos(90)},   {(\Rk+0.35)*sin(90)})   {$1$};
    \node at ({(\Rk+0.35)*cos(18)},   {(\Rk+0.35)*sin(18)})   {$2$};
    \node at ({(\Rk+0.35)*cos(-54)},  {(\Rk+0.35)*sin(-54)})  {$3$};
    \node at ({(\Rk+0.35)*cos(-126)}, {(\Rk+0.35)*sin(-126)}) {$4$};
    \node at ({(\Rk+0.35)*cos(162)},  {(\Rk+0.35)*sin(162)})  {$5$};
  \end{scope}

  \begin{scope}[shift={(3.6,0)}]
    \def\R{2.2}
    \foreach \name/\ang in {
      N12/90, N13/54, N23/18, N24/-18, N34/-54,
      N35/-90, N45/-126, N14/-162, N15/162, N25/126
    }{
      \coordinate (\name) at ({\R*cos(\ang)},{\R*sin(\ang)});
    }

    \draw[outline]
      (N12)--(N13)--(N23)--(N24)--(N34)--(N35)--
      (N45)--(N14)--(N15)--(N25)--cycle;

    \draw[edg] (N12)--(N14) (N12)--(N15) (N12)--(N24) (N12)--(N25);
    \draw[edg] (N13)--(N14) (N13)--(N15) (N13)--(N34) (N13)--(N35);
    \draw[edg] (N23)--(N34) (N23)--(N35) (N23)--(N25);
    \draw[edg] (N24)--(N45) (N24)--(N25);
    \draw[edg] (N34)--(N45) (N34)--(N14);
    \draw[edg] (N35)--(N14) (N35)--(N15);
    \draw[edg] (N45)--(N14) (N45)--(N25);
    \draw[edg] (N14)--(N15);
    \draw[edg] (N15)--(N25);

    \foreach \P in {N12,N13,N23,N24,N34,N35,N45,N14,N15,N25}{
      \node[vtx] at (\P) {};
    }

    \node at ({(\R+0.4)*cos(90)},  {(\R+0.4)*sin(90)})   {$\{1,2\}$};
    \node at ({(\R+0.4)*cos(54)},  {(\R+0.4)*sin(54)})   {$\{1,3\}$};
    \node at ({(\R+0.55)*cos(18)}, {(\R+0.55)*sin(18)})  {$\{2,3\}$};
    \node at ({(\R+0.55)*cos(-18)},{(\R+0.55)*sin(-18)}) {$\{2,4\}$};
    \node at ({(\R+0.4)*cos(-54)}, {(\R+0.4)*sin(-54)})  {$\{3,4\}$};
    \node at ({(\R+0.4)*cos(-90)}, {(\R+0.4)*sin(-90)})  {$\{3,5\}$};
    \node at ({(\R+0.4)*cos(-126)},{(\R+0.4)*sin(-126)}) {$\{4,5\}$};
    \node at ({(\R+0.55)*cos(-162)},{(\R+0.55)*sin(-162)}){$\{1,4\}$};
    \node at ({(\R+0.55)*cos(162)}, {(\R+0.55)*sin(162)}) {$\{1,5\}$};
    \node at ({(\R+0.4)*cos(126)},  {(\R+0.4)*sin(126)})  {$\{2,5\}$};
  \end{scope}
\end{tikzpicture}%
}
\begin{document}

\title{A Graph-Theoretic Approach to Quantum Measurement Incompatibility}
\author{Daniel McNulty}
\affiliation{Dipartimento di Fisica, Università di Bari, 70126 Bari, Italy}
\date{November 20, 2025}

\begin{abstract}
Measurement incompatibility---the impossibility of jointly measuring certain quantum observables---is a fundamental resource for quantum information processing. We develop a graph-theoretic framework for quantifying this resource for large families of binary measurements, including Pauli observables on multi-qubit systems and $k$-body Majorana observables on $n$-mode fermionic systems. To each set of observables we associate an anti-commutativity graph, whose vertices represent observables and whose edges indicate pairs that anti-commute. In this representation, the incompatibility robustness---the minimal amount of classical noise required to render the set jointly measurable---becomes a graph invariant. We derive general bounds on this invariant in terms of the Lovász number, clique number, and fractional chromatic number, and show that the Lovász number yields the correct asymptotic scaling for $k$-body Majorana observables. For line graphs $L(G)$, which naturally arise in the characterisation of exactly solvable spin models, we obtain spectral bounds on the robustness expressed through the energy and skew-energy of the underlying graph $G$. These bounds become tight for highly symmetric graphs, leading to closed formulas for several graph families. Finally, we identify structural conditions under which the robustness is determined by a simple function of the graph’s maximum degree and the number of vertices and edges, and show that such extremal cases occur only when combinatorial structures such as Hadamard, conference or weighing matrices exist.
\end{abstract}

\maketitle
 
 \section{Introduction} 

Many of the remarkable consequences of quantum theory—including uncertainty relations, non-locality, contextuality, and complementarity—originate from the non-commutativity of certain quantum observables \cite{wolf09,xu19,busch04}. A basic implication of non-commutativity is the impossibility of simultaneously measuring certain physical properties of a quantum system, which precludes the existence of a joint probability distribution with marginals matching the statistics of each observable. This lack of joint measurability, known as \emph{measurement incompatibility}, imposes fundamental limits on how information can be transmitted and retrieved \cite{guhne23,mcnulty22}. At the same time, it serves as a quantum resource \cite{heinosaari15}, underpinning applications such as quantum steering \cite{quintino14,uola15} and quantum state discrimination \cite{skrzypczyk19}.

Non-commutativity, however, is not by itself sufficient for incompatibility: certain non-commuting POVMs admit a joint measurement, for instance when they are sufficiently unsharp. A natural way to quantify incompatibility is to mix the measurements with noise and determine the minimal noise level required to render them jointly measurable—the \emph{robustness of incompatibility}. Quantifying measurement incompatibility in this way is useful for a number of purposes. Beyond serving as a resource-theoretic measure of non-classicality \cite{heinosaari15}, robustness measures provide insights into the quantum-to-classical transition in open quantum systems \cite{kiukas22,kiukas23}, and determine bounds on the sample complexity of estimation tasks in quantum computing \cite{mcnulty22,mcnulty24,majsak24}. For highly symmetric measurement sets---such as mutually unbiased bases \cite{durt10,mcnulty24mubs}---robustness measures can sometimes be computed analytically \cite{designolle19b,carmeli19}. In general, however, it requires semidefinite programming, which quickly becomes intractable as the number of measurements or dimension of the system grows \cite{designolle19a,cavalcanti17}.

In this work, we develop a graph--theoretic framework that provides analytic and scalable methods for quantifying measurement incompatibility for collections of observables defined by their (anti-)commutation relations. This complements and extends previous studies on the incompatibility of binary quantum observables \cite{busch86,brougham07,kunjwal14,bluhm18,mcnulty24,majsak24}. Graph--theoretic methods have previously shed light on a variety of fundamental and applied aspects of quantum mechanics, ranging from uncertainty relations \cite{gois23,xu23,moran24} and quantum correlations \cite{cabello10,cabello14,kunjwal19}, to dimension witnessing \cite{ray21} and the study of quantum channels \cite{duan12}. A notable example is the Lovász number, originally introduced as an upper bound on the Shannon capacity of a graph \cite{lovasz79}, which has since found important applications in the study of quantum non--contextual inequalities \cite{cabello10,cabello14} and uncertainty relations \cite{gois23,xu23,moran24}. Similar graph invariants will play a role in this work.

Our framework associates each vertex of a graph with a binary quantum observable and uses the edge set to encode their (anti-)commutation relations. In particular, the \emph{anti-commutativity graph} $G$ of a set of $n$ observables is the graph on $n$ vertices in which two vertices are adjacent if and only if the corresponding operators anti-commute, while non-adjacent vertices correspond to commuting ones. Examples of observables that can be described in this way include tensor products of Pauli operators---ubiquitous throughout quantum computing and quantum information---as well as many-body fermionic Majorana observables. Representing observables by anti-commutativity graphs (also known as frustration graphs) has proved useful in several contexts, including quantum spin systems \cite{chapman20,chapman22,chapman23}, classifying Pauli Lie algebras \cite{aguilar24}, bounding uncertainties and the joint numerical range of Pauli strings \cite{xu23,gois23,moran24}, as well as in the analysis of variational quantum algorithms \cite{kirby19,gokhale19}.

We first show that for any simple connected graph $G$ there exists a set of observables whose anti-commutation relations are described by $G$ (Prop.~\ref{prop:arb_graphs}). Moreover, we prove that the incompatibility robustness depends only on $G$, and is therefore completely determined by the anti-commutation relations rather than by any particular realisation of the observables (Prop.~\ref{prop:eta_invariance}). We then establish an upper bound on the incompatibility robustness in terms of the Lovász number of the anti-commutativity graph (Theorem~\ref{thm:lovasz}). This bound is saturated, for example, by the complete graph, which corresponds to any set of pairwise anti-commuting observables. In some cases, tighter bounds can be obtained from the clique number by considering induced subgraphs (Prop.~\ref{prop:clique}). We also derive a lower bound by constructing joint measurements based on graph colourings that partition the observables into commuting subsets. In particular, by randomising over POVMs that measure each coloured subset, we obtain a lower bound in terms of the fractional chromatic number of the graph (Prop.~\ref{prop:frac}).

A central part of our analysis focuses on the special case where the anti-commutation pattern among observables is represented by a line graph: a graph obtained by replacing each edge of an underlying ``root'' graph with a vertex, and connecting two vertices whenever their corresponding edges in the root graph are incident. Importantly, line graphs can be realised by a suitable family of quadratic Majorana observables (Prop.~\ref{prop:quadratic-line}) This relation was recently used to characterise when many-body spin--$1/2$ systems can be mapped to exactly solvable free-fermion systems \cite{chapman20,chapman22,chapman23}. For these graphs, we derive bounds on the incompatibility robustness in terms of the spectral properties of the root graph, namely its energy and skew--energy (Theorem~\ref{thm:spectra}). These results yield exact formulas for several families, including cycles, Johnson graphs, hypercubes and rook’s graphs (Cor.~\ref{prop:eta-cycle}--\ref{cor:eta-LKmn}), as well as tight asymptotic bounds for paths (Cor.~\ref{prop:eta_paths}). Furthermore, we determine that a line graph saturates a universal upper bound only if it is regular and its root graph admits an orientation whose skew-adjacency matrix is a skew-weighing matrix (Prop.~\ref{prop:weighing}), a well-studied object in combinatorics \cite{colbourn10}. This establishes a connection between joint measurability and classical problems in algebraic combinatorics, such as the existence of Hadamard and skew-conference matrices.

Finally, we turn to a class of observables that arise in many-body fermionic systems and have applications in quantum chemistry and partial tomography \cite{babbush23,clinton24,zhao21,wan23}: degree--$k$ Majorana monomials. The anti-commutativity graphs of these observables are examples of merged Johnson graphs \cite{jones05}. We employ recent tight asymptotic bounds on the Lovász number of these graphs (established by Linz \cite{linz24}) to characterise their measurement incompatibility. In the large-system limit, with fixed $k$, we show that the Lovász number provides a tight asymptotic bound on the incompatibility robustness (Prop.~\ref{prop:inc_rob_merged}).

The paper is organised as follows. Section~\ref{sec:preliminaries} introduces basic definitions and notation. In Section~\ref{sec:arbitrary_graphs} we show that any pattern of anti-commutation relations can be physically realised by binary observables, and that the incompatibility robustness does not depend on the specific choice of observables realising the pattern. General upper and lower bounds on the incompatibility robustness are derived in Section~\ref{sec:lovasz}. Sections~\ref{sec:lines} and \ref{sec:fermionic} consider line graphs and merged Johnson graphs, respectively. We conclude in Section~\ref{sec:summary} with a summary and outlook.

\section{Preliminaries}\label{sec:preliminaries}

\subsection{Quantum measurements}

A quantum measurement on a Hilbert space $\mathcal{H}$ with finite outcome set $\Omega$ is described by a \emph{positive operator-valued measure} (POVM) $\M=\{\M(a)\}_{a\in\Omega}$, i.e., a collection of positive semidefinite operators $\M(a)\in\mathcal{B}(\mathcal{H})$ satisfying
\begin{equation}
\M(a)\ge 0, \qquad \sum_{a\in\Omega}\M(a)=\id,
\end{equation}
where $\mathcal{B}(\mathcal{H})$ denotes the space of bounded linear operators on $\mathcal{H}$. Each operator $\M(a)$ is the POVM effect associated with outcome $a$, 
and the probability of obtaining $a$ when measuring a quantum state $\rho$ is given by the Born rule $\mathbb{P}(a|\rho)=\tr{\rho\,\M(a)}$. 

A \emph{projective measurement} (PVM) is a special instance of a POVM in which all effects are orthogonal projections, i.e., $\M(a)\M(a')=\delta_{aa'}\M(a)$. Such a measurement corresponds to a Hermitian operator $A$ (the \emph{observable}) on $\mathcal H$ with spectral decomposition $A = \sum_{a\in \Omega} a\, \M(a)$, where $\M(a)$ is the eigenprojection associated with eigenvalue $a$. Throughout, we call projective measurements (and their corresponding observables) \emph{sharp}, 
and general non-projective POVMs \emph{unsharp}.

In this work, we focus on families of binary observables
\begin{equation}\label{eq:binary_observables}
\mathcal{A}=\{A_v\,|\,v\in V\},
\end{equation}
on a finite dimensional Hilbert space $\mathcal{H}\cong\mathbb{C}^d$, where $V=[n]:=\{1,2,\ldots,n\}$ labels the observables in the family. Each $A_v$ is Hermitian and we assume $A_v^2=\id$. The latter constraint implies that $A_v$ is unitary with eigenvalues $\pm1$, corresponding to a sharp, dichotomic measurement. The associated two-outcome projective measurement has effects
\begin{equation}\label{eq:sharp}
    \M_v(\pm)=\tfrac{1}{2}(\id\pm A_v).
\end{equation}

\subsection{Anti-commutativity graphs}

The algebraic relations among the observables of $\mathcal{A}$ will be represented graphically in the following way. We associate each vertex of a graph with an observable and use the edge set to encode their (anti-)commutation relations. 

\begin{definition}\label{def:anti_graph}
The \emph{anti-commutativity graph} $G = (V,E)$ of a set of observables $\mathcal{A}$ is defined by
\begin{equation}
\{v,v'\}\in E \iff A_vA_{v'}=-A_{v'}A_v,
\end{equation}
so that adjacent vertices correspond to anti-commuting observables, and non-adjacent vertices correspond to commuting ones.
\end{definition}

Two notable examples of observables that admit anti-commutativity graphs are Pauli strings and many-body fermionic Majorana observables. On $m$ qubits, Pauli strings
\begin{equation}\label{eq:paulis}
P=\bigotimes_{i=1}^m P_i,\qquad P_i\in\{\id,\sigma_x,\sigma_y,\sigma_z\},
\end{equation}
anti-commute if they differ on an odd number of sites by single-qubit Paulis that anti-commute.  
Fermionic Majorana observables are built from Majorana operators $\Gamma_1,\ldots,\Gamma_m$ satisfying
\begin{equation}\label{eq:majoranas}
\Gamma_j^\dagger=\Gamma_j,\qquad \Gamma_j^2=\id,\qquad \{\Gamma_j,\Gamma_k\}=2\delta_{jk}\id,
\end{equation}
where $1\le j,k\le m$. The anti-commutativity graph of these $m$ operators is the complete graph $K_m$, and they generate the complex Clifford algebra $\mathrm{Cl}_m(\mathbb{C})$. They admit an irreducible representation on a Hilbert space $\mathcal{H}\cong\mathbb{C}^{2^{\lfloor m/2\rfloor}}$, which coincides with the fermionic Fock space for $m/2$ modes when $m$ is even. 
In this case, the representation can be realised on $m/2$ qubits via the Jordan--Wigner transformation \cite{bravyi02}.

Products of an even number of distinct Majorana operators, such as the quadratic observables
\begin{equation}\label{eq:quads}
A_{jk}:= i\,\Gamma_j\Gamma_k,\qquad j<k,
\end{equation}
anti-commute if and only if they share an odd number of Majorana operators (see Fig.~\ref{fig:majorana-6}).  Graph representations of Pauli and Majorana observables have been used in various contexts, such as uncertainty relations \cite{guhne23,xu23} and free-fermion spin models \cite{chapman20,chapman22}.

An overview of the graph-theoretic notions used in this work is given in Appendix~\ref{A:prelims}.

\begin{figure}[t]
\centering
\begin{tikzpicture}[scale=0.80] 
  \definecolor{darknavy}{RGB}{0,34,102}
  \tikzset{
    vtx/.style={circle,fill=darknavy,inner sep=0pt,minimum size=4pt}, 
    edg/.style={line width=0.8pt,darknavy,line cap=round},              
    vlab/.style={font=\normalsize}                                    
  }

  \node[vtx] (L)  at (0,0)    {};
  \node[vtx] (R)  at (4,0)    {};
  \node[vtx] (LU) at (-1.6,1.4) {};
  \node[vtx] (LD) at (-1.6,-1.4){};
  \node[vtx] (RU) at (5.6,1.4)  {};
  \node[vtx] (RD) at (5.6,-1.4) {};

  \draw[edg] (L)--(R);
  \draw[edg] (L)--(LU);
  \draw[edg] (L)--(LD);
  \draw[edg] (R)--(RU);
  \draw[edg] (R)--(RD);

  \node[vlab] at ($(L)+(0.8,-0.55)$)  {$-\;\Gamma_{1}\Gamma_{2}\Gamma_{3}\Gamma_{6}$};
  \node[vlab] at ($(R)+(-0.9,0.55)$) {$-\;\Gamma_{1}\Gamma_{4}\Gamma_{5}\Gamma_{7}$};

  \node[vlab] at ($(LU)+(-0.10,0.65)$) {$i\,\Gamma_{2}\Gamma_{8}$};
  \node[vlab] at ($(LD)+(-0.10,-0.70)$) {$i\,\Gamma_{3}\Gamma_{9}$};
  \node[vlab] at ($(RU)+(0.10,0.65)$)  {$i\,\Gamma_{4}\Gamma_{10}$};
  \node[vlab] at ($(RD)+(0.10,-0.70)$) {$i\,\Gamma_{5}\Gamma_{11}$};
\end{tikzpicture}
\caption{Example of a six-vertex anti-commutativity graph realised by quadratic and quartic Majorana monomials. Observables anti-commute if and only if they share one or three common Majorana operators from the set $\{\Gamma_j\,|\,j=1,\ldots,11\}$.}
\label{fig:majorana-6}
\end{figure}
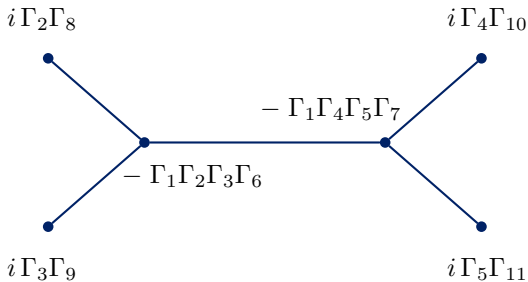

\subsection{Measurement incompatibility}

A collection of measurements are \emph{jointly measurable} if there exists a parent POVM such that each measurement can be recovered as one of its marginals, or equivalently, if their statistics can be obtained from the parent after classical post-processing \cite{ali09,busch16}. For Hermitian observables, this is equivalent to commutativity; hence the family $\mathcal{A}$ in \eqref{eq:binary_observables} cannot be jointly measurable if its anti-commutativity graph contains at least one edge. For general (non-projective) POVMs, commutativity is no longer necessary for joint measurability.

To quantify the degree of incompatibility of $\mathcal{A}$—the extent to which it fails to admit a joint measurement—we define the \emph{incompatibility robustness} \cite{heinosaari15} as
\begin{equation}\label{eq:inc_rob}
\eta(G)\equiv \max\{\eta\in[0,1]\,|\, \mathcal{A}^{\eta}\text{ is jointly measurable}\},
\end{equation}
where
\begin{equation}
\mathcal{A}^\eta = \{A_v^\eta\,|\,v\in V\}, \qquad
A_v^\eta = \eta A_v + \frac{\tr{A_v}}{d}(1-\eta)\id,
\end{equation}
are noisy (unsharp) versions of the original observables with anti-commutativity graph $G$.  
The associated binary POVM elements (cf. \eqref{eq:sharp}) are
\begin{equation}\label{eq:unsharp_povms}
\M_v^\eta(\pm) = \tfrac12(\id \pm \eta A_v), \qquad \eta \in [0,1].
\end{equation}

By definition, when $\eta \leq \eta(G)$ there exists a POVM $\E$ with outcomes $(a_1,\dots,a_n)\in\{\pm1\}^n$ such that each $\M_v^\eta$ is recovered as a marginal:
\begin{equation}\label{eq:postprocessing}
\M_v^\eta(\pm) = \!\!\sum_{\substack{(a_1,\ldots,a_n):\\a_v=\pm 1}} \E(a_1,\ldots,a_n).
\end{equation}
Throughout, we use standard asymptotic notation to describe scaling bounds on the incompatibility robustness $\eta(G)$. For positive functions $f$ and $g$, we write $g=\mathcal{O}(f)$ if there exists $C>0$ such that $g(x)\le C f(x)$ for all sufficiently large $x$; 
$g=\Omega(f)$ if there exists $c>0$ such that $g(x)\ge c f(x)$ for all sufficiently large $x$; and $g=\Theta(f)$ when both relations hold.

\section{Observables from anti-commutativity graphs}\label{sec:arbitrary_graphs}

We now show that \emph{any} simple connected graph $G = (V,E)$ can be realised as the anti-commutativity graph of a set of binary observables $\mathcal{A}$. Disconnected graphs can be treated component-wise, since different components share no incompatibility.

Our construction is based on monomials of the Majorana operators $\{\Gamma_j\}_{j}$ defined in \eqref{eq:majoranas}. For a given graph $G$, we assign a distinct Majorana operator to each edge, and for each vertex $v\in V$ define an observable
\begin{equation}\label{eq:clifford_monomial_main}
A_{\mathscr{I}(v)} := i^{|\mathscr{I}(v)|/2} \prod_{j \in \mathscr{I}(v)} \Gamma_j ,
\end{equation}
where $\mathscr{I}(v)$ is the set of integer labels corresponding to edges incident to $v$.  
If the degree $\text{deg}(v)$ is odd, we include an additional auxiliary operator (labelled uniquely per vertex) so that $|\mathscr{I}(v)|$ is even, ensuring that \eqref{eq:clifford_monomial_main} is Hermitian and squares to~$\id$. See Fig. \ref{fig:majorana-6} for an example construction from a six-vertex anti-commutativity graph.

In this construction, adjacent vertices $v,v'\in V$ share exactly one operator, therefore $A_{\mathscr{I}(v)}$ and $A_{\mathscr{I}(v')}$ anti-commute; non-adjacent vertices share none and therefore commute. These relations follow from the general anti-commutation rule \eqref{eq:car} between arbitrary degree Majorana monomials \cite{bravyi10}. The above argument leads to the following result.

\begin{proposition}\label{prop:arb_graphs}
For every simple connected graph $G = (V,E)$, there exists a set of binary observables whose anti-commutativity graph is $G$.  
The construction~\eqref{eq:clifford_monomial_main} uses at most $|E| + |V|$ Majorana operators, therefore the observables act on a Hilbert space of dimension at most $2^{\lfloor \frac{|V| + |E|}{2} \rfloor}$.
\end{proposition}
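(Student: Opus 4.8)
The plan is to verify directly that the explicit construction in \eqref{eq:clifford_monomial_main} has all the claimed properties, proceeding in three stages: (i) each $A_{\mathscr{I}(v)}$ is a legitimate binary observable, (ii) its anti-commutation pattern reproduces $G$ exactly, and (iii) the operator count, and hence the dimension bound, follow by counting edges, vertices and auxiliary operators.

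First I would confirm that every $A_{\mathscr{I}(v)}$ is Hermitian and squares to $\id$. Writing $M=\prod_{j\in\mathscr{I}(v)}\Gamma_j$ with $|\mathscr{I}(v)|=2\ell$ even by construction, reversing the product costs $\binom{2\ell}{2}=\ell(2\ell-1)$ transpositions of anti-commuting factors, so $M^\dagger=(-1)^{\ell}M$ (using $\ell(2\ell-1)\equiv\ell \bmod 2$); since $M$ is a product of unitaries it is itself unitary, whence $M^2=(-1)^{\ell}\id$. The prefactor $i^{\ell}$ in \eqref{eq:clifford_monomial_main} then cancels both signs, giving $A_{\mathscr{I}(v)}^\dagger=A_{\mathscr{I}(v)}$ and $A_{\mathscr{I}(v)}^2=\id$. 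This is exactly where it is essential that the auxiliary operators are inserted precisely to make every $|\mathscr{I}(v)|$ even.

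The heart of the argument is the general anti-commutation rule \eqref{eq:car}: for two Majorana monomials with index sets $S,S'$ one has $A_{S}A_{S'}=(-1)^{|S||S'|+|S\cap S'|}A_{S'}A_{S}$. Because all $|\mathscr{I}(v)|$ are even, the term $|S||S'|$ is even and drops out, so two of our observables anti-commute if and only if $|\mathscr{I}(v)\cap\mathscr{I}(v')|$ is odd. It then remains to read this intersection off the graph: since each edge carries a distinct Majorana operator and each auxiliary operator is unique to its vertex, $\mathscr{I}(v)\cap\mathscr{I}(v')$ consists exactly of the Majorana operators assigned to edges joining $v$ and $v'$. As $G$ is simple, adjacent vertices share exactly one such edge (intersection of size $1$, odd, hence anti-commuting) while non-adjacent vertices share none (size $0$, even, hence commuting), so the anti-commutativity graph of the family is precisely $G$.

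Finally I would count operators: one Majorana per edge gives $|E|$, and one auxiliary per odd-degree vertex gives at most $|V|$ more, for a total of at most $|E|+|V|$ (the number of odd-degree vertices is even by the handshake lemma, though only the crude bound is needed here). Since $m$ Majorana generators admit an irreducible representation on $\mathbb{C}^{2^{\lfloor m/2\rfloor}}$, the observables act on a space of dimension at most $2^{\lfloor(|V|+|E|)/2\rfloor}$. The only genuine obstacle is the parity bookkeeping: every sign cancellation in stages (i) and (ii) relies on each $|\mathscr{I}(v)|$ being even, so the role of the auxiliary operators must be tracked carefully. Once that is pinned down, both the Hermiticity/squaring check and the reduction of \eqref{eq:car} to the single condition ``odd shared support'' are routine.
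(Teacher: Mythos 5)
Your proposal is correct and follows essentially the same route as the paper's own proof: the same edge-labelled Majorana construction with auxiliary operators at odd-degree vertices, the same reduction via the general anti-commutation rule \eqref{eq:car} to the condition that the shared index set have odd cardinality, and the same operator count. You simply spell out in more detail the sign bookkeeping behind Hermiticity and the rule \eqref{eq:car}, which the paper delegates to a citation.
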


The detailed proof is given in Appendix~\ref{A:arbitrary_graphs}. The dimension bound typically exceeds the minimal dimension $2^{r/2}$ required to realise $G$, where $r$ is the rank of its adjacency matrix over $\mathbb{F}_2$ (see Appendix~\ref{A:arbitrary_graphs} for details).

\begin{proposition}\label{prop:eta_invariance}
Let $G=(V,E)$ be a simple graph. The incompatibility robustness $\eta(G)$ depends only on the (anti-)commutation relations encoded by $G$, and is invariant under the particular choice of binary observables $\mathcal{A}=\{A_v\,|\,v\in V\}$ realising them.
\end{proposition}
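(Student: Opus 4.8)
The plan is to show that $\eta(G)$ is an invariant of the abstract $*$-algebra determined by the relations encoded in $G$, so that every realisation yields the same value. Throughout I work with the noisy POVMs $\M_v^\eta(\pm)=\tfrac12(\id\pm\eta A_v)$ of \eqref{eq:unsharp_povms}. Note first that whenever a vertex $v$ has a neighbour $w$, conjugation by the self-adjoint unitary $A_w$ sends $A_v$ to $-A_v$, whence $\tr{A_v}=0$ and the trace term in $A_v^\eta$ drops out; isolated vertices commute with everything and are trivially compatible, contributing $\eta=1$. Two elementary invariances will be used repeatedly: joint measurability at a fixed noise level is preserved under unitary conjugation of the whole family, and under any sign flip $A_v\mapsto\epsilon_v A_v$ with $\epsilon_v\in\{\pm1\}$, since the latter merely relabels the outcome $a_v\mapsto\epsilon_v a_v$ in the parent POVM $\E$.

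First I would record a direct-sum (block) lemma. If a realisation splits as $A_v=\bigoplus_i A_v^{(i)}$ along an orthogonal decomposition $\mathcal H=\bigoplus_i\mathcal H_i$, then compressing any parent POVM by the block projections $P_i$ shows that joint measurability of the whole family at level $\eta$ implies it for each block, while the converse follows by taking a block-diagonal parent. Together with monotonicity in $\eta$ (a parent at level $\eta$ can be mixed with the uniform distribution to produce one at any $\eta'\le\eta$), this yields $\eta\big(\bigoplus_i\mathcal A^{(i)}\big)=\min_i\eta(\mathcal A^{(i)})$.

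Next I would identify the relevant algebra. Any family $\mathcal A$ realising $G$ consists of self-adjoint unitaries obeying exactly $A_v^2=\id$ and $A_vA_w=(-1)^{[\{v,w\}\in E]}A_wA_v$, hence is a $*$-representation of the universal unital $*$-algebra $\mathfrak{U}(G)$ generated by these relations. This is a twisted group algebra of $(\mathbb Z/2)^{|V|}$, finite-dimensional and semisimple, so it is a finite-dimensional C$^*$-algebra and every realisation decomposes into irreducibles; by the block lemma $\eta(\mathcal A)$ equals the minimum of the robustnesses of the irreducible components occurring in $\mathcal A$. It therefore suffices to prove that all irreducible representations of $\mathfrak{U}(G)$ share a common robustness $\eta_0$, for then every nonempty realisation has robustness $\eta_0$.

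The crux is this last step, where I expect the main work. The irreducibles of $\mathfrak{U}(G)$ are labelled by the characters of its centre, which is spanned by the monomials $\prod_{v\in S}A_v$ whose support satisfies $\mathbbm{1}_S\in R$, where $R=\ker B$ is the radical of the adjacency form $B$ of $G$ over $\mathbb F_2$. A sign flip $A_v\mapsto\epsilon_vA_v$ multiplies such a monomial by $(-1)^{\langle\epsilon,\,\mathbbm{1}_S\rangle}$, so it acts on the central characters by translation through the restriction of the functional $\langle\epsilon,\cdot\rangle$ to $R$. Since every linear functional on $R$ extends to one on $\mathbb F_2^{|V|}$, this action is transitive; as each sign flip preserves robustness, all irreducibles share a common value $\eta_0$. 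Combining with the block lemma gives $\eta(\mathcal A)=\eta_0$ for every realisation $\mathcal A$ of $G$, with existence of at least one realisation guaranteed by Prop.~\ref{prop:arb_graphs}. The main obstacle is handling the centre correctly---pinning down the irreducibles via the radical of the $\mathbb F_2$ adjacency form and verifying transitivity of the sign-flip action---rather than the routine block and relabelling arguments.
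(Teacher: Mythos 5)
Your proof is correct, and it reaches the same structural fact as the paper's proof but by a genuinely different, more self-contained route. The paper's argument is short because it cites the Samoilenko/Xu structure theorem: any second realisation satisfies $UA'_uU^\dagger=A_u\otimes D_u$ with $\{D_u\}$ commuting self-adjoint unitaries, and then notes that this does not affect joint measurability --- which, once unpacked on the joint eigenspaces of the $D_u$, is precisely your block-decomposition-plus-sign-flip mechanism. You instead derive the needed structure from first principles: the block lemma $\eta\bigl(\bigoplus_i\mathcal A^{(i)}\bigr)=\min_i\eta(\mathcal A^{(i)})$, semisimplicity of the twisted group algebra $\mathfrak{U}(G)$, and the action of sign flips on central characters together show that every realisation is assembled from irreducibles that are all related by a unitary and an outcome relabelling. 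What this buys is independence from the external citation and an explicit, reusable block lemma; what it costs is the representation-theoretic bookkeeping. One step should be made explicit: surjectivity of the restriction map $(\mathbb F_2^{|V|})^*\to R^*$ gives transitivity of the sign-flip action on the set of inequivalent irreducibles only in combination with the facts that the action is free and that the number of simple components equals $|R^*|=2^{|V|-r}$; the latter is exactly the Artin--Wedderburn count recorded in Appendix~\ref{A:arbitrary_graphs}, so it is available, but it is doing real work and should be invoked. Your preliminary observations (tracelessness of $A_v$ for non-isolated $v$, triviality of isolated vertices) also correctly reconcile the general noisy observable of Eq.~\eqref{eq:inc_rob} with the traceless effects in Eq.~\eqref{eq:unsharp_povms}, a point the paper passes over silently.
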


\begin{proof}
    Associate with each graph $G$ a complex algebra $\mathfrak{C}(G)$ generated by elements $\{x_u : u \in V\}$ which satisfy the relations:
\begin{equation*}
  x_u^2 = 1, \quad x_u=x_u^{*},\quad
  x_u x_w = 
  \begin{cases}
    -x_w x_u & \text{if } \{u, w\} \in E, \\
    \;\;x_w x_u & \text{otherwise}\,,
  \end{cases}
\end{equation*}
where $*$ denotes the adjoint. This algebra is known as a \emph{Clifford graph algebra}, or \emph{quasi-Clifford algebra} \cite{khovanova08,cuypers21,hastings21,xu23}. For example, if $G$ is the complete graph $K_n$, we recover the usual Clifford algebra $\mathfrak{C}(K_n)=\mathrm{Cl}_n(\mathbb{C})$. Suppose that $\{A_u\,|\,u\in V\}$ is the \emph{standard} observable representation of $\mathfrak{C}(G)$, as defined in \cite[Def. 3]{xu23}. The explicit construction of this representation is not essential so we do not reproduce it here. Let $\{A'_u\,|\,u\in V\}$ be a second observable representation. Then, there is a unitary $U$ such that $UA'_uU^\dagger=A_u\otimes D_u$, where $\{D_u\,|\,u\in V\}$ is a set of commuting self-adjoint unitaries \cite{samoilenko12} (see also \cite[Thm. 2]{xu23}). Since $\{D_u\,|\,u\in V\}$ are commuting, $\{A_u\,|\,u\in V\}$ is jointly measurable if and only if $\{A_u\otimes D_u\,|\,u\in V\}$ is jointly measurable if and only if $\{A'_u\,|\,u\in V\}$ is jointly measurable. The latter equivalence holds since a unitary transformation on a set of observables does not affect joint measurability. It follows that the incompatibility robustness is invariant under the choice of observables realising the graph.
\end{proof}

A simple consequence of Prop. \ref{prop:eta_invariance} is that we can delete any vertex that contains a twin without affecting the incompatibility robustness. Vertices $v,v'\in V$ of a graph $G$ are called \emph{twins} if they are non-adjacent and share the same neighbours, i.e., $v\nsim v'$ and $\{u,v\}\in E$ if and only if $\{u,v'\}\in E$ for all $u\in V$.

\begin{corollary}\label{lem:twin}
Let $G$ be the anti-commutativity graph of a set of binary observables $\mathcal{A}$. If $v,v'$ are twins in $G$, then
\begin{equation}
\eta(G\setminus\{v\}) \;=\; \eta(G).
\end{equation}
\end{corollary}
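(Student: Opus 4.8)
The plan is to exploit the realisation-independence guaranteed by Prop.~\ref{prop:eta_invariance} in order to pick a particularly convenient set of observables in which the two twins are represented by the \emph{same} operator. Concretely, I would first fix any set $\{A_u : u \in V\setminus\{v\}\}$ realising the induced subgraph $G\setminus\{v\}$ on a Hilbert space $\mathcal{H}\cong\mathbb{C}^d$, and extend it to a realisation of $G$ by \emph{defining} $A_v := A_{v'}$. The twin hypothesis is exactly what makes this legitimate: since $v\nsim v'$ the two copies commute, matching the non-edge $\{v,v'\}$; and for every other vertex $u$ the operator $A_v=A_{v'}$ anti-commutes with $A_u$ precisely when $u\sim v'$, which by the twin condition is exactly when $u\sim v$. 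Hence $\{A_u : u\in V\}$ has anti-commutativity graph $G$, and by Prop.~\ref{prop:eta_invariance} this realisation computes $\eta(G)$.

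With this choice the noisy effects coincide: because $\tr{A_v}=\tr{A_{v'}}$ and $A_v=A_{v'}$, one has $A_v^\eta=A_{v'}^\eta$ and hence $\M_v^\eta=\M_{v'}^\eta$ for every $\eta$. The remaining task is to show that the full family $\mathcal{A}^\eta=\{A_u^\eta : u\in V\}$ is jointly measurable if and only if the reduced family $\{A_u^\eta : u\in V\setminus\{v\}\}$ is. One direction is immediate: marginalising any parent POVM of the full family over the outcome $a_v$ yields a parent of the reduced family.

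For the converse, suppose $\E'$ is a parent POVM for the reduced family, with outcomes indexed by $(a_u)_{u\in V\setminus\{v\}}$ (note that $v'$ is retained, so $a_{v'}$ is among them). I would build a parent for the full family by \emph{duplicating} the twin's outcome, setting
\[
\E\big((a_u)_{u\in V}\big) \;=\; \delta_{a_v,\,a_{v'}}\,\E'\big((a_u)_{u\in V\setminus\{v\}}\big).
\]
This is manifestly a valid POVM (positive, and normalised once the Kronecker delta is summed over $a_v$); its marginal onto any $u\neq v$ reproduces $\M_u^\eta$, while its marginal onto $v$ reproduces $\M_{v'}^\eta=\M_v^\eta$, so the post-processing relation \eqref{eq:postprocessing} holds at every vertex. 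Thus joint measurability of the two families is equivalent for all $\eta$, and taking the maximum over $\eta$ in \eqref{eq:inc_rob} yields $\eta(G)=\eta(G\setminus\{v\})$.

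The one conceptually delicate point is the converse direction: one cannot in general clone a quantum measurement, so it is essential that in the chosen realisation the twin observables are \emph{literally equal}, which reduces the task to duplicating a classical outcome label rather than a quantum effect. This reduction is precisely what Prop.~\ref{prop:eta_invariance} buys us, and I expect no further obstacle once it is in place; the rest is the routine verification that $\E$ is a POVM with the correct marginals.
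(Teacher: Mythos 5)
Your proposal is correct and follows essentially the same route as the paper: invoke Prop.~\ref{prop:eta_invariance} to choose a realisation with $A_v=A_{v'}$ (legitimate precisely because of the twin condition), then observe that adding or removing a literally duplicated observable does not change joint measurability. The paper compresses the last step into one sentence, whereas you spell out the parent-POVM construction with the Kronecker delta; this is a faithful elaboration of the same argument, not a different one.
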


\begin{proof}
By Prop.~\ref{prop:eta_invariance}, $\eta(G)$ depends only the anti-commutation
relations and not on a particular realisation of the observables. Since $u,v$ are twins, they anti-commute with exactly the same observables and commute with each other. We may therefore choose a realisation with
$A_v = A_{v'}$, keeping all other observables unchanged. This preserves all
(anti-)commutation relations. Removing the duplicated observable does not
affect joint measurability, hence $\eta(G\setminus\{v\}) = \eta(G)$.
\end{proof}

For example, the graph $G$ in Fig.~\ref{fig:majorana-6} contains two pairs of twin vertices, such as the pair corresponding to $\{i\Gamma_2\Gamma_8,\,i\Gamma_3\Gamma_9\}$. 
Removing one vertex from each twin pair yields the path graph $P_4$, hence $\eta(G)=\eta(P_4)$.

\section{Bounds on the incompatibility robustness}\label{sec:lovasz}

We now bound the incompatibility robustness \eqref{eq:inc_rob} by graph parameters of the anti-commutativtiy graph. We begin with an upper bound in terms of the Lovász number $\vartheta(G)$. This graph invariant was originally introduced as an efficiently computable bound on the Shannon capacity of a graph \cite{lovasz79} and is a semidefinite programming relaxation of the independence number (see Appendix \ref{A:prelims}).

\begin{theorem}\label{thm:lovasz}
Let $G=(V,E)$ be the anti-commutativity graph of $\mathcal{A}=\{A_v : v\in V\}$, and let $\vartheta(G)$ be its Lovász number. Then
\begin{equation}\label{eq:lovasz_bound}
\eta(G) \;\le\; \sqrt{\vartheta(G)/|V|}\,.
\end{equation}
\end{theorem}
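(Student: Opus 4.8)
The plan is to turn joint measurability into a single operator identity, reduce the robustness to the largest eigenvalue of a signed sum $\sum_v a_v A_v$, and control that eigenvalue by a Lovász-type uncertainty relation. First I fix any $\eta\le\eta(G)$. By definition there is a parent POVM $\E$ with outcomes $a=(a_1,\dots,a_n)\in\{\pm1\}^n$ reproducing the noisy marginals \eqref{eq:postprocessing}, and from \eqref{eq:unsharp_povms} the difference of the two marginals of $\M_v^\eta$ gives the operator identity $\eta A_v=\M_v^\eta(+)-\M_v^\eta(-)=\sum_a a_v\,\E(a)$. The argument below uses only $A_v^2=\id$ and the (anti-)commutation pattern, so it is manifestly independent of the realisation, consistent with Proposition~\ref{prop:eta_invariance}.

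Next I extract a scalar relation. Multiplying the identity by $A_v$, taking the normalised trace $\tfrac1d\tr{\cdot}$, and using $A_v^2=\id$ gives $\eta=\sum_a a_v\,\tfrac1d\tr{A_v\E(a)}$. Summing over $v\in V$ and writing $X_a:=\sum_v a_v A_v$,
\begin{equation}
\eta\,|V|=\sum_a \tfrac1d\tr{X_a\,\E(a)}\le \max_a \lambda_{\max}(X_a),
\end{equation}
where the inequality uses $\E(a)\succeq 0$ (so that $\tfrac1d\tr{X_a\E(a)}\le\lambda_{\max}(X_a)\,q(a)$ with $q(a):=\tfrac1d\tr{\E(a)}$) together with the fact that $q$ is a probability distribution, since $\sum_a\E(a)=\id$ forces $\sum_a q(a)=1$.

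It remains to bound $\lambda_{\max}(X_a)$, which is the crux. As $X_a$ is Hermitian, $\lambda_{\max}(X_a)=\max_\psi\sum_v a_v\langle A_v\rangle_\psi$, and Cauchy--Schwarz with $a_v^2=1$ gives $\lambda_{\max}(X_a)\le\sqrt{|V|}\,\max_\psi\big(\sum_v\langle A_v\rangle_\psi^2\big)^{1/2}$. Here I invoke the Lovász-number uncertainty relation $\sum_v\langle A_v\rangle_\psi^2\le\vartheta(G)$, valid for every state $\psi$ and any realisation of the anti-commutativity graph $G$ \cite{gois23,xu23,moran24}. This yields $\lambda_{\max}(X_a)\le\sqrt{|V|\,\vartheta(G)}$ uniformly in $a$, hence $\eta\,|V|\le\sqrt{|V|\,\vartheta(G)}$, i.e.\ $\eta\le\sqrt{\vartheta(G)/|V|}$; letting $\eta\to\eta(G)$ proves \eqref{eq:lovasz_bound}.

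The main obstacle is the uncertainty relation $\sum_v\langle A_v\rangle_\psi^2\le\vartheta(G)$, the only ingredient that genuinely exploits the graph. For a self-contained proof I would pass to the max-formulation $\vartheta(G)=\max\sum_v\langle d,w_v\rangle^2$ over a unit handle $d$ and unit vectors $\{w_v\}$ forming an orthonormal representation of $\overline{G}$ (orthogonal on the edges of $G$), and construct such vectors from the data $A_v\ket{\psi}$ --- using that anti-commutation of adjacent observables forces $\mathrm{Re}\,\langle A_v\psi|A_{v'}\psi\rangle=0$; certifying that the constructed vectors satisfy the required orthogonality, and pairing them with the right handle, is the delicate step. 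Two sanity checks confirm both the direction and the tightness asserted after the theorem: for $K_n$ (pairwise anti-commuting observables) one has $\vartheta(K_n)=1$, recovering the familiar relation $\sum_v\langle A_v\rangle_\psi^2\le1$ and the tight value $\eta(K_n)=1/\sqrt{n}$, while the edgeless graph gives $\vartheta=|V|$ and $\eta=1$, as it must for mutually commuting observables.
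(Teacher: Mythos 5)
Your proof is correct, and its skeleton coincides with the paper's: first the marginal-trace argument giving $\eta\,|V|\le\max_{a}\bigl\|\sum_v a_v A_v\bigr\|_\infty$ (this is exactly Prop.~\ref{Aprop:inc_rob_bound} in Appendix~\ref{A:theorem_proof}, and your derivation via $\eta A_v=\sum_a a_v\,\E(a)$ is the same computation), then a Cauchy--Schwarz relaxation of the signing to real coefficients, then the Lov\'asz bound. The only genuine divergence is in how the key inequality is certified. Your uncertainty relation $\sum_v\langle A_v\rangle_\psi^2\le\vartheta(G)$ is precisely the statement $\Psi(G)\le\vartheta(G)$ in disguise (the paper notes $\Psi(G)=r^2=\sup_\psi\sum_v\langle A_v\rangle_\psi^2$), so you are invoking the same lemma; but you justify it via the handle/orthonormal-representation formulation of $\vartheta$ --- taking $w_v=A_v\ket\psi$ as real unit vectors, with anti-commutation forcing $\mathrm{Re}\,\ip{A_v\psi}{A_{v'}\psi}=0$ on edges and $\psi$ itself as the handle --- whereas the paper (following Hastings--O'Donnell) works with the primal SDP formulation, building the feasible matrix $X_{vv'}=a_va_{v'}\widetilde{\mathbb{E}}[A_vA_{v'}]$ from a degree-2 pseudoexpectation and taking its real part. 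Your route is more geometric and arguably more elementary (no pseudoexpectation machinery), at the cost of the realification bookkeeping you correctly flag as the delicate step; the paper's route generalises more readily to higher-degree sum-of-squares relaxations. Both are complete proofs of the same lemma, so there is no gap.
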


\begin{proof}
For a graph $G$, let $a:V\rightarrow \{\pm 1\}$ be a function assigning a sign $a_v\in\{\pm 1\}$ to each vertex. One can show (see Prop.~\ref{Aprop:inc_rob_bound} in Appendix~\ref{A:theorem_proof}) that for any set of $|V|$ binary observables with anti-commutativity graph $G$,
\begin{equation}\label{eq:eta_norm_bound}
\eta(G) \;\le\; |V|^{-1} \max_{a:V\rightarrow \{\pm 1\}}\big\| \sum_{v\in V} a_v A_v \big\|_{\infty}\,,
\end{equation}
where $\no{\cdot}_{\infty}$ denotes the operator norm. By relaxing the constraint on the coefficients, we now define $b:V\to\mathbb{R}$ and introduce the quantity defined in \cite{hastings21}:
\begin{align}\label{eq:odonnell}
\Psi(G)\;:&=\;\max_{\;b:V\to\mathbb{R},\;\|b\|_2=1}
\bigl\|\sum_{v\in V} b_v A_v\bigr\|_{\infty}^2\\
&\;=\;\max_{\;a:V\to\mathbb{R},\;\|a\|_2=\sqrt{|V|}}
\bigl\|\sum_{v\in V} \frac{a_v}{\sqrt{|V|}} A_v\bigr\|_{\infty}^2\,,\label{eq:odonnell2}
\end{align}
where, $\no{\cdot}_2$ denotes the $\ell^2$--norm on $\mathbb{R}^{|V|}$. Eq. \eqref{eq:odonnell2} follows from the rescaling $b_v=a_v/\sqrt{|V|}$, which enforces $\no{a}_2 = \sqrt{|V|}$. Since the maximisation in \eqref{eq:odonnell2} extends over a continuous set of real coefficients, whereas \eqref{eq:eta_norm_bound} is restricted to signings, we obtain the inequality $\eta(G)|V|\leq\sqrt{\Psi(G)|V|}$, or equivalently $\eta(G)\leq \sqrt{\Psi(G)/|V|}$. Applying the bound $\Psi(G) \le \vartheta(G)$ established by Hastings and O’Donnell~\cite{hastings21} (see Prop.~\ref{lem:hastings} in Appendix~\ref{A:theorem_proof}) yields the claimed result~\eqref{eq:lovasz_bound}.
\end{proof}

The quantity $\Psi(G)$ in \eqref{eq:odonnell} also appears in recent graph-theoretic bounds on the joint numerical range $J(G)=\{(\langle P_1\rangle_\rho,\dots,\langle P_{n}\rangle_\rho):\rho\in\mathcal S(\mathcal H)\}$ of collections of Pauli strings $\{P_j\,|\,j=1,\ldots,n\}$, where $G$ denotes their anti-commutativity graph \cite{xu23}. Defining $r^2:=\sup_{\rho}\sum_{j}\langle P_i\rangle_\rho^2$ as the squared radius of the smallest Euclidean ball containing the joint numerical range, one finds that $\Psi(G)=r^2$.

The following result shows that Theorem~\ref{thm:lovasz} is tight for complete graphs, i.e., when all observables in $\mathcal{A}$ pairwise anti-commute.
\begin{corollary}\label{cor:complete}
Let $G=K_n$ be the complete graph. Then, $\eta(K_n)=n^{-1/2}$.
\end{corollary}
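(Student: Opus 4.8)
Show $\eta(K_n) = n^{-1/2}$.

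The plan is to prove the two matching bounds. For the upper bound, I would invoke Theorem~\ref{thm:lovasz} directly: for any graph, $\eta(G) \le \sqrt{\vartheta(G)/|V|}$. It is a standard fact that the Lovász number of the complete graph is $\vartheta(K_n) = 1$ (the complete graph has independence number $1$, and $\vartheta$ sandwiches between the independence number and the clique-cover number, both of which equal $1$ here; alternatively one computes the Lovász theta SDP directly). With $|V| = n$, this immediately yields $\eta(K_n) \le \sqrt{1/n} = n^{-1/2}$.

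For the lower bound, I need to exhibit a joint measurement for the noisy observables at $\eta = n^{-1/2}$. Since the robustness is realisation-independent by Prop.~\ref{prop:eta_invariance}, I may pick a convenient concrete realisation of $K_n$: a family $\{A_1,\dots,A_n\}$ of pairwise anti-commuting Hermitian operators with $A_v^2 = \id$, e.g.\ the Majorana generators $\Gamma_1,\dots,\Gamma_n$ of \eqref{eq:majoranas}, which have anti-commutativity graph $K_n$ and traceless effects. The key algebraic input is that for any signing $a : V \to \{\pm 1\}$, the operator $\sum_v a_v A_v$ squares to $\bigl(\sum_v a_v A_v\bigr)^2 = \sum_{v} A_v^2 + \sum_{v \ne w} a_v a_w A_v A_w = n\,\id$, since the cross terms cancel pairwise by anti-commutativity. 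Hence $\bigl\|\sum_v a_v A_v\bigr\|_\infty = \sqrt{n}$ for every signing, so the averaged bound \eqref{eq:eta_norm_bound} is in fact an equality $n^{-1/2}$ for the complete graph; but to get a genuine lower bound I must construct the parent POVM rather than merely bound the norm.

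To build the joint measurement, I would define, for each outcome string $s = (s_1,\dots,s_n) \in \{\pm1\}^n$,
\begin{equation}
\E(s) = \frac{1}{2^n}\Bigl(\id + \eta \sum_{v} s_v A_v\Bigr), \qquad \eta = n^{-1/2}.
\end{equation}
One checks the marginal condition \eqref{eq:postprocessing}: summing over all $s$ with $s_v$ fixed to $+1$ collapses every term except the identity and the $A_v$ term (all other $\sum_s s_w$ vanish over the remaining free signs), reproducing $\tfrac12(\id + \eta A_v) = \M_v^\eta(+)$, and similarly for $-$. Normalisation $\sum_s \E(s) = \id$ follows the same way. The crux is positivity: $\E(s) \ge 0$ requires $\eta \sum_v s_v A_v \ge -\id$, i.e.\ $\eta\,\|\sum_v s_v A_v\|_\infty \le 1$, and by the computation above $\|\sum_v s_v A_v\|_\infty = \sqrt{n}$, so positivity holds precisely at $\eta = 1/\sqrt{n}$. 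This is the main point to verify carefully, since it is exactly the threshold that matches the upper bound. Combining both directions gives $\eta(K_n) = n^{-1/2}$.
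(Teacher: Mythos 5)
Your proof is correct and follows essentially the same route as the paper: the upper bound from $\vartheta(K_n)=1$ via Theorem~\ref{thm:lovasz}, and the lower bound from the parent POVM $\E(a)=2^{-n}\bigl(\id+n^{-1/2}\sum_v a_v A_v\bigr)$. The only difference is that you verify positivity explicitly via $\bigl(\sum_v a_v A_v\bigr)^2=n\,\id$, whereas the paper delegates this to the cited construction in \cite{mcnulty24}; your verification is a welcome self-contained addition.
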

\begin{proof}
Since $\vartheta(K_n)=1$, Thm. \ref{thm:lovasz} gives $\eta(K_n)\leq 1/\sqrt{n}$. A parent POVM achieving this bound was constructed in \cite{mcnulty24}, with effects:
\begin{equation}
\E(a_1\ldots,a_n)=2^{-n}\Big(\id+n^{-\frac{1}{2}}\sum_{v=1}^na_vA_v\Big)\,,
\end{equation}
which are positive semidefinite for all $a_v\in\{\pm 1\}$. Summing over all outcomes except $a_u$ gives $\sum_{v\neq u}\sum_{a_v}\E(a_1,\ldots,a_n)=\M^{\eta}_{u}(a_{u})$, where $\M^\eta_u$ is defined in Eq.~\eqref{eq:unsharp_povms} with $\eta = 1/\sqrt{n}$. Thus the bound in Thm.~\ref{thm:lovasz} is saturated.
\end{proof}

We will see in Sec. \ref{sec:fermionic} that Theorem \ref{thm:lovasz} gives a tight asymptotic bound for merged Johnson graphs.

The bound in Theorem~\ref{thm:lovasz} can sometimes be improved by considering an induced subgraph.  
Given $G = (V,E)$ and $S \subset V$, the \emph{induced subgraph} $G[S]$ has vertex set $S$ and edge set consisting of all edges of $G$ between vertices in $S$.

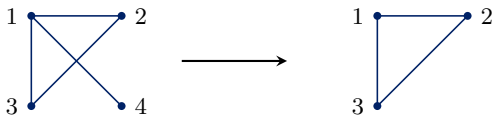
\begin{figure}[!t]
\centering
\begin{tikzpicture}[scale=1, every node/.style={font=\small}]
  \definecolor{darknavy}{RGB}{0,34,102}
  \tikzset{
    vtx/.style={circle,fill=darknavy,inner sep=0pt,minimum size=3pt},
    edg/.style={line width=0.6pt, darknavy}
  }

  \begin{scope}[shift={(0,0)}]
    \coordinate (tl) at (0,1.2);
    \coordinate (tr) at (1.2,1.2);
    \coordinate (br) at (1.2,0);
    \coordinate (bl) at (0,0);
    \draw[edg] (tl)--(tr)--(bl)--cycle;
    \draw[edg] (tl)--(br);
    \node[vtx,label=left:{$1$}]  at (tl) {};
    \node[vtx,label=right:{$2$}] at (tr) {};
    \node[vtx,label=left:{$3$}]  at (bl) {};
    \node[vtx,label=right:{$4$}] at (br) {};
  \end{scope}

  \draw[->,>=stealth,thick] (2.0,0.6) -- (3.4,0.6);

  \begin{scope}[shift={(4.6,0)}]
    \coordinate (tl) at (0,1.2);
    \coordinate (tr) at (1.2,1.2);
    \coordinate (bl) at (0,0);
    \draw[edg] (tl)--(tr)--(bl)--cycle;
    \node[vtx,label=left:{$1$}]  at (tl) {};
    \node[vtx,label=right:{$2$}] at (tr) {};
    \node[vtx,label=left:{$3$}]  at (bl) {};
  \end{scope}
\end{tikzpicture}
\caption{\label{fig:subgraph}
Example illustrating that Prop.~\ref{prop:subgraph} can yield a tighter bound than Thm.~\ref{thm:lovasz} via an induced subgraph. Left: the paw graph $G$ on $V=\{1,2,3,4\}$ has $\vartheta(G)=2$, therefore Thm.~\ref{thm:lovasz} gives $\eta(G)\le 1/\sqrt{2}$.  
Right: the induced subgraph $G[S]$ for $S=\{1,2,3\}$ is $K_3$ with $\vartheta(G[S])=1$. Therefore, Prop.~\ref{prop:subgraph} yields $\eta(G)\le 1/\sqrt{3}$, which is tight.}
\end{figure}

\begin{proposition}\label{prop:subgraph}
Let $G[S]$ be an induced subgraph of $G$ with vertex set $S \subset V$. Then
\begin{equation}\label{eq:subgraph}
\eta(G) \;\le\; \eta(G[S]) \;\le\; \sqrt{\vartheta(G[S])/|S|}.
\end{equation}
\end{proposition}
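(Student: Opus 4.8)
The plan is to establish the two inequalities in \eqref{eq:subgraph} separately. The second inequality, $\eta(G[S]) \le \sqrt{\vartheta(G[S])/|S|}$, is immediate: it is just Theorem~\ref{thm:lovasz} applied to the graph $G[S]$ in place of $G$, since $G[S]$ is itself a simple graph and hence (by Prop.~\ref{prop:arb_graphs}) the anti-commutativity graph of some set of binary observables. So the entire content of the proposition is the \emph{monotonicity} statement $\eta(G) \le \eta(G[S])$: passing to an induced subgraph can only increase the robustness. The intuition is that a subfamily of observables is easier to jointly measure than the full family, so it tolerates more noise.

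First I would appeal to Prop.~\ref{prop:eta_invariance} to fix a convenient realisation: choose any set of binary observables $\mathcal{A}=\{A_v : v\in V\}$ whose anti-commutativity graph is $G$. Then the subset $\mathcal{A}[S]=\{A_v : v\in S\}$ is a set of binary observables acting on the \emph{same} Hilbert space $\mathcal{H}$, and its anti-commutativity graph is exactly $G[S]$, because the (anti-)commutation relations among the $A_v$ for $v\in S$ are inherited unchanged from $G$. This reduces the claim to a statement about one fixed family and one of its subfamilies on a common Hilbert space.

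The key step is the restriction argument for joint measurability. Suppose $\eta \le \eta(G)$, so by \eqref{eq:postprocessing} there is a parent POVM $\E$ with outcomes in $\{\pm 1\}^{V}$ from which every noisy effect $\M_v^\eta(\pm)$, $v\in V$, is recovered as a marginal. I would construct a parent POVM $\E_S$ for the subfamily indexed by $S$ by marginalising $\E$ over all the outcome labels $a_v$ with $v\notin S$: define
\begin{equation}
\E_S(\{a_v\}_{v\in S}) = \sum_{\{a_v\}_{v\notin S}} \E(a_1,\dots,a_n).
\end{equation}
This $\E_S$ is a valid POVM with outcomes in $\{\pm 1\}^{S}$ (positivity and normalisation are preserved under summation), and for each $v\in S$ the corresponding marginal of $\E_S$ still equals $\M_v^\eta(\pm)$, since summing out the extra labels first and then the labels $a_{v'}$ with $v'\in S\setminus\{v\}$ is the same as summing out everything except $a_v$ in the original $\E$. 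Hence $\mathcal{A}[S]^\eta$ is jointly measurable, which gives $\eta \le \eta(G[S])$. Taking the supremum over admissible $\eta$ yields $\eta(G)\le \eta(G[S])$.

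I do not anticipate a serious obstacle here; the main point requiring care is the bookkeeping that the marginal of the marginalised POVM $\E_S$ genuinely reproduces the single-observable effect, i.e.\ that the two stages of summation commute and compose correctly. Everything else is a direct invocation of Prop.~\ref{prop:eta_invariance} (to pick a realisation and to pass to the induced subgraph realisation) and of Theorem~\ref{thm:lovasz} (for the second inequality). The example in Fig.~\ref{fig:subgraph}, where the paw graph has $\vartheta=2$ but contains a triangle $K_3$ with $\vartheta=1$, shows that the induced-subgraph bound can be strictly stronger than the direct application of Theorem~\ref{thm:lovasz}, confirming that the monotonicity inequality does real work.
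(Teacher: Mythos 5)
Your proposal is correct and follows essentially the same route as the paper: the second inequality is Theorem~\ref{thm:lovasz} applied to $G[S]$, and the first is the observation that marginalising a parent POVM over the outcome labels outside $S$ yields a parent POVM for the subfamily, so a subset of observables cannot be more incompatible. The paper states this restriction argument in one line, whereas you spell out the marginalisation and invoke Prop.~\ref{prop:eta_invariance} to fix a realisation; both are sound and there is no substantive difference.
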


\begin{proof}
Removing observables for a set cannot increase incompatibility. Thus, if $\mathcal{A}^\eta$ admits a parent POVM, then any subset $\widetilde{\mathcal{A}}^\eta \subset \mathcal{A}^\eta$ also does.
Hence $\eta(G) \le \eta(G[S])$.  
Applying Theorem~\ref{thm:lovasz} to $G[S]$ yields~\eqref{eq:subgraph}.
\end{proof}

Fig.~\ref{fig:subgraph} illustrates an example where Prop.~\ref{prop:subgraph} gives a strictly stronger bound than Theorem~\ref{thm:lovasz}.

In view of Prop.~\ref{prop:subgraph}, an effective bound can be obtained by choosing an induced subgraph with a large vertex set and small Lovász number.  
A natural choice is a \emph{clique}—a subset of vertices in which every pair is adjacent—since cliques have Lovász number $\vartheta = 1$.  
Let $\omega(G)$ denote the \emph{clique number} of $G$, i.e., the size of its largest clique.  
This leads to the following bound.

\begin{proposition}\label{prop:clique}
Let $\omega(G)$ be the clique number of $G$. Then, 
\begin{equation}\label{eq:clique}
\eta(G)\leq \omega(G)^{-\frac{1}{2}}.
\end{equation}
\end{proposition}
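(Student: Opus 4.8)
The plan is to deduce this directly from Proposition~\ref{prop:subgraph} by choosing the induced subgraph to be a maximum clique. First I would let $S\subset V$ be a clique of maximum size, so that $|S|=\omega(G)$ and every pair of vertices in $S$ is adjacent. By Definition~\ref{def:anti_graph} this adjacency means the corresponding observables pairwise anti-commute, so the induced subgraph $G[S]$ is precisely the complete graph $K_{\omega(G)}$.

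The key step is then to invoke the value of the Lovász number on a complete graph, $\vartheta(K_n)=1$, which is exactly the fact already used in the proof of Corollary~\ref{cor:complete}. Substituting $G[S]=K_{\omega(G)}$ and $\vartheta(G[S])=1$ into the chain of inequalities~\eqref{eq:subgraph} yields
\begin{equation*}
\eta(G)\;\le\;\sqrt{\vartheta(G[S])/|S|}\;=\;\sqrt{1/\omega(G)}\;=\;\omega(G)^{-1/2},
\end{equation*}
which is the claimed bound. Alternatively, one can bypass the Lovász number entirely and combine the monotonicity half of Proposition~\ref{prop:subgraph}, namely $\eta(G)\le\eta(G[S])$, with the exact value $\eta(K_{\omega(G)})=\omega(G)^{-1/2}$ established in Corollary~\ref{cor:complete}.

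I expect no genuine obstacle here: the statement is an immediate specialisation of Proposition~\ref{prop:subgraph}, since a clique has Lovász number $1$ and therefore already realises the best possible subgraph bound for its size. The only point requiring care is the identification $G[S]=K_{\omega(G)}$, which rests on the defining property that a clique induces a complete subgraph together with the adjacency--anti-commutation correspondence of Definition~\ref{def:anti_graph}; no further estimate is needed.
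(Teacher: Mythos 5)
Your proposal is correct and follows essentially the same route as the paper: the paper likewise takes a maximum clique $S$, identifies $G[S]=K_{\omega(G)}$, and combines Prop.~\ref{prop:subgraph} with Cor.~\ref{cor:complete} (equivalently, with $\vartheta(K_{\omega(G)})=1$) to conclude. Your two stated variants are just the two halves of the same inequality chain, so there is nothing substantively different here.
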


\begin{proof}
If $G$ contains a clique of size $\omega(G)$, then there exists an induced subgraph $G[S]$ with $|S| = \omega(G)$ that is complete, i.e., $G[S] = K_{\omega(G)}$.  
By Cor.~\ref{cor:complete}, $\eta(G[S]) = \omega(G)^{-1/2}$.  
Applying Prop.~\ref{prop:subgraph} gives the claimed bound.
\end{proof}

Lower bounds on the incompatibility robustness follow from explicit joint measurements. Suppose the vertices can be partitioned into $a$ commuting subsets. Measuring one subset chosen uniformly at random and assigning uniformly random outcomes to all others yields a parent POVM of $\mathcal{A}^\eta$ with $\eta = 1/a$~\cite{oszmaniec17}. Since the minimum number of commuting subsets that partition the vertices equals the \emph{chromatic number} $\chi(G)$, then $\eta(G) \ge 1/\chi(G)$.

\begin{figure}[!t]\label{cycle_5}
\centering
\begin{tikzpicture}[scale=1, every node/.style={font=\small}]
  \definecolor{myred}{RGB}{204,0,0}
  \definecolor{myblue}{RGB}{0,90,200}
  \definecolor{mygreen}{RGB}{0,140,70}
  \definecolor{myyellow}{RGB}{220,170,0}
  \definecolor{mypink}{RGB}{225,0,130}

  \tikzset{
    v/.style   ={circle,draw=black,fill=black,inner sep=0pt,minimum size=3pt},
    e/.style   ={line width=0.6pt,draw=black},
    dot/.style ={circle,inner sep=0pt,minimum size=4pt,draw=none},
  }

  \def\r{1.25}
  \foreach \i/\ang in {1/90, 2/18, 3/-54, 4/-126, 5/162}{
    \coordinate (L\i) at ({\r*cos(\ang)},{\r*sin(\ang)});
    \coordinate (R\i) at ({\r*cos(\ang)+4.2},{\r*sin(\ang)}); 
  }

  \foreach \i/\j in {1/2,2/3,3/4,4/5,5/1}{
    \draw[e] (L\i)--(L\j);
  }
  \node[dot,fill=myred]   at (L1) {};
  \node[dot,fill=myblue]  at (L2) {};
  \node[dot,fill=mygreen] at (L3) {};
  \node[dot,fill=myred]   at (L4) {};
  \node[dot,fill=myblue]  at (L5) {};
  \node[below=8pt] at ($(L3)!0.5!(L4)$) {$\mathbf{3\!:\!1}$ colouring ($\eta=1/3$)};

  \foreach \i/\j in {1/2,2/3,3/4,4/5,5/1}{
    \draw[e] (R\i)--(R\j);
  }
  \def\off{0.14}
  \foreach \i/\ang/\cA/\cB in {
    1/90/myred/mygreen,
    2/18/myblue/myyellow,
    3/-54/mygreen/mypink,
    4/-126/myyellow/myred,
    5/162/mypink/myblue}{
    \path (R\i) coordinate (C);
    \coordinate (tA) at ({\off*cos(\ang+90)},{\off*sin(\ang+90)});
    \coordinate (tB) at ({\off*cos(\ang-90)},{\off*sin(\ang-90)});
    \node[dot,fill=\cA] at ($(C)+(tA)$) {};
    \node[dot,fill=\cB] at ($(C)+(tB)$) {};
  }
  \node[below=8pt] at ($(R3)!0.5!(R4)$) {$\mathbf{5\!:\!2}$ colouring ($\eta=2/5$)};


\end{tikzpicture}
\caption{\label{fig:cycle_5}
Examples of $a\!:\!b$--colourings for the cycle graph $C_5$.  
Left: a $3\!:\!1$ colouring partitions the vertices into three commuting sets (red, blue, green).  
Right: a fractional $5\!:\!2$ colouring assigns two of five colours (red, blue, green, yellow, pink) to each vertex, with each colour defining a commuting set.
A parent POVM implemented by measuring one of the $a$ commuting classes at random jointly measures $\{\M_v^\eta\,|v\in V\}$ with $\eta\in\{1/3,2/5\}$.}
\end{figure}
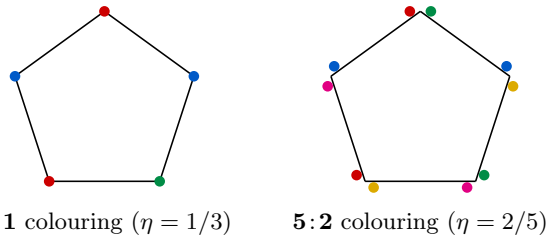

This bound can be improved using \emph{fractional colourings}, in which each vertex is assigned $b$ colours from a palette of $a$, such that the colours assigned to adjacent vertices are distinct. Such an $a\!:\!b$--colouring allows each observable to appear in $b$ of the $a$ commuting sets. By constructing a joint measurement that randomly selects one of the $a$ commuting subsets and measures it, we obtain a parent POVM for $\mathcal{A}^\eta$ with $\eta = b/a$. The optimal ratio is the \emph{fractional chromatic number} $\chi_f(G)$, defined as $\chi_f(G) = \inf_{a,b} a/b$, which gives the following bound.

\begin{proposition}\label{prop:frac}
Let $\chi_f(G)$ be the fractional chromatic number of $G$. Then,
\begin{equation}\label{eq:frac}
\eta(G)\geq 1/\chi_f(G)\,.
\end{equation}
\end{proposition}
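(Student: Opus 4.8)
The plan is to establish the lower bound constructively: for any fractional colouring of $G$ I will build an explicit parent POVM that jointly measures $\mathcal{A}^\eta$ at the noise level set by the colouring ratio, and then optimise over colourings. First I would fix an $a\!:\!b$--colouring, encoded as an assignment $v \mapsto S(v) \subseteq [a]$ with $|S(v)| = b$ for every $v$, subject to $S(u) \cap S(w) = \emptyset$ whenever $\{u,w\} \in E$. For each colour $c \in [a]$ set $V_c = \{v \in V : c \in S(v)\}$. The disjointness condition on adjacent vertices forces $V_c$ to be an independent set of $G$, so the observables $\{A_v : v \in V_c\}$ pairwise commute and therefore admit a joint sharp measurement; let $\{\Pi_c(\mathbf{a}_c)\}$ be the corresponding joint PVM, indexed by an outcome string $\mathbf{a}_c \in \{\pm 1\}^{|V_c|}$, whose marginal on any single $v \in V_c$ returns $\M_v(\pm) = \tfrac12(\id \pm A_v)$.

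Next I would assemble the parent POVM on the outcome set $\{\pm 1\}^n$ as a uniform mixture over colours: draw $c \in [a]$ uniformly, measure the class $V_c$ sharply, and assign independent uniformly random outcomes to every vertex outside $V_c$. Explicitly,
\begin{equation*}
\E(a_1,\dots,a_n) = \frac{1}{a}\sum_{c \in [a]} 2^{-(n-|V_c|)}\,\Pi_c\bigl((a_v)_{v\in V_c}\bigr),
\end{equation*}
which is manifestly positive semidefinite and normalised, since each $\Pi_c$ sums to $\id$ and the random-outcome weights sum to one.

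The crux is the single-vertex marginal. Summing $\E$ over all outcomes except $a_u$, every colour class containing $u$ contributes $\tfrac1a\,\M_u(a_u)$, while every class not containing $u$ contributes the trivial effect $\tfrac{1}{2a}\id$. Because $u$ lies in exactly $|S(u)| = b$ of the $a$ classes, the marginal collapses to $\tfrac{1}{2a}\bigl(a\,\id + b\,a_u A_u\bigr) = \tfrac12\bigl(\id + \tfrac{b}{a}\,a_u A_u\bigr) = \M_u^{b/a}(a_u)$, so $\mathcal{A}^{b/a}$ is jointly measurable and $\eta(G) \ge b/a$. Taking the supremum over admissible colourings and invoking $\chi_f(G) = \inf_{a,b} a/b$ yields $\eta(G) \ge 1/\chi_f(G)$.

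The main obstacle is not a hard estimate but the bookkeeping that pins the noise parameter to the colouring ratio: one must check that the ``randomise when uncoloured'' rule produces exactly the depolarising weight $b/a$, which hinges on the counting identity that each vertex receives precisely $b$ of the $a$ colours. A secondary point to handle with care is the existence and marginal structure of the joint PVM $\Pi_c$ for each commuting class, together with the clean passage from the per-colouring bound to the optimal fractional value $\chi_f(G)$.
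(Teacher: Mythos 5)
Your proposal is correct and takes essentially the same route as the paper, which proves this bound in the paragraph preceding the proposition: select one of the $a$ commuting colour classes of an $a\!:\!b$--colouring uniformly at random, measure it sharply, randomise the remaining outcomes, and use the fact that each vertex lies in exactly $b$ classes to obtain a parent POVM at sharpness $\eta=b/a$, then optimise over colourings. You simply make the parent POVM $\E$ and the marginal bookkeeping explicit, and both the positivity/normalisation check and the computation $\tfrac{1}{2a}\bigl(a\,\id+b\,a_uA_u\bigr)=\M_u^{b/a}(a_u)$ are correct.
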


Fig.~\ref{fig:cycle_5} illustrates that a fractional colouring can yield a stronger lower bound on~$\eta(G)$ than an ordinary colouring. Note that for vertex-transitive graphs, the fractional chromatic number is $\chi_f(G) = |V| / \alpha(G)$, where $\alpha(G)$ is the independence number.

The chromatic number has been used previously to reduce the measurement cost of estimation schemes in quantum computing algorithms \cite{jena19,yen20,verteletskyi20,gokhale19,bonet20,izmaylov19,clinton24}. A measurement strategy based on fractional colourings was recently applied to classical shadow estimation schemes \cite{king25}. From the perspective of joint measurability, however, such strategies are typically suboptimal and do not yield tight lower bounds. In general, a more refined parent is required to jointly measure a set of observables in a way that approaches the sharpness~$\eta(G)$ (cf.~Sec.~\ref{sec:fermionic}).

\section{Line graphs}\label{sec:lines}

The \emph{line graph} of a (root) graph $G$ is the graph $L(G)$ whose vertex set is the edge set of $G$, and where two vertices of $L(G)$ are adjacent if and only if the corresponding edges of $G$ are incident. For example, the line graph of a complete graph $K_n$ is the Johnson graph $J(n,2)$, as illustrated in Fig. \ref{fig:line}. Line graphs of cycles and paths are cycles and paths, respectively, while the line graphs of a hypercube $Q_d$ and complete bipartite graph $K_{m,n}$ are illustrated in Figs. \ref{fig:LQ3} and \ref{fig:rook}.

\begin{figure}[t!]
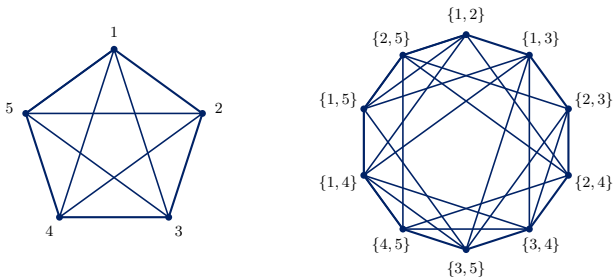

  \kfiveLine
  \caption{The complete graph $K_5$ (left) and its line graph $L(K_5)$ (right) which is the Johnson graph $J(5,2)$. Each edge $\{u,v\}$ of $K_5$ corresponds to a vertex $\{u,v\}$ of $L(K_5)$; two such vertices are adjacent in $L(K_5)$ if and only if the corresponding edges in $K_5$ are incident. }
  \label{fig:line}
\end{figure}

Line graphs $L(G)$ naturally express the anti-commutation relations between quadratic Majorana observables. 
Here, the root graph $G$ serves as the \emph{coupling graph}, whose edges represent pairwise couplings between Majorana operators defined in~\eqref{eq:majoranas}.

\begin{definition}
A (Majorana) coupling graph $G=(V,E)$ consists of a vertex set $V$ labelling Majorana operators $\{\Gamma_v\,|\,v\in V\}$, and an edge set $E$ labelling quadratic observables $\{i\Gamma_u\Gamma_v\,|\,\{u,v\}\in E\}$.
\end{definition}

The line graph $L( G)$ of the coupling graph $G$ is then the anti-commutativity graph of the quadratic observables indexed by $E$: two observables $i\Gamma_u\Gamma_v$ and $i\Gamma_{u'}\Gamma_{v'}$ anti-commute if and only if $\{u,v\}$ and $\{u',v'\}$ share a common vertex in $G$. 

For example, the complete coupling graph $G=K_n$ represents the set of quadratic observables $\{\,i\Gamma_u\Gamma_v : 1\leq u<v\leq n\,\}$ indexed by all $\binom{n}{2}$ edges of $K_n$. The anti-commutativity graph is then the line graph $L(K_n)$, which is the Johnson graph $J(n,2)$ (see Fig.~\ref{fig:line}). Similarly, if the coupling graph is a path, $G=P_n$, the corresponding $n-1$ quadratic observables are $\{i\Gamma_j\Gamma_{j+1} : 1\leq j\leq n-1\}$. Since each observable anti-commutes only with its immediate neighbours, the anti-commutativity graph is again a path, namely $L(P_n)=P_{n-1}$.

\begin{proposition}\label{prop:quadratic-line}
Every anti-commutativity graph that is a line graph can be realised by a family of quadratic Majorana observables. Conversely, every family of quadratic Majorana observables has a line graph as its anti-commutativity graph, provided no repeated observables appear.
\end{proposition}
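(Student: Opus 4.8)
The plan is to derive both implications from a single computation: the general anti-commutation rule \eqref{eq:car} for Majorana monomials states that two monomials anti-commute if and only if they share an odd number of Majorana operators. Specialised to \emph{quadratic} monomials $i\Gamma_u\Gamma_v$, which involve exactly two (necessarily distinct) operators, an odd overlap can only mean a single shared index, since two quadratic monomials sharing both indices are equal up to sign. Hence $i\Gamma_u\Gamma_v$ and $i\Gamma_{u'}\Gamma_{v'}$ anti-commute precisely when $\{u,v\}$ and $\{u',v'\}$ meet in exactly one vertex—which is exactly the adjacency rule of a line graph. Both directions are then a translation of this fact, and the forward direction merely formalises the coupling-graph construction already described above the statement.

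For the forward direction, suppose the anti-commutativity graph $H$ is a line graph, so $H=L(G)$ for some simple root graph $G=(V,E)$. I would assign a distinct Majorana operator $\Gamma_v$ to each $v\in V$ and, for every edge $e=\{u,v\}\in E$, define $A_e:=i\Gamma_u\Gamma_v$. A one-line check using \eqref{eq:majoranas} shows each $A_e$ is Hermitian with $A_e^2=\id$, so it is a legitimate binary observable. By the specialisation above, $A_e$ and $A_{e'}$ anti-commute iff the edges $e,e'$ are incident in $G$; since the vertices of $L(G)$ are the edges of $G$ and are adjacent exactly when incident, the anti-commutativity graph of $\{A_e\,|\,e\in E\}$ is $L(G)=H$.

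For the converse, start from an arbitrary family of quadratic Majorana observables. Each has the form $i\Gamma_u\Gamma_v$ with $u\neq v$ and, because $i\Gamma_u\Gamma_v=-i\Gamma_v\Gamma_u$, determines an unordered pair $\{u,v\}$; a global sign merely relabels the two outcomes and leaves all (anti-)commutation relations intact, so I may adopt the convention $u<v$. Collecting the Majorana labels that occur as a vertex set $V$ and the associated pairs as an edge set $E$ yields a coupling graph $G=(V,E)$, and the observables of the family correspond bijectively to the edges of $G$, i.e.\ to the vertices of $L(G)$. The same anti-commutation criterion then identifies the anti-commutativity graph of the family with $L(G)$, which is a line graph, as claimed.

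The only genuine subtlety—and the step I would treat most carefully—is the bookkeeping behind the ``no repeated observables'' hypothesis in the converse. It is needed both to rule out multi-edges, so that $G$ is \emph{simple} and $L(G)$ unambiguous, and to exclude the degenerate case in which $i\Gamma_u\Gamma_v$ and its negative both appear: these two commute yet share the pair $\{u,v\}$, so their simultaneous inclusion would create non-adjacent ``twin'' vertices not captured by the ordinary simple-graph line-graph construction. Everything else reduces to the elementary observation that two distinct quadratic monomials overlap in either zero or one index, so the parity criterion collapses to the incidence criterion of a line graph.
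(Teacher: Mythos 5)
Your proof is correct and follows essentially the same route as the paper's: both directions reduce to the coupling-graph correspondence between edges of a root graph $G$ and quadratic monomials $i\Gamma_u\Gamma_v$, together with the observation that the odd-overlap anti-commutation rule \eqref{eq:car} specialises, for quadratics, to sharing exactly one index, i.e.\ edge incidence. Your treatment is somewhat more explicit than the paper's (in particular the careful handling of signs and of what ``no repeated observables'' must exclude), but the underlying argument is identical.
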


\begin{proof}
    Let $L(G)$ be a line graph. Its vertices correspond to edges of $G$, hence to quadratic observables $i\Gamma_u\Gamma_v$. Adjacency in $L(G)$ encodes whether two edges share a vertex in $G$, which is equivalent to whether the corresponding quadratic observables anti-commute. Conversely, any set of quadratic Majorana observables (containing no repetitions) defines a coupling graph $G$ whose edges label them, and the anti-commutativity graph is then $L(G)$.
\end{proof}

\subsection{Bounds from the skew-energy}

For a line graph $L(G)$, incompatibility robustness admits bounds in terms of the energy and skew--energy of the graph $G=(V,E)$ and its orientations. 
An \emph{oriented graph} $G^\sigma$ is obtained from $G$ by an orientation $\sigma$ that assigns to each edge $\{u,v\}\in E$ a direction $u\to v$ or $v\to u$. 

The \emph{adjacency matrix} $A(G)$ of $G$ is a $|V|\times|V|$ matrix with entries $A_{uv}=1$ if $\{u,v\}\in E$, and $A_{uv}=0$ otherwise. The \emph{energy} of $G$ is the sum of the absolute values of the eigenvalues of $A(G)$~\cite{gutman01}. The \emph{skew--adjacency matrix} $S(G^\sigma)$ of an orientation $G^\sigma$ is the $|V|\times|V|$ skew--symmetric matrix defined by
\begin{equation}\label{eq:skew-adjacency}
S_{uv} =
\begin{cases}
1 & \text{if } u\to v,\\
-1 & \text{if } v\to u,\\
0 & \text{otherwise}.
\end{cases}
\end{equation}
The \emph{skew--energy} of $G^\sigma$ is the sum of the absolute values of the eigenvalues of $S(G^\sigma)$~\cite{adiga10}.

\begin{definition}\label{def:energies}
Let $G=(V,E)$ be a graph and let $G^\sigma$ be an orientation.
We write $\lambda_j(A)$ and $\lambda_j(S)$ as the eigenvalues of the adjacency matrix $A(G)$ and the skew--adjacency matrix $S(G^\sigma$), respectively. 
The \emph{energy} of $G$ and the \emph{skew--energy} of $G^\sigma$ are defined as
\begin{equation*}
  \mathcal{E}(G) \;=\; \sum_{j}\bigl|\lambda_j(A)\bigr|,
  \qquad
  \mathcal{E}_s(G^\sigma) \;=\; \sum_{j} \bigl|\lambda_j(S)\bigr|\,.
\end{equation*}
The \emph{maximum skew--energy} of $G$ is
\begin{equation}\label{eq:max_skew}
    \mathcal{E}_s^{\max}(G) \;=\; \max_{\sigma} \mathcal{E}_s(G^\sigma)\,,
\end{equation}
where the maximum is taken over all orientations.
\end{definition}

These quantities have been studied extensively in the literature (see, e.g., the surveys \cite{gutman01,li13}). We now bound the incompatibility robustness of a line graph $L(G)$ using the maximal skew-energy of the graph $G$. Before stating the result, recall that a graph $G$ is \emph{edge-transitive} if all edges are equivalent under its automorphism group, i.e., for any two edges $e,e'\in E(G)$ there exists an automorphism mapping $e$ to $e'$ (see Appendix \ref{A:prelims} for further details). 

\begin{theorem}\label{thm:spectra}
Let $L(G)$ be the line graph of a graph $G=(V,E)$ with maximum skew-energy $\mathcal{E}_s^{\max}(G)$. Then
\begin{equation}\label{eq:skew_bound_eigs}
   \eta(L(G))\;\le\; \frac{1}{2|E|}\mathcal{E}_s^{\max}(G)\,.
\end{equation}
If $G$ is edge-transitive then equality holds.
\end{theorem}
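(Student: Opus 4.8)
The plan is to prove \eqref{eq:skew_bound_eigs} directly from the operator-norm bound of Prop.~\ref{Aprop:inc_rob_bound}, and then, for the edge-transitive case, to match it from below with a symmetry-reduced parent POVM. For the upper bound I would apply Prop.~\ref{Aprop:inc_rob_bound} to $L(G)$, using $|V(L(G))|=|E|$, to obtain $\eta(L(G))\le |E|^{-1}\max_{a}\bigl\|\sum_{e\in E}a_eA_e\bigr\|_\infty$, where $a\colon E\to\{\pm1\}$ and $A_e=i\Gamma_u\Gamma_v$ for $e=\{u,v\}$. Fixing a reference orientation, the signed sum equals $\tfrac{i}{2}\sum_{u,v}S_{uv}\Gamma_u\Gamma_v$ with $S=S(G^\sigma)$ the skew-adjacency matrix of the orientation $\sigma$ that $a$ encodes; as $a$ runs over signings, $\sigma$ runs over all orientations of $G$. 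The key step is to diagonalise: since $S$ is real and skew-symmetric, an orthogonal $O$ brings it to Youla block form with $2\times2$ blocks $\mu_k\big(\begin{smallmatrix}0&1\\-1&0\end{smallmatrix}\big)$, and the induced real rotation $\tilde\Gamma_a=\sum_bO_{ab}\Gamma_b$ preserves the anticommutation relations \eqref{eq:majoranas}. In the rotated frame the operator becomes $\sum_k\mu_kB_k$ with $B_k:=i\tilde\Gamma_{2k-1}\tilde\Gamma_{2k}$ mutually commuting and $B_k^2=\id$, so its norm is $\sum_k\mu_k=\tfrac12\sum_j|\lambda_j(S)|=\tfrac12\mathcal{E}_s(G^\sigma)$. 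Maximising over $a$ replaces $\mathcal{E}_s(G^\sigma)$ by $\mathcal{E}_s^{\max}(G)$ and yields \eqref{eq:skew_bound_eigs}.

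For the equality when $G$ is edge-transitive, the plan is to exhibit a parent POVM for $\mathcal{A}^\eta$ at $\eta=\mathcal{E}_s^{\max}(G)/(2|E|)$. Edge-transitivity of $G$ makes $L(G)$ vertex-transitive, so $\mathrm{Aut}(G)$ acts transitively on $\{A_e\}$; by Prop.~\ref{prop:eta_invariance} I may fix the standard Majorana realisation and twirl any parent POVM over this group without changing its sharpness. I would first record that the diagonal ansatz $2^{-|E|}(\id+\eta\sum_ea_eA_e)$ alone certifies only $\eta\ge 2/\mathcal{E}_s^{\max}(G)$, which meets the bound precisely when all $\mu_k$ coincide (as for $K_3$, recovering Cor.~\ref{cor:complete}) but is strictly short otherwise --- for instance for even cycles, where $\mathcal{E}_s^{\max}(C_{2m})/(4m)$ strictly exceeds the colouring value $1/\chi_f(C_{2m})=1/2$. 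Correlation terms are therefore indispensable, and I would construct them from the optimal orientation $\sigma^{*}$: take the covariant POVM whose effects form the $\mathrm{Aut}(G)$-orbit of the fermionic Gaussian (free-fermion ground) state of $\sum_ea_e^{*}A_e$, post-processed into a sign vector $\va$.

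The main obstacle is verifying positivity of these effects exactly at the claimed $\eta$. By Schur's lemma the twirl collapses both the normalisation and the marginal constraints to a single scalar, so the only remaining issue is that the symmetrised effect $\id+\eta\sum_ea_eA_e+(\text{correlation corrections})$ stays positive semidefinite; edge-transitivity forces all per-edge sharpnesses to agree and turns the governing second-moment estimate into a saturated Cauchy--Schwarz, which is what pins $\eta$ to $\mathcal{E}_s^{\max}(G)/(2|E|)$ rather than a strictly smaller value. Concretely this positivity reduces to a statement about the spectrum $\{\mu_k\}$ of $S(G^{\sigma^{*}})$, and I expect the bulk of the work to lie in controlling the least eigenvalue of the corrected effect operator in terms of the $\mu_k$. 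Establishing this saturation is the crux; combining $\eta(L(G))\le\mathcal{E}_s^{\max}(G)/(2|E|)$ with the matching lower bound then gives equality.
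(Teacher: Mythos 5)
Your upper bound argument is correct and is essentially the paper's proof verbatim: apply the operator-norm bound of Prop.~\ref{Aprop:inc_rob_bound} to $L(G)$, identify the signing $a$ with an orientation $\sigma$ via the skew-adjacency matrix $S(G^\sigma)$, bring $S$ to its real canonical block form by an orthogonal rotation of the Majorana frame, and read off $\bigl\|\sum_e a_eA_e\bigr\|_\infty=\tfrac12\sum_j|\lambda_j(S)|=\tfrac12\mathcal{E}_s(G^\sigma)$ from the mutually commuting blocks. Nothing to add there.

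The equality claim is where your proposal both departs from the paper and falls short. The paper does \emph{not} construct an explicit parent POVM. Instead it proves (Lemma~\ref{lem:uniform_rigid}) that the quadratic Majorana assemblage is \emph{uniform} (edge-transitivity of $G$, lifted to Gaussian unitaries plus local Majorana flips, acts transitively on outcomes) and \emph{rigid} (the commutant of the stabiliser of an outcome $(+|e)$ is exactly $\mathrm{span}\{\id,A_e\}$ --- a concrete monomial-by-monomial computation using the flip subgroup), and then invokes the result of Nguyen et al.\ that for uniform and rigid assemblages the bound of Prop.~\ref{Aprop:inc_rob_bound} is tight. Your sketch gestures at the uniformity half (Schur's lemma collapsing the marginal constraints) but never identifies rigidity, which is the ingredient that forces the twirled optimal dual/primal object into the two-parameter family $\alpha\id+\beta A_e$ and thereby closes the gap between the norm bound and the achievable sharpness. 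Your proposed covariant POVM built from the Gaussian ground state of $\sum_e a_e^{*}A_e$ is plausible as a candidate, but you explicitly defer the positivity of its effects at $\eta=\mathcal{E}_s^{\max}(G)/(2|E|)$ to future work, and that verification \emph{is} the entire content of the equality statement; as written, the lower bound is unproven. Moreover, the mechanism you offer for why it should work --- ``a saturated Cauchy--Schwarz'' forced by edge-transitivity --- cannot be right in general: even cycles are edge-transitive yet the nonzero eigenvalues of the optimal $S(C_{2m}^{\sigma})$ are far from equal and the Cauchy--Schwarz bound $\eta\le\sqrt{\Delta}\,n/(2m)=1/\sqrt2$ of Prop.~\ref{prop:optimal_inc} is strictly unsaturated ($\eta(C_{2m})\to 2/\pi$). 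A small further slip: your diagonal ansatz $2^{-|E|}(\id+\eta\sum_ea_eA_e)$ meets the upper bound exactly when $S$ has a single nonzero conjugate eigenvalue pair (rank two), not ``when all $\mu_k$ coincide''; $K_3$ happens to satisfy the former.
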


The proof, given in Appendix~\ref{sec:A:theorem_proof2}, is based on the bound~\eqref{eq:eta_norm_bound} together with a correspondence between the spectral properties of $\sum_j a_j A_j$ (where $A_j$ are the binary observables) and the skew--adjacency matrix of an oriented graph $G^\sigma$. To prove that equality holds when $G$ is edge-transitive, we show that the associated measurements are \emph{uniformly and rigidly symmetric}
(see Lem.~\ref{lem:uniform_rigid} in Appendix~\ref{A:sym}). These symmetries reduce the dual semidefinite program to a two-parameter ansatz, under which
\eqref{eq:eta_norm_bound} becomes tight (cf.~\cite{nguyen20}).

The maximisation over all orientations $\sigma$ in~\eqref{eq:max_skew} can be simplified by the method of \emph{switching} \cite{adiga10,yaoping11}, which is the process of reversing all edges incident to a chosen vertex. Two orientations $\sigma$ and $\sigma'$ are \emph{switching equivalent} if $G^{\sigma'}$ can be obtained from $G^{\sigma}$ by a sequence of switchings. The equivalence classes under this relation are called \emph{switching classes}.

\begin{proposition}
The maximisation over all orientations in \eqref{eq:skew_bound_eigs} reduces to a maximisation over switching classes. For a connected graph with $n$ vertices and $m$ edges, there are $2^{m-n+1}$ switching classes.
\end{proposition}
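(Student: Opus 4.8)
The plan is to treat the two assertions separately: first that the skew-energy is constant on each switching class, and then a count of the classes by linear algebra over $\mathbb{F}_2$.

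For the first assertion, I would observe that switching at a vertex $v$ acts on the skew-adjacency matrix by conjugation with the diagonal signature matrix $D_v=\mathrm{diag}(d_1,\dots,d_n)$ having $d_v=-1$ and $d_w=+1$ for $w\neq v$. Indeed, reversing every edge incident to $v$ flips the sign of exactly those entries $S_{uw}$ with precisely one of $u,w$ equal to $v$, and this is precisely the effect of $S\mapsto D_vSD_v$, since $(D_vSD_v)_{uw}=d_u d_w S_{uw}$ equals $-S_{uw}$ when exactly one index is $v$ and equals $S_{uw}$ otherwise. As $D_v$ is a symmetric orthogonal involution, $D_vSD_v$ is similar to $S$ and hence has the same spectrum; therefore $\mathcal{E}_s(G^{\sigma'})=\mathcal{E}_s(G^{\sigma})$ whenever $\sigma'$ arises from $\sigma$ by a single switch. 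Iterating, the skew-energy is an invariant of the entire switching class, so the maximum over orientations in~\eqref{eq:skew_bound_eigs} is attained by taking one representative per class.

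For the counting, I would fix a reference orientation and identify the set of all $2^m$ orientations with the group $\mathbb{F}_2^E$ by recording, for each edge, whether its direction agrees with the reference. In this encoding, switching at $v$ adds to the current vector the indicator $\chi_v\in\mathbb{F}_2^E$ of the edges incident to $v$, which is the row of the vertex--edge incidence matrix $B\in\mathbb{F}_2^{V\times E}$ labelled by $v$. Hence the switching orbit of any orientation is its coset of the subgroup $W\le\mathbb{F}_2^E$ spanned by the rows of $B$, and the number of switching classes equals the index $[\mathbb{F}_2^E:W]=2^{m-\dim_{\mathbb{F}_2}W}=2^{m-\mathrm{rank}_{\mathbb{F}_2}B}$.

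It then remains to compute $\mathrm{rank}_{\mathbb{F}_2}B$, which is the only genuinely graph-theoretic input. A vector $y\in\mathbb{F}_2^V$ lies in the left kernel of $B$ iff $y_u=y_w$ for every edge $\{u,w\}$, i.e.\ iff $y$ is constant on each connected component; for a connected graph this forces $y\in\{\mathbf{0},\mathbf{1}\}$, so the left kernel is one-dimensional and $\mathrm{rank}_{\mathbb{F}_2}B=n-1$. This yields $2^{m-(n-1)}=2^{m-n+1}$ switching classes. I expect the main obstacle to be the clean verification of this rank statement; the remaining pieces---identifying switching with coset translation and the conjugation argument for spectral invariance---are routine bookkeeping.
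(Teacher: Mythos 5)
Your proof is correct. The paper itself disposes of both claims by citation: spectral invariance under switching is quoted from Lemma~2.8 of the reference on skew-spectra, and the count $2^{m-n+1}$ from Prop.~3.1 of a paper on switching classes. You instead prove both from scratch, and your arguments are the standard ones behind those citations. The conjugation argument $S\mapsto D_vSD_v$ is in fact already spelled out in the paper's Appendix~\ref{A:prelims} immediately after Definition~\ref{def:switching} (there for a general switching function $\theta$ rather than a single-vertex switch, but your iteration of single switches generates exactly the same equivalence, since switching by $\theta$ reverses an edge iff exactly one endpoint has $\theta=-1$). Your counting argument --- identifying orientations with $\mathbb{F}_2^E$ relative to a reference, switching orbits with cosets of the row space of the incidence matrix, and computing $\mathrm{rank}_{\mathbb{F}_2}B=n-1$ via the left kernel for a connected graph --- is exactly the content of the cited counting result. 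The only thing your version buys is self-containment; the only thing it costs is length. One small point worth making explicit if you write this up: the orbit of an orientation under arbitrary \emph{sequences} of switches is the full coset of the subgroup $W$ generated by the $\chi_v$, because every element of $W$ is a sum of generators and each such sum is realised by some sequence --- you state this but it is the one place where "routine bookkeeping" does a little work.
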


\begin{proof}
 Two oriented graphs $G^\sigma$ and $G^{\sigma'}$ of $G$ that are switching equivalent have the same skew-spectrum \cite[Lemma 2.8]{yaoping11}, therefore the maximisation reduces to switching classes. The total number of switching classes is given in \cite[Prop. 3.1]{naserasr15}.
\end{proof}

If the maximum skew-energy is not easily calculated, a lower bound on $\eta(L(G))$ may be obtained from the energy of $G$. For a bipartite graph $G$---whose vertex set can be partitioned into two disjoint subsets with edges only between them---it is known that there exists an orientation $\sigma$ such that the skew-energy $\mathcal{E}_s(G^{\sigma})$ equals the energy $\mathcal{E}(G)$ of the underlying graph $G$ \cite{shader09}. This observation is used to prove the following result.

\begin{proposition}
    Let $L(G)$ be the line graph of a bipartite and edge-transitive graph $G=(V,E)$ with energy $\mathcal{E}(G)$. Then,
    \begin{equation}
        \eta(L(G))\geq \frac{1}{2|E|}\mathcal{E}(G)\,.
    \end{equation}
\end{proposition}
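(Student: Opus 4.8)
The plan is to chain together the equality case of Theorem~\ref{thm:spectra} with the bipartite skew-energy identity cited just above, so that the two hypotheses on $G$ play complementary roles: edge-transitivity converts the upper bound of Theorem~\ref{thm:spectra} into an exact formula, while bipartiteness supplies a lower bound on the maximum skew-energy $\mathcal{E}_s^{\max}(G)$.

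First I would invoke Theorem~\ref{thm:spectra}. Since $G$ is edge-transitive, that theorem asserts equality rather than merely an inequality, so
\begin{equation*}
\eta(L(G)) \;=\; \frac{1}{2|E|}\,\mathcal{E}_s^{\max}(G)\,.
\end{equation*}
This reduces the claim to showing $\mathcal{E}_s^{\max}(G) \ge \mathcal{E}(G)$. For this step I would use the bipartite structure of $G$: by the result of \cite{shader09} quoted before the statement, a bipartite graph admits an orientation $\sigma_0$ whose skew-energy equals the energy of the underlying graph, i.e.\ $\mathcal{E}_s(G^{\sigma_0}) = \mathcal{E}(G)$. Since $\mathcal{E}_s^{\max}(G)$ in~\eqref{eq:max_skew} is defined as a maximum over \emph{all} orientations, this particular orientation immediately gives $\mathcal{E}_s^{\max}(G) \ge \mathcal{E}_s(G^{\sigma_0}) = \mathcal{E}(G)$. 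Substituting into the displayed equality yields $\eta(L(G)) \ge \frac{1}{2|E|}\mathcal{E}(G)$, as claimed.

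I expect essentially no computational obstacle here, since the substantive work is front-loaded into the two ingredients: the equality case of Theorem~\ref{thm:spectra} (whose proof rests on the uniform and rigid symmetry of the associated measurements) and the Shader--So construction of an energy-realising orientation for bipartite graphs. The only point requiring care is to confirm that both hypotheses are genuinely used. Edge-transitivity is what upgrades the inequality in~\eqref{eq:skew_bound_eigs} to an equality, so that a lower bound on $\mathcal{E}_s^{\max}(G)$ transfers to a lower bound on $\eta(L(G))$; without it one would retain only the upper bound of Theorem~\ref{thm:spectra} and could draw no lower-bound conclusion. Bipartiteness, in turn, is what guarantees that some orientation attains $\mathcal{E}_s(G^{\sigma_0}) = \mathcal{E}(G)$, an identification that may fail for every orientation of a non-bipartite graph. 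I would finish by remarking that, since for bipartite graphs the energy is in fact the largest skew-energy achievable, the bound is saturated, so the inequality is really an equality $\eta(L(G)) = \frac{1}{2|E|}\mathcal{E}(G)$ under these hypotheses.
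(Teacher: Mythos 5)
Your core argument is correct and is essentially identical to the paper's proof: edge-transitivity upgrades Theorem~\ref{thm:spectra} to the equality $\eta(L(G))=\mathcal{E}_s^{\max}(G)/(2|E|)$, and the Shader--So orientation for bipartite graphs gives $\mathcal{E}_s^{\max}(G)\ge\mathcal{E}(G)$, which together yield the claim.

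However, your closing remark is false and should be deleted. You assert that for bipartite graphs the energy is the \emph{largest} achievable skew-energy, so that the inequality is actually an equality. The cited result only says that \emph{some} orientation attains $\mathcal{E}_s(G^{\sigma_0})=\mathcal{E}(G)$; other orientations can do strictly better. The even cycle $C_4$ is a bipartite, edge-transitive counterexample: its energy is $\mathcal{E}(C_4)=4$, giving the lower bound $\tfrac{1}{2|E|}\mathcal{E}(C_4)=\tfrac12$, whereas the two switching classes of orientations have skew-energies $4\cot(\pi/4)=4$ and $4\csc(\pi/4)=4\sqrt{2}$, so $\mathcal{E}_s^{\max}(C_4)=4\sqrt{2}$ and $\eta(C_4)=\tfrac{1}{\sqrt 2}>\tfrac12$ (consistent with Cor.~\ref{prop:eta-cycle}). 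The proposition is genuinely a one-sided bound.
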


 \begin{proof}
A graph $G$ is bipartite if and only if there is an orientation $\sigma$ such that $\mathcal{E}_s(G^{\sigma})=\mathcal{E}(G)$ \cite{shader09}. If $G$ is edge-transitive, Thm. \ref{thm:spectra} gives the equality $\eta(L(G))=\mathcal{E}_s^{\max}(G)/2|E|$. Since $\mathcal{E}_s^{\max}(G)\geq \mathcal{E}_s(G^{\sigma})=\mathcal{E}(G)$, we get the claimed bound.
 \end{proof}

\subsection{Examples}

We now apply known results on the energy and skew-energy of graphs to obtain simple formulas for the incompatibility robustness of observables whose anti-commutativity graphs are line graphs. Our results cover cycles, Johnson graphs, hypercube line graphs, rook's graphs and paths.

\subsubsection{Cycles}\label{sec:cycle}

A cycle $C_n$---which is a line graph of itself---has $n$ vertices arranged in a single closed loop. Cycles are edge-transitive, vertex-transitive, and 2-regular; moreover $C_n$ is bipartite when $n$ is even. A family of observables whose anti-commutativity graph is $C_n$ is $\{ i\,\Gamma_j\Gamma_{j+1} \;|\; j=1,\ldots,n \}$, where indices are taken modulo $n$. A simple formula for their incompatibility is given by the following result, and is graphically illustrated in Fig. \ref{fig:eta-side-by-side}.

\begin{corollary}\label{prop:eta-cycle}
For cycle graphs $C_n$, incompatibility robustness is given by
 \begin{equation}\label{eq:eta-cycle}
        \eta(C_n)= \begin{cases}
 \frac{1}{n}\,\cot(\pi/2n), & n \text{ odd},\\
\frac{2}{n}\,\csc(\pi/n), & n \text{ even.}
\end{cases}
    \end{equation}
As $n\rightarrow\infty$, $\eta(C_n)\rightarrow\frac{2}{\pi}$.
\end{corollary}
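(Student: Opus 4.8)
The plan is to apply Theorem~\ref{thm:spectra} directly, exploiting that the cycle is its own line graph. Since $C_n=L(C_n)$ with root graph $C_n$ (which has $|E|=n$ edges), and $C_n$ is edge-transitive, the theorem holds with equality and gives
\begin{equation*}
\eta(C_n)=\frac{1}{2n}\,\mathcal{E}_s^{\max}(C_n).
\end{equation*}
The whole problem thus reduces to computing the maximum skew-energy of the cycle. Here the switching-class reduction established above is decisive: a connected graph on $n$ vertices with $m=n$ edges has $2^{m-n+1}=2$ switching classes, so the maximum over all orientations collapses to a comparison of just two representatives. These two classes are distinguished by the product of the edge signs taken around the cycle, an invariant preserved by switching (which reverses two incident edges at once).

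First I would diagonalise the two skew-adjacency matrices explicitly. For the consistently oriented cycle $1\to 2\to\cdots\to n\to 1$ the matrix $S$ is skew-circulant; the ansatz $v_j=z^j$ satisfies the eigenvalue recurrence $(Sv)_j=v_{j+1}-v_{j-1}$ with periodic boundary condition $z^n=1$, yielding eigenvalues $2\ii\sin(2\pi k/n)$, $k=0,\dots,n-1$. Reversing a single edge lands in the other switching class and imposes instead the anti-periodic condition $z^n=-1$, giving eigenvalues $2\ii\sin\!\bigl(\pi(2k+1)/n\bigr)$. The two skew-energies are therefore
\begin{align*}
\mathcal{E}_s^{\mathrm{per}}&=2\sum_{k=0}^{n-1}\bigl|\sin(2\pi k/n)\bigr|,\\
\mathcal{E}_s^{\mathrm{anti}}&=2\sum_{k=0}^{n-1}\bigl|\sin(\pi(2k+1)/n)\bigr|.
\end{align*}

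Next I would evaluate these sums in closed form. Folding each sum in half using the reflection $\sin(2\pi(n-k)/n)=-\sin(2\pi k/n)$ (and its anti-periodic analogue $k\mapsto n-1-k$) reduces them to sums of positive sines over an arithmetic progression, which collapse via the standard identity $\sum_{k=1}^{n-1}\sin(k\pi/n)=\cot(\pi/2n)$ and its relatives. For odd $n$ both classes evaluate to $2\cot(\pi/2n)$, so the maximum is unambiguous and $\eta(C_n)=\tfrac1n\cot(\pi/2n)$. For even $n$ one finds $\mathcal{E}_s^{\mathrm{per}}=4\cot(\pi/n)$ and $\mathcal{E}_s^{\mathrm{anti}}=4\csc(\pi/n)$; since $\csc\theta>\cot\theta$ on $(0,\pi)$, the anti-periodic class strictly dominates, giving $\mathcal{E}_s^{\max}(C_n)=4\csc(\pi/n)$ and hence $\eta(C_n)=\tfrac2n\csc(\pi/n)$. (As a sanity check, $C_3=K_3$ returns $\eta=1/\sqrt3$, matching Cor.~\ref{cor:complete}.)

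The main obstacle I anticipate is the parity-dependent comparison in this last step: one must correctly identify which switching class realises the maximum, and the answer differs between odd $n$ (a tie between the two classes) and even $n$ (where the anti-periodic orientation wins). Once the closed forms are in hand, the asymptotics are routine: the small-angle expansions $\cot(\pi/2n)\sim 2n/\pi$ and $\csc(\pi/n)\sim n/\pi$ both yield $\eta(C_n)\to 2/\pi$, consistently across parities.
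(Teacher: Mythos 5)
Your proposal is correct and follows the paper's proof exactly: reduce to $\eta(C_n)=\tfrac{1}{2n}\mathcal{E}_s^{\max}(C_n)$ via Theorem~\ref{thm:spectra} and edge-transitivity, then compare the skew-energies of the two switching classes of oriented cycles. The only difference is that you derive the skew-energy values ($2\cot(\pi/2n)$ for both classes when $n$ is odd; $4\cot(\pi/n)$ and $4\csc(\pi/n)$ when $n$ is even) by explicit diagonalisation with periodic and anti-periodic boundary conditions, whereas the paper cites them from the literature; your computation checks out.
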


The proof in Appendix \ref{A:line_examples} relies on Theorem \ref{thm:spectra} and calculations of the skew-energies for the two distinct switching classes of oriented cycles, given in \cite{adiga10}. 

When comparing \eqref{eq:eta-cycle} with the bound in Theorem \ref{thm:lovasz}, we see that the Lovász number does not give a tight bound for cycle graphs.

\subsubsection{Johnson graphs}\label{sec:johnson}

The Johnson graph $J(n,2)$ is the line graph of the complete graph $K_n$ (as illustrated in Fig. \ref{fig:line}) which is both vertex- and edge-transitive. The vertex set of $J(n,2)$ is the collection of all $2$-element subsets of $[n]$, denoted by $V=\binom{[n]}{2}$, with each element $e=\{u,v\}\in V$ corresponding to an edge of $K_n$. Two vertices $e,e'\in V$ are adjacent in $J(n,2)$ whenever the corresponding edges of $K_n$ share a common endpoint, i.e., $|e\cap e'|=1$. The Johnson graph, which is edge-transitive, vertex-transitive, and $2(n-2)$-regular, therefore represents the set of quadratic observables $\{\,i\Gamma_u\Gamma_v : 1\leq u<v\leq n\,\}$.

\begin{corollary}\label{cor:johnson}
For Johnson graphs $J(n,2)$, the incompatibility robustness of the corresponding set of $\binom{n}{2}$ observables satisfies
\begin{equation}\label{eq:eta-triangular}
\eta\big(J(n,2)\big) \;\le\; \frac{1}{\sqrt{\,n-1\,}}\,,
\end{equation}
with equality if and only if a skew--conference matrix of order $n$ exists.
\end{corollary}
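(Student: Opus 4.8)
The plan is to apply Theorem~\ref{thm:spectra} to the case $G = K_n$, whose line graph is $J(n,2)$, and then identify when the maximum skew-energy bound is saturated. Since $K_n$ is edge-transitive, Theorem~\ref{thm:spectra} gives the \emph{equality}
\begin{equation*}
\eta\big(J(n,2)\big) = \frac{1}{2|E|}\,\mathcal{E}_s^{\max}(K_n),
\end{equation*}
where $|E| = \binom{n}{2}$. Thus the corollary reduces to a purely spectral question: bounding $\mathcal{E}_s^{\max}(K_n)$, the maximum skew-energy over all orientations of the complete graph, and determining exactly when the bound is attained.

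First I would recall the standard upper bound on skew-energy. For any orientation $G^\sigma$ of a $d$-regular graph on $N$ vertices, the skew-adjacency matrix $S(G^\sigma)$ satisfies $\mathrm{tr}(S^\dagger S) = \mathrm{tr}(-S^2) = 2|E| = Nd$, so by Cauchy--Schwarz applied to the $N$ eigenvalue-moduli $|\lambda_j(S)|$ one gets $\mathcal{E}_s(G^\sigma) = \sum_j |\lambda_j(S)| \le \sqrt{N \cdot \mathrm{tr}(-S^2)} = N\sqrt{d}$. For $G = K_n$ we have $N = n$ and $d = n-1$, giving $\mathcal{E}_s^{\max}(K_n) \le n\sqrt{n-1}$. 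Substituting into the equality above yields
\begin{equation*}
\eta\big(J(n,2)\big) = \frac{1}{2\binom{n}{2}}\,\mathcal{E}_s^{\max}(K_n) \le \frac{n\sqrt{n-1}}{n(n-1)} = \frac{1}{\sqrt{n-1}},
\end{equation*}
which is the claimed inequality~\eqref{eq:eta-triangular}.

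Next I would characterise equality. The Cauchy--Schwarz step is tight if and only if all $|\lambda_j(S)|$ are equal, i.e.\ every eigenvalue of $S(K_n^\sigma)$ has modulus $\sqrt{n-1}$; equivalently $-S^2 = (n-1)\id$, i.e.\ $S^\top S = (n-1)\id$. Since $S$ is an $n\times n$ skew-symmetric matrix with $0$ diagonal and $\pm1$ off-diagonal entries (a full orientation of $K_n$), the condition $S^\top S = (n-1)\id$ says precisely that $S$ is a \emph{skew-conference matrix} of order $n$: a $\pm1$-off-diagonal, zero-diagonal, skew-symmetric matrix $C$ with $C^\top C = (n-1)\id$. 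Thus equality in~\eqref{eq:eta-triangular} holds if and only if some orientation of $K_n$ produces such a matrix, i.e.\ if and only if a skew-conference matrix of order $n$ exists.

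The main obstacle is the equality direction, and specifically making the identification with skew-conference matrices airtight. One must verify that the zero-diagonal/$\pm1$ structure of a complete-graph orientation is exactly the defining off-diagonal pattern of a conference matrix (every off-diagonal entry is nonzero because $K_n$ has all edges present), and that skew-symmetry of $S$ matches the ``skew'' in skew-conference. A secondary subtlety is that the Cauchy--Schwarz bound $\mathcal{E}_s(G^\sigma)\le N\sqrt d$ is a bound for a \emph{fixed} orientation; I would note that it bounds $\mathcal{E}_s^{\max}$ as well since it is uniform over $\sigma$, and that saturation requires only the \emph{existence} of one optimal orientation, which is the content of the skew-conference condition. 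I would also remark that skew-conference matrices of order $n$ require $n \equiv 2 \pmod 4$ (so the corollary's equality is necessarily vacuous otherwise), connecting to the combinatorial existence results cited later via Prop.~\ref{prop:weighing}.
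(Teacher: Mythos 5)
Your proposal is correct and follows essentially the same route as the paper: edge-transitivity of $K_n$ turns Theorem~\ref{thm:spectra} into an equality, and the problem reduces to bounding $\mathcal{E}_s^{\max}(K_n)$. The only difference is that the paper cites an external result (Ito, Prop.~4.1 in \cite{ito17}) stating that tournament matrices satisfy $\sum_j|\lambda_j(T)|\le n\sqrt{n-1}$ with equality iff $T$ is a skew-conference matrix, whereas you re-derive this inline via Cauchy--Schwarz and the tightness condition $S^\top S=(n-1)\id$ --- which is precisely the argument the paper itself deploys in the more general setting of Props.~\ref{prop:optimal_inc} and~\ref{prop:weighing}, so your version is self-contained at no real cost. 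One minor slip in your closing aside: skew-conference matrices of order $n$ require $n\equiv 0\pmod 4$ (for $n>2$), not $n\equiv 2\pmod 4$; the latter is the condition for \emph{symmetric} conference matrices, and the paper's remark after Cor.~\ref{cor:johnson} about equivalence with skew-Hadamard matrices makes the $n\equiv 0\pmod 4$ condition explicit.
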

A \emph{conference matrix} of order $n$ is an $n\times n$ matrix $C$ with zero diagonal and $\pm 1$ off--diagonal entries such that $CC^\top = (n-1)\id$ \cite{colbourn10}. Equivalently, all eigenvalues of $C$ have modulus $\sqrt{n-1}$. A \emph{skew--conference matrix} is a conference matrix that is skew--symmetric, hence it is also a tournament matrix. The set of all $n\times n$ tournament matrices is equivalent to the set of all skew-adjacency matrices of orientated complete graphs $K_n^\sigma$.

The existence of skew-conference matrices is equivalent to the existence of skew-Hadamard matrices. Hence, \eqref{eq:eta-triangular} is saturated if and only if an $n\times n$ skew-Hadamard matrix exists, which is conjectured to be the case whenever $n$ is divisible by four \cite{koukouvinos08}.

The proof of Cor. \ref{cor:johnson}, as shown in Appendix \ref{A:line_examples}, follows from a known bound on the maximum skew-energy of tournament matrices \cite{ito17}. While Cor. \ref{cor:johnson} was first proved in \cite{mcnulty24}, the present framework places it in a more general graph-theoretic setting that connects measurement incompatibility to spectral properties of the underlying graph.

\subsubsection{Hypercube line graphs}\label{sec:hypercube}

\begin{figure}[t!]
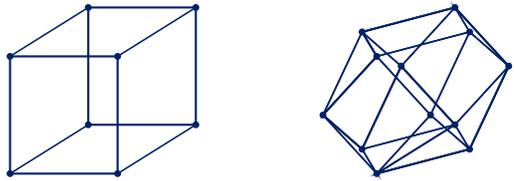

  \centering
  \qthreeLine
  \caption{The 3-cube $Q_3$ (left) and its line graph $L(Q_3)$ (right).}
  \label{fig:LQ3}
\end{figure}

The $d$-dimensional hypercube $Q_d$ is the graph whose vertex set is
$V(Q_d)=\{0,1\}^d$, with two vertices adjacent if and only if they differ in exactly one coordinate. Equivalently, $Q_d$ is the Cartesian product of $d$ copies of $K_2$. It has $|V(Q_d)|=2^d$, $|E(Q_d)|=d2^{d-1}$,
and is $d$-regular, bipartite and edge-transitive. Its line graph, which has the product decomposition $L(Q_d)=\ K_d \,\square\, Q_{d-1}$, has $|V(L(Q_d))|=d\,2^{d-1}$. Since each edge of $Q_d$ is incident to two vertices of degree $d$, each vertex of $L(Q_d)$ has degree $2(d-1)$ and is therefore regular. 

\begin{corollary}\label{cor:eta-hypercube-exact}
Let $d\geq 3$. For the line graph $L(Q_d)$ of the $d$-dimensional hypercube $Q_d$, the incompatibility robustness of the corresponding set of $d2^{d-1}$ observables satisfies
\begin{equation}
\eta\big(L(Q_d)\big)\;=\;\frac{1}{\sqrt{d}}.
\end{equation}
\end{corollary}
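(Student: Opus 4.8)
The plan is to combine the equality case of Theorem~\ref{thm:spectra} with an exact evaluation of the maximum skew-energy of $Q_d$. Since $Q_d$ is edge-transitive, Theorem~\ref{thm:spectra} gives $\eta(L(Q_d)) = \mathcal{E}_s^{\max}(Q_d)/(2|E|)$ with $|E| = d\,2^{d-1}$ and $|V(Q_d)| = 2^d$. Thus the claimed value $\eta(L(Q_d)) = 1/\sqrt d$ is equivalent to the single statement $\mathcal{E}_s^{\max}(Q_d) = 2^d\sqrt d$, and the whole proof reduces to computing this spectral quantity.

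For the upper bound I would use Cauchy--Schwarz. For any orientation $Q_d^\sigma$ with skew-adjacency matrix $S$, the eigenvalues satisfy $\sum_j |\lambda_j(S)|^2 = \tr{S^\top S} = 2|E| = d\,2^d$, since each edge contributes $S_{uv}^2 + S_{vu}^2 = 2$. Hence $\mathcal{E}_s(Q_d^\sigma) = \sum_j|\lambda_j(S)| \le \bigl(|V(Q_d)|\sum_j|\lambda_j(S)|^2\bigr)^{1/2} = (2^d \cdot d\,2^d)^{1/2} = 2^d\sqrt d$, with equality precisely when every eigenvalue of $S$ has modulus $\sqrt d$, i.e.\ when $S^2 = -d\,\id$.

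The decisive step is to exhibit an orientation attaining this bound, for which I would use a Jordan--Wigner-type tensor construction. With $J = \matt{0 & 1 \\ -1 & 0}$ (real skew-symmetric, $J^2 = -\id$) and $Z = \sigma_z$ (real symmetric, $Z^2 = \id$), set
\[
  S \;=\; \sum_{i=1}^d Z^{\otimes(i-1)} \otimes J \otimes \id^{\otimes(d-i)}.
\]
Each summand $S_i$ is real and skew-symmetric; because $J$ flips the $i$-th coordinate while the diagonal factors only attach signs $\pm1$, the support of $S_i$ is exactly the set of direction-$i$ edges of $Q_d$. The summands therefore have disjoint supports, so $S$ has entries in $\{0,\pm1\}$ and is a genuine skew-adjacency matrix of $Q_d$.

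It then remains to verify $S^2 = -d\,\id$. Expanding $S^2 = \sum_i S_i^2 + \sum_{i\neq j} S_i S_j$, each diagonal term gives $S_i^2 = \id^{\otimes(i-1)}\otimes J^2 \otimes \id^{\otimes(d-i)} = -\id$, so $\sum_i S_i^2 = -d\,\id$; and for $i \neq j$ the cross terms cancel in pairs, since the only difference between $S_iS_j$ and $S_jS_i$ is the factor $JZ$ versus $ZJ$ at the lower index, and $JZ = -ZJ$ forces $S_iS_j = -S_jS_i$. Hence all eigenvalues of $S$ equal $\pm i\sqrt d$, giving $\mathcal{E}_s(Q_d^\sigma) = 2^d\sqrt d$ and saturating Cauchy--Schwarz, so $\mathcal{E}_s^{\max}(Q_d) = 2^d\sqrt d$. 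Substituting into Theorem~\ref{thm:spectra} yields $\eta(L(Q_d)) = 2^d\sqrt d/(d\,2^d) = 1/\sqrt d$. The main obstacle is exactly the existence of this maximum-skew-energy orientation: the Cauchy--Schwarz step fixes the target $2^d\sqrt d$, and the tensor ansatz is what makes the anticommutation structure transparent enough to confirm $S^2 = -d\,\id$.
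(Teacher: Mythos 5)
Your proof is correct, and it follows the same skeleton as the paper's: edge-transitivity of $Q_d$ plus Theorem~\ref{thm:spectra} reduce everything to showing $\mathcal{E}_s^{\max}(Q_d)=2^d\sqrt{d}$, the upper bound comes from Cauchy--Schwarz (exactly the argument the paper uses in Prop.~\ref{prop:optimal_inc}), and the whole weight of the proof sits on exhibiting a saturating orientation. Where you genuinely diverge is at that last step: the paper simply cites Tian's construction of an optimal orientation of $Q_d$, whereas you construct one explicitly via the tensor ansatz $S=\sum_{i=1}^{d} Z^{\otimes(i-1)}\otimes J\otimes \id^{\otimes(d-i)}$. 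I checked the details and they hold: each summand is real, skew-symmetric, and supported exactly on the direction-$i$ edges (the diagonal factors $Z$, $\id$ force agreement off coordinate $i$, while $J$ forces disagreement at coordinate $i$), so the summands have disjoint supports and $S$ is a legitimate skew-adjacency matrix of $Q_d$; and $JZ=-ZJ$ makes the cross terms cancel, giving $S^2=-d\,\id$, hence $S^\top S=d\,\id$ and all eigenvalues of modulus $\sqrt{d}$. What your route buys is a self-contained proof that a skew-weighing matrix $W(2^d,d)$ exists (the combinatorial object identified in Prop.~\ref{prop:weighing} and Cor.~\ref{cor:optimal}), it makes the connection to the Clifford-algebra/Jordan--Wigner structure underlying the whole paper explicit, and as a small bonus it works verbatim for $d=1,2$ as well, so the hypothesis $d\geq 3$ (inherited from the cited reference) is not actually needed. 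The paper's version is shorter but opaque at precisely the point where the mathematics happens.
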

The proof follows from a construction in \cite{tian11} of an orientation of $Q_d$ that achieves the optimal skew-energy of any graph (cf. Appendix \ref{A:line_examples}).

\subsubsection{Rook's graphs}\label{sec:rooks}

The line graph of the complete bipartite graph $K_{r,s}$ is the \emph{rook's graph}, whose vertices correspond to the squares of an $r\times s$ chessboard, with edges connecting squares lying in the same row or column (see Fig.~\ref{fig:rook}). Rook's graphs with $rs$ vertices are the Cartesian product of two complete graphs $K_r\square K_s = L(K_{r,s})$. The graph $K_{r,s}$ is regular only if $r=s$, and is edge-transitive but not vertex-transitive unless $r=s$, while its line graph $L(K_{r,s})$ is vertex-transitive and $(r+s-2)$--regular.

\begin{corollary}\label{cor:eta-LKmn}
Let $s\geq r \geq 1$. For the rook's graphs $K_r\square K_s$, the incompatibility robustness of the corresponding set of $rs$ observables satisfies
\begin{equation}
\eta\big(K_r\square K_s\big)\;\le\;\frac{1}{\sqrt{s}},
\end{equation}
with equality if and only if a partial Hadamard matrix of size $r\times s$ exists.
\end{corollary}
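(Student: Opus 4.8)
The plan is to invoke the edge-transitivity of $K_{r,s}$ together with Theorem~\ref{thm:spectra}, which reduces the computation of $\eta$ to a spectral problem for the maximum skew-energy of $K_{r,s}$. Since $K_{r,s}$ is edge-transitive and $L(K_{r,s}) = K_r\square K_s$ with $|E(K_{r,s})| = rs$, Theorem~\ref{thm:spectra} yields the exact identity $\eta(K_r\square K_s) = \mathcal{E}_s^{\max}(K_{r,s})/(2rs)$. It therefore suffices to evaluate $\mathcal{E}_s^{\max}(K_{r,s})$.

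First I would exploit the bipartite block structure of the skew-adjacency matrix. For any orientation $\sigma$ of $K_{r,s}$, ordering the vertices by the two parts gives $S(K_{r,s}^\sigma) = \begin{pmatrix} 0 & M \\ -M^\top & 0 \end{pmatrix}$, where $M$ is an $r\times s$ matrix with entries in $\{+1,-1\}$ recording the orientation of each cross-edge. The nonzero eigenvalues of such a skew-symmetric block matrix are $\pm\ii\,\sigma_k(M)$, where $\sigma_1(M),\dots,\sigma_r(M)$ are the singular values of $M$; hence the skew-energy equals twice the nuclear norm, $\mathcal{E}_s(K_{r,s}^\sigma) = 2\sum_{k} \sigma_k(M) = 2\|M\|_*$.

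Next I would bound the nuclear norm of a sign matrix. Since $M$ has at most $r$ nonzero singular values, Cauchy--Schwarz gives $\|M\|_* \le \sqrt{r}\,\|M\|_F$, and as every entry is $\pm1$ we have $\|M\|_F = \sqrt{rs}$, so $\|M\|_* \le r\sqrt{s}$. Maximising over orientations then gives $\mathcal{E}_s^{\max}(K_{r,s}) \le 2r\sqrt{s}$, and substituting into the identity above yields $\eta(K_r\square K_s) \le r\sqrt{s}/(rs) = 1/\sqrt{s}$, the claimed bound.

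Finally, for the equality condition I would track the Cauchy--Schwarz step. Equality there forces all $r$ singular values to coincide, and $r\sigma^2 = \|M\|_F^2 = rs$ then pins each to $\sigma = \sqrt{s}$, which is equivalent to $MM^\top = s\,\id_r$. A $\pm1$ matrix $M$ with $MM^\top = s\,\id_r$---that is, with $r$ pairwise orthogonal rows of length $\sqrt{s}$---is exactly a partial Hadamard matrix of size $r\times s$, so the bound is saturated precisely when such a matrix exists. The main point to handle carefully is the spectral dictionary in the second step: getting the factor of two right in $\mathcal{E}_s(K_{r,s}^\sigma) = 2\|M\|_*$, and confirming that the Cauchy--Schwarz equality case reproduces the standard row-orthogonality definition of a partial Hadamard matrix rather than a weaker condition.
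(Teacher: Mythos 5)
Your proposal is correct and follows essentially the same route as the paper's proof: edge-transitivity plus Theorem~\ref{thm:spectra} to reduce to the maximum skew-energy, the bipartite block form of the skew-adjacency matrix giving eigenvalues $\pm\ii\,\sigma_j(M)$, the Cauchy--Schwarz bound $\|M\|_*\le\sqrt{r}\,\|M\|_F=r\sqrt{s}$, and the equality analysis yielding $MM^\top=s\,\id_r$, i.e.\ a partial Hadamard matrix. The only cosmetic difference is your use of nuclear-norm language and a one-step Cauchy--Schwarz over all $r$ singular values, where the paper splits the bound via $\ell=\rank C\le r$; both give the same equality condition.
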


The proof is given in Appendix~\ref{A:line_examples}. The robustness can be expressed, via Theorem~\ref{thm:spectra}, as a sum involving the singular values of an $r\times s$ matrix $C\in\{\pm 1\}^{r\times s}$. The bound follows from the Cauchy-Schwarz inequality and is saturated when $CC^\top = s\,\id_r$, that is, when $C$ forms a partial Hadamard matrix~\cite{delauney10}. Such a matrix, whose rows are pairwise orthogonal, reduces to a Hadamard matrix when $r=s$.

In general, an $r\times s$ partial Hadamard matrix can exist only if $r\le s$, since no more than $s$ orthogonal vectors exist in $\mathbb{R}^s$. In addition, the number of columns $s$ must be even for $r\geq 2$, and a multiple of four for $r\geq 3$~\cite{delauney10}. It is conjectured that these conditions are also sufficient for existence.

\begin{figure}[t!]
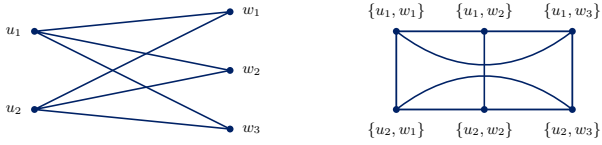

  \centering
  \rooksCurved
  \caption{The graph $K_{2,3}$ (left) and its line graph $L(K_{2,3})=K_2 \square K_3$ (right), the $2\times3$ rook’s graph.}
  \label{fig:rook}
\end{figure}

\subsubsection{Paths}\label{sec:paths}

A path graph $P_n$ consists of $n$ vertices whose edges can be arranged linearly. The line graph of a path is a path with one fewer vertex, i.e., $P_n=L(P_{n+1})$. Unlike the preceding examples, path graphs are not edge-transitive, therefore the equality in Thm. \ref{thm:spectra} no longer applies. However, we can derive upper and lower bounds on the incompatibility robustness that remain asymptotically tight.

\begin{corollary}
    \label{prop:eta_paths}
    Let $P_n$ be a path on $n$ vertices. For $n$ even:
\begin{equation*}\label{eq:path_even}
 \frac{2}{n+2}\,\csc\!\Big(\frac{\pi}{n+2}\Big) \leq \eta(P_n)\leq \frac{1}{n}\,\cot\!\Big(\frac{\pi}{2(n+2)}\Big) - \frac{1}{n}\,,
    \end{equation*}
and for $n$ odd:
  \begin{equation*}\label{eq:path_odd}
 \frac{2}{n+1}\,\csc\!\Big(\frac{\pi}{n+1}\Big) \leq \eta(P_n)\leq \frac{1}{n}\,\csc\!\Big(\frac{\pi}{2(n+2)}\Big) - \frac{1}{n}\,.
    \end{equation*}
Furthermore, $\eta(P_n)\rightarrow\frac{2}{\pi}$ as $n\rightarrow\infty$.
\end{corollary}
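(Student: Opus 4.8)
The plan is to establish the two bounds by independent mechanisms. The upper bound will come from Theorem~\ref{thm:spectra} applied to the root graph, using the identity $P_n=L(P_{n+1})$, while the lower bound will come from the induced-subgraph monotonicity of Prop.~\ref{prop:subgraph}, by exhibiting $P_n$ inside a suitable \emph{even} cycle and invoking the cycle formula of Cor.~\ref{prop:eta-cycle}.

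For the upper bound, I would first note that $P_n=L(P_{n+1})$ and that the root graph $P_{n+1}$ has $|E|=n$ edges, so Theorem~\ref{thm:spectra} yields $\eta(P_n)\le \tfrac{1}{2n}\,\mathcal{E}_s^{\max}(P_{n+1})$. The next step is to evaluate this maximum skew-energy. Since $P_{n+1}$ is a tree with $n+1$ vertices and $n$ edges, the switching-class count $2^{|E|-|V|+1}$ equals $2^{0}=1$, so every orientation shares a single skew-spectrum; and since $P_{n+1}$ is bipartite, there is an orientation whose skew-energy equals the ordinary energy~\cite{shader09}. Hence $\mathcal{E}_s^{\max}(P_{n+1})=\mathcal{E}(P_{n+1})$. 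It then remains to compute the energy of $P_{n+1}$ from its adjacency eigenvalues $2\cos\!\big(k\pi/(n+2)\big)$, $k=1,\dots,n+1$. Summing $\sum_k|\cos(k\pi/(n+2))|$ by splitting the range at $\pi/2$ and performing a standard geometric/telescoping trigonometric summation gives $\mathcal{E}(P_{n+1})=2\big(\cot(\pi/(2(n+2)))-1\big)$ for $n$ even and $2\big(\csc(\pi/(2(n+2)))-1\big)$ for $n$ odd; dividing by $2n$ reproduces exactly the stated upper bounds.

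For the lower bound, the key observation is that $P_n$ arises as an induced subgraph of an even cycle. Deleting one vertex from $C_{n+1}$ (for $n$ odd) or two adjacent vertices from $C_{n+2}$ (for $n$ even) leaves an induced copy of $P_n$, and in both cases the ambient cycle $C_m$ has even length $m\in\{n+1,n+2\}$. Proposition~\ref{prop:subgraph} then gives $\eta(C_m)\le\eta(P_n)$, and substituting the even-cycle value $\eta(C_m)=\tfrac{2}{m}\csc(\pi/m)$ from Cor.~\ref{prop:eta-cycle} delivers precisely the stated lower bounds.

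The hard part will be the closed-form evaluation of the path energy in the upper bound: one must carefully track the sign change of $\cos(k\pi/(n+2))$ across $k=(n+2)/2$ and treat the two parities of $n$ separately, since the parity of $n+2$ determines whether the summation terminates in a $\cot$ or a $\csc$. By contrast, the lower bound is immediate once the induced-subgraph embedding is identified. Finally, the asymptotic claim follows by expanding $\cot$ and $\csc$ for small arguments: both the upper and lower bounds behave as $\tfrac{2}{\pi}+\mathcal{O}(1/n)$, so $\eta(P_n)\to 2/\pi$ as $n\to\infty$.
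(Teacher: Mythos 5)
Your proposal is correct and follows essentially the same route as the paper: the upper bound via Theorem~\ref{thm:spectra} applied to $P_n=L(P_{n+1})$ together with $\mathcal{E}_s^{\max}(P_{n+1})=\mathcal{E}(P_{n+1})$ (single switching class plus bipartiteness), and the lower bound by realising $P_n$ as an induced subgraph of an even cycle and invoking Cor.~\ref{prop:eta-cycle}. The only cosmetic difference is that you derive the closed form of $\mathcal{E}(P_{n+1})$ from the eigenvalues $2\cos(k\pi/(n+2))$ directly, whereas the paper cites the known formula; your stated values agree with it.
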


The upper bounds follow from the equality $\mathcal{E}_s^{\max}(P_n)=\mathcal{E}(P_n)$ \cite{shader09} applied to Thm. \ref{thm:spectra}, as well as known expressions for $\mathcal{E}(P_n)$ \cite{gutman12}. The lower bounds can be derived from Cor. \ref{prop:eta-cycle} together with the observation that $\eta(P_m)\;\geq \;\eta(C_{n})$ for all $m<n$, as shown in Appendix \ref{A:line_examples}.

The incompatibility of path graphs exhibit behaviour analogous to that of even cycles. Comparing with exact values obtained by semidefinite programming (see Fig. \ref{fig:eta-side-by-side}), we are led to the following prediction:
\begin{conjecture}
For all $n\geq 1$, paths satisfy
\[\eta(P_{2n})=\eta(P_{2n+1})=\eta(C_{2n+2})=\frac{2}{2n+2}\csc\Big(\frac{\pi}{2n+2}\Big)\,. 
\]
\end{conjecture}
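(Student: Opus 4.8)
The plan is to collapse the three-way equality onto a single upper bound, and then prove that bound by semidefinite duality. By Cor.~\ref{prop:eta-cycle} the even-cycle value is already known, $\eta(C_{2n+2})=\tfrac{2}{2n+2}\csc(\pi/(2n+2))$, so only the two path values must be pinned to it. Now $P_{2n+1}$ arises from $C_{2n+2}$ by deleting one vertex, $P_{2n}$ from $C_{2n+2}$ by deleting two adjacent vertices, and $P_{2n}$ from $P_{2n+1}$ by deleting an endpoint; in each case the smaller graph is an \emph{induced} subgraph of the larger. Applying Prop.~\ref{prop:subgraph} to these three inclusions yields the chain
\begin{equation*}
\eta(C_{2n+2})\;\le\;\eta(P_{2n+1})\;\le\;\eta(P_{2n}).
\end{equation*}
Hence the whole conjecture reduces to the single matching inequality $\eta(P_{2n})\le\eta(C_{2n+2})$: once this holds, all four quantities are squeezed to equality, and as a bonus $\eta(P_{2n})=\eta(P_{2n+1})$ follows for free.

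The obstacle is that none of the general upper bounds is tight here. The clique bound of Prop.~\ref{prop:clique} only gives $\omega(P_{2n})^{-1/2}=1/\sqrt2$, and the skew-energy bound of Theorem~\ref{thm:spectra} gives $\tfrac{1}{4n}\mathcal{E}(P_{2n+1})$ (the upper bound of Cor.~\ref{prop:eta_paths}), which strictly exceeds the cycle value precisely because $P_{2n+1}$ is not edge-transitive. A sharper certificate is therefore needed, and I would take it from the dual of the joint-measurability SDP. Writing the parent-POVM constraints as $\sum_a\E(a)=\id$ and $\sum_a a_v\E(a)=\eta A_v$ with $\E(a)\succeq0$, Lagrangian duality gives $\eta(G)\le\tr{Z}$ for any Hermitian $Z,\{W_v\}$ obeying $\sum_v\tr{W_vA_v}=-1$ and $Z+\sum_{v}a_vW_v\succeq0$ for every signing $a:V\to\{\pm1\}$. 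The crude choice $W_v\propto A_v$, $Z\propto\id$ merely reproduces \eqref{eq:eta_norm_bound} and thus Theorem~\ref{thm:spectra}; to reach the cycle value one must let $W_v$ carry quadratic next-neighbour contributions $\Gamma_j\Gamma_{j+2}$ and allow $Z$ to deviate from a multiple of the identity near the two endpoints.

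The construction I would pursue transfers the cycle's \emph{known} optimal dual to the path. Because $C_{2n+2}$ is edge-transitive, the proof of Theorem~\ref{thm:spectra} provides an explicit circulant optimal pair $(Z^{\mathrm{cyc}},\{W_v^{\mathrm{cyc}}\})$ built from a maximum-skew-energy orientation. Realising the path's observables as the subset of the cycle's observables supported on the undeleted arc, I would start from the restricted operator $Z^{\mathrm{cyc}}+\sum_{v\in P}a_vW_v^{\mathrm{cyc}}=\bigl(Z^{\mathrm{cyc}}+\sum_{v\in C}a_vW_v^{\mathrm{cyc}}\bigr)-a_uW^{\mathrm{cyc}}_u-a_wW^{\mathrm{cyc}}_w$, where $u,w$ are the two deleted vertices, and add boundary corrections that restore the normalisation $\sum_{v}\tr{W_vA_v}=-1$ while leaving $\tr{Z}$ at the cycle value. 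Equivalently, one can decompose the circulant certificate into its reflection-symmetric and reflection-antisymmetric parts—the classical Dirichlet/Neumann relation between the spectra of a periodic (circulant) and a Dirichlet (tridiagonal) chain—and keep the component adapted to the path; the resulting transfer-matrix recursion has a Chebyshev closed form that the boundary conditions force to reproduce $\csc(\pi/(2n+2))$.

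The hard part will be verifying \emph{global} dual feasibility: the positivity $Z+\sum_v a_vW_v\succeq0$ must hold simultaneously over all $2^{2n}$ signings for the corrected, non-symmetric certificate. For the cycle the two-parameter symmetric ansatz collapses this to one spectral condition, but the path breaks edge-transitivity, so positivity must instead be controlled through the interlacing of the tridiagonal skew-spectrum against the circulant one after the two terms $a_uW^{\mathrm{cyc}}_u,a_wW^{\mathrm{cyc}}_w$ are removed—exactly the step where indefinite corrections can spoil the bound. A clean alternative worth attempting is to argue that an optimal parent may be taken fermionic-Gaussian (these are all quadratic Majorana observables), reducing the robustness to a covariance-matrix eigenvalue problem in which the path/cycle distinction is purely a boundary condition on a Toeplitz tridiagonal matrix; proving that Gaussian parents are optimal would then be the key lemma, after which the equality of the Dirichlet and periodic optima (with the parity matching that selects $2n+2$) gives the conjecture, and the same witness, assigning zero weight to the extra endpoint observable, simultaneously certifies $\eta(P_{2n+1})\le\eta(C_{2n+2})$ and hence $\eta(P_{2n})=\eta(P_{2n+1})$.
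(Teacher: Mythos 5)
The statement you are addressing is stated in the paper as a \emph{conjecture}, supported only by semidefinite-programming evidence for small $n$ (Fig.~\ref{fig:eta-side-by-side}); the paper offers no proof, so there is nothing to compare your argument against except the partial bounds of Cor.~\ref{prop:eta_paths}. Your reduction step is correct and essentially reproduces the paper's own lower-bound reasoning: since $P_{2n+1}$ and $P_{2n}$ are induced subgraphs of $C_{2n+2}$ (and $P_{2n}$ of $P_{2n+1}$), Prop.~\ref{prop:subgraph} gives $\eta(C_{2n+2})\le\eta(P_{2n+1})\le\eta(P_{2n})$, which is exactly the chain behind Eq.~\eqref{eq:cycles} in Appendix~\ref{A:line_examples}. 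You are also right that the conjecture collapses to the single inequality $\eta(P_{2n})\le\eta(C_{2n+2})$, and right that none of the paper's general upper bounds (clique, Lov\'asz, skew-energy) can deliver it, since the skew-energy bound $\tfrac{1}{2n}\mathcal{E}(P_{2n+1})$ strictly exceeds the cycle value.

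The gap is that this one remaining inequality --- the entire content of the conjecture --- is never established. Everything after your reduction is a research plan stated in the conditional (``I would pursue'', ``worth attempting''), and each of its load-bearing steps is unverified: (i) that the cycle's circulant dual certificate can be restricted to the path and corrected at the boundary without lowering the objective; (ii) that the corrected certificate remains positive semidefinite over all $2^{2n}$ signings, which you yourself identify as the step where ``indefinite corrections can spoil the bound''; and (iii) the fallback claim that an optimal parent POVM may be taken fermionic-Gaussian, which is itself an open structural question and not a consequence of anything in the paper. Moreover, the heuristic Dirichlet/periodic spectral correspondence you invoke points in the wrong direction on its face: the skew-spectrum of the path (tridiagonal) generically interlaces strictly inside that of the cycle (circulant), which is why the norm-based upper bound for $P_{2n}$ lands \emph{above} the cycle value rather than matching it, so a genuinely new mechanism beyond spectral restriction is needed. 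As written, the proposal is a correct framing of the problem plus an unexecuted strategy for its hard half; it does not prove the statement.
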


\begin{figure*}[t]
  \centering
  \subfloat[Cycles\label{fig:a}]{
    \includegraphics[width=0.4\linewidth]{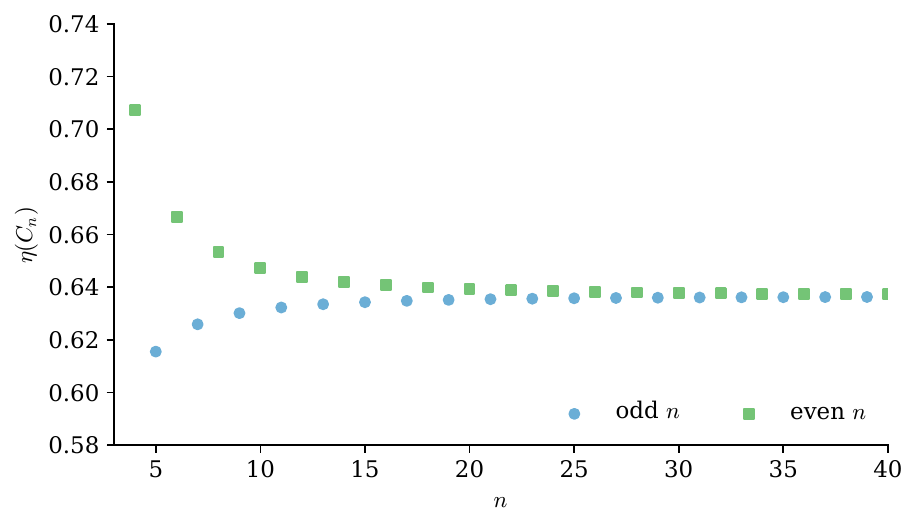}}
   \hspace{0.08\linewidth}
  \subfloat[Paths\label{fig:b}]{
    \includegraphics[width=0.4\linewidth]{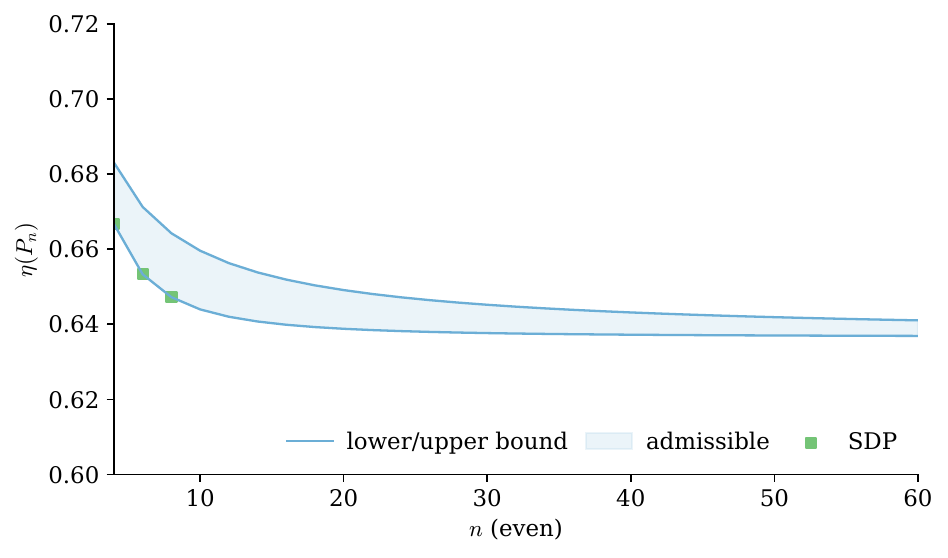}}

  \caption{Incompatibility robustness for (a) cycles and (b) paths (cf. Cor.~\ref{prop:eta-cycle} and Cor.~\ref{prop:eta_paths}). Left: exact values of $\eta(C_n)$ for odd (blue) and even (green) $n$. As $n$ increases, the incompatibility of odd (even) cycles decreases (increases), with both approaching $2/\pi$ as $n\to\infty$. Right: bounds on $\eta(P_n)$ for even $n$, together with exact values for $n\leq 9$ obtained via semidefinite programming. 
Similar bounds also hold for odd $n$, and both cases exhibit the same asymptotic behaviour as cycles.}
  \label{fig:eta-side-by-side}
\end{figure*}

\subsection{Optimally incompatible line graphs}\label{sec:max_inc}

We now consider a general upper bound on the incompatibility robustness of line graphs, and attempt to determine for which graphs this bound is saturated.

\begin{proposition}\label{prop:optimal_inc}
    Let $G=(V,E)$ be a simple graph with with $|V|=n$, $|E|=m$, and maximum
degree $\Delta$. Then its line graph satisfies
\begin{equation}\label{eq:max}
  \eta(L(G))\leq \frac{n}{2m}\sqrt{\Delta}\,.  
\end{equation}
\end{proposition}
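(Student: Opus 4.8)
The plan is to derive the bound as a direct corollary of the spectral estimate in Theorem~\ref{thm:spectra}, combined with an elementary universal bound on the skew-energy. Since the inequality $\eta(L(G))\le \tfrac{1}{2m}\,\mathcal{E}_s^{\max}(G)$ in Theorem~\ref{thm:spectra} holds for \emph{every} graph $G$ (edge-transitivity is needed only for the matching lower bound), it suffices to show that every orientation $G^\sigma$ satisfies $\mathcal{E}_s(G^\sigma)\le n\sqrt{\Delta}$, and then pass to the maximum over $\sigma$ in the definition~\eqref{eq:max_skew}.

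To bound the skew-energy I would work directly with the skew-adjacency matrix $S=S(G^\sigma)$ of Definition~\ref{def:energies}. First I would record the Frobenius identity $\sum_j |\lambda_j(S)|^2 = \tr{S^\top S} = \tr{-S^2} = \sum_{u,v} S_{uv}^2 = 2m$, since each of the $m$ edges $\{u,v\}$ contributes the two entries $S_{uv}^2=S_{vu}^2=1$. Applying the Cauchy--Schwarz inequality to the vector of eigenvalue moduli against the all-ones vector of length $n$ then gives $\mathcal{E}_s(G^\sigma)=\sum_j |\lambda_j(S)| \le \sqrt{\,n\sum_j |\lambda_j(S)|^2\,} = \sqrt{2mn}$. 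Finally, the handshake lemma $2m=\sum_{v\in V}\deg(v)\le n\Delta$ upgrades this to $\sqrt{2mn}\le \sqrt{n\cdot n\Delta}=n\sqrt{\Delta}$. Taking the maximum over orientations yields $\mathcal{E}_s^{\max}(G)\le n\sqrt{\Delta}$, and substituting into Theorem~\ref{thm:spectra} produces exactly $\eta(L(G))\le \tfrac{n}{2m}\sqrt{\Delta}$.

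Each step is short, so there is no serious analytic obstacle; the one genuine pitfall is the choice of spectral estimate. The tempting shortcut---bounding each eigenvalue modulus by $\sqrt{\Delta}$ and summing over the $n$ eigenvalues---is \emph{invalid}, because the spectral radius of $S$ can be as large as $\Delta$ rather than $\sqrt{\Delta}$: for the cyclic orientation of $C_4$ the skew-spectrum is $\{0,0,\pm 2i\}$, so $\rho(S)=2=\Delta$ while $\sqrt{\Delta}=\sqrt{2}$. The correct route is therefore the aggregate Frobenius/Cauchy--Schwarz estimate, with the maximum degree entering only through the handshake inequality $2m\le n\Delta$. It is worth noting for the subsequent saturation analysis that both estimates used become equalities precisely when $G$ is regular (so that $2m=n\Delta$) and every skew-eigenvalue has modulus $\sqrt{\Delta}$, i.e.\ when $-S^2=\Delta\,\id$; this is exactly the skew-weighing matrix regime identified in Prop.~\ref{prop:weighing}, which explains why~\eqref{eq:max} can be tight only in that case.
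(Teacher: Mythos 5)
Your proof is correct and follows essentially the same route as the paper's: the unconditional upper bound from Theorem~\ref{thm:spectra}, the Frobenius identity $\sum_j|\lambda_j(S)|^2=2m$, Cauchy--Schwarz to get $\mathcal{E}_s(G^\sigma)\le\sqrt{2mn}$, and $2m\le n\Delta$ to finish. Your derivation of $2m\le n\Delta$ via the handshake lemma is in fact the cleaner justification (the one the paper itself adopts later in the proof of Prop.~\ref{prop:weighing}), and your closing remark about when the two inequalities saturate correctly anticipates that result.
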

\begin{proof}
     It was shown in \cite[Thm. 2.5]{adiga10} that the maximal skew--energy of a graph $G$ satisfies $\mathcal{E}^{\max}_s(G)\le\sqrt{2mn}\le n\sqrt{\Delta}$. This follows from $\mathcal{E}_s(G^\sigma)=\sum_j|\lambda_j(S)|\leq \sqrt{\sum_j|\lambda_j(S)|^2}\sqrt{n}=\sqrt{2mn}\leq \sqrt{(n\Delta)n}$, where $\sigma$ is any orientation of $G$. The first inequality relies on Cauchy-Schwarz, and the subsequent equality holds due to the relation $\sum_j|\lambda_j(S)|^2=2m$. The final inequality follows from $2m\leq n\Delta$, which is a consequence of $|\lambda_j(S)|\leq \Delta$ for all $j$ (see Ref. \cite{adiga10}). Since the line graph $L(G)$ has $m$ vertices, Thm. \ref{thm:spectra} implies that the incompatibility robustness satisfies  $\eta(L(G))\leq \frac{n}{2m}\sqrt{\Delta}$.
\end{proof}

\begin{definition}\label{def:optimal} We say that a line graph $L(G)$ is \emph{optimally incompatible} if the bound \eqref{eq:max} is saturated.
\end{definition}

For which line graphs is the optimum value achieved? The next result gives a necessary condition on the skew--adjacency matrix of $G$.

\begin{proposition}\label{prop:weighing}
Let $G$ be a simple graph with $n$ vertices, $m$ edges and maximum degree $\Delta$. If $L(G)$ is optimally incompatible, then there exists an orientation of $G$ such that its skew-adjacency matrix $S$ satisfies
\begin{equation}\label{eq:orthS}
   S^\top S \;=\; \Delta\,\id_n\,.
\end{equation}
Consequently, $G$ must be $\Delta$--regular.
\end{proposition}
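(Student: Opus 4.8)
The plan is to extract the equality conditions from the chain of inequalities underlying Proposition~\ref{prop:optimal_inc}. Write $n=|V|$, $m=|E|$, and choose an orientation $\sigma$ whose skew--adjacency matrix $S$ attains the maximum skew--energy, $\mathcal{E}_s^{\max}(G)=\mathcal{E}_s(G^\sigma)=\sum_j|\lambda_j(S)|$. Combining Theorem~\ref{thm:spectra} with the estimates in the proof of Proposition~\ref{prop:optimal_inc}, that chain reads
\begin{align*}
\eta(L(G)) &\;\le\; \frac{1}{2m}\,\mathcal{E}_s^{\max}(G) \;=\; \frac{1}{2m}\sum_j|\lambda_j(S)| \\
&\;\le\; \frac{\sqrt{n}}{2m}\Big(\sum_j|\lambda_j(S)|^2\Big)^{1/2} \;=\; \frac{\sqrt{2mn}}{2m} \;\le\; \frac{n}{2m}\sqrt{\Delta}\,,
\end{align*}
where the first bound in the second line is Cauchy--Schwarz and the identity $\sum_j|\lambda_j(S)|^2=2m$ is used. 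If $L(G)$ is optimally incompatible, the two ends coincide, so every inequality is an equality. I would record the two that matter: (a) the Cauchy--Schwarz step is tight, and (b) $\sqrt{2mn}=n\sqrt{\Delta}$, i.e.\ $2m=n\Delta$.

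Next I would exploit~(a). Tightness of Cauchy--Schwarz for the vector $(|\lambda_1(S)|,\dots,|\lambda_n(S)|)$ forces all eigenvalue moduli to equal a common value $c$. The key structural input is that $S$ is real and skew--symmetric, hence \emph{normal}: its singular values coincide with the moduli of its eigenvalues. Equal moduli therefore mean that all singular values equal $c$, which is precisely $S^\top S=c^2\,\id_n$. To fix the constant I would feed back the identity $nc^2=\sum_j|\lambda_j(S)|^2=2m$; together with~(b) this gives $c^2=\Delta$, so that $S^\top S=\Delta\,\id_n$, establishing~\eqref{eq:orthS}.

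Finally, $\Delta$--regularity drops out by inspecting the diagonal of~\eqref{eq:orthS}: the $(v,v)$ entry of $S^\top S$ is $\sum_u S_{uv}^2=\deg(v)$, so $\deg(v)=\Delta$ for every $v$ (equivalently, the handshake bound $2m=\sum_v\deg(v)\le n\Delta$ becomes tight in~(b)). I expect the one genuinely delicate point to be the passage from equal eigenvalue moduli to $S^\top S=c^2\,\id_n$: it relies essentially on the normality of the skew--adjacency matrix, since for a non--normal matrix equal eigenvalue moduli place no constraint on the singular values. The remaining steps are routine equality analysis.
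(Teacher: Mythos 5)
Your proposal is correct and follows essentially the same route as the paper: both extract the equality conditions from the Cauchy--Schwarz step (forcing all eigenvalue moduli of $S$ to coincide, hence $S^\top S=c^2\id_n$ by normality of the skew--symmetric matrix) and from the handshake bound $2m\le n\Delta$ (forcing $\Delta$--regularity), then combine $nc^2=2m=n\Delta$ to get $c^2=\Delta$. Your explicit remark that the passage from equal eigenvalue moduli to $S^\top S=c^2\id_n$ hinges on normality is a point the paper leaves implicit, but the argument is otherwise identical.
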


\begin{proof}
According to \cite[Cor.~2.6]{adiga10}, the maximal skew--energy of a graph $G$ satisfies $\mathcal{E}_s^{\max}(G)=n\sqrt{\Delta}$ if and only if there exists an orientation $\sigma$ such that $S(G^\sigma)^\top S(G^\sigma)=\Delta \id_n$. To see this, recall from the proof of Prop.~\ref{prop:optimal_inc} that
$\mathcal{E}_s(G^\sigma)=\sum_j|\lambda_j(S)|\le \sqrt{n\sum_j|\lambda_j(S)|^2} =\sqrt{2mn}\le n\sqrt{\Delta}$. The first inequality, which applies Cauchy--Schwarz, holds with equality if and only if all eigenvalues of $S$ have the same modulus, i.e., $|\lambda_j(S)|=\alpha$ for all $j$. Since $S^\top S$ is real symmetric and all its eigenvalues are equal to $\alpha^2$, it follows that $S^\top S=\alpha^2 \id_n$. Furthermore, $\sum_j|\lambda_j(S)|^2=\mathrm{tr}(S^\top S)=2m$,therefore $\alpha=\sqrt{2m/n}$. The inequality $2m \le n\Delta$ (used to derive the general bound) follows from the fact that the average degree of a graph cannot exceed its maximum degree: $2m/n = \sum_{v\in V}\deg(v)/n \le \max_{v\in V}\deg(v) = \Delta$. Equality holds if and only if all vertices have degree $\Delta$, i.e., $G$ is $\Delta$--regular. When both equalities hold, $|\lambda_j(S)|=\sqrt{\Delta}$ for all $j$, which implies $S(G^\sigma)^\top S(G^\sigma)=\Delta \id_n$. Therefore, by Thm.~\ref{thm:spectra}, the bound \eqref{eq:max} is saturated only if such an orientation exists. Finally, this condition immediately implies $\Delta$--regularity, since $(S^\top S)_{vv}=\sum_u S_{uv}^2=\deg(v)=\Delta$ for every $v\in V$.
\end{proof}

The matrix condition \eqref{eq:orthS} coincides with the definition of a weighing matrix. An $n\times n$ weighing matrix $W$ with weight $k$, denoted $W(n,k)$, has entries in $\{0,\pm 1\}$ such that $W^\top W=k\,\id_n$. These matrices are closely connected to orthogonal designs \cite{colbourn10} and have a variety of applications \cite{koukouvinos97}. Prop.~\ref{prop:weighing} therefore implies that for $L(G)$ to be optimally incompatible, $G$ must admit an orientation whose skew-adjacency matrix is a skew-weighing matrix $W(n,k)$.

When $k=n-1$, weighing matrices are equivalent to conference matrices. Thus, for a $(n-1)$--regular graph $G$, optimal incompatibility requires the existence of a skew-conference matrix, as confirmed by the example $G=K_n$ in Cor. \ref{cor:johnson}. For the hypercube $Q_d$, we showed in Cor.~\ref{cor:eta-hypercube-exact} that $L(Q_d)$ attains optimal incompatibility, implying that a skew--weighing matrix $W(2^d,d)$ exists for all $d$. The complete bipartite graph $K_{r,s}$, by contrast, is regular only when $r=s$. Together with Cor.~\ref{cor:eta-LKmn}, it follows that $L(K_{r,s})$ is optimally incompatible exactly when $r=s$ and an $s\times s$ Hadamard matrix $H$ exists. In this case, the corresponding skew--weighing matrix $W(2s,s)$ can be constructed with off--diagonal blocks given by $H$ and $-H^\top$. In summary:

\begin{corollary}\label{cor:optimal}
    The line graph of the hypercube $Q_d$ is optimally incompatible for every $d$. The Johnson graph $J(n,2)$ is optimally incompatible if and only if a skew--conference matrix of order $n-1$ exists. The rook's graph $K_r\square K_s$ is optimally incompatible if and only if $r=s$ and a Hadamard matrix of order $s$ exists.
\end{corollary}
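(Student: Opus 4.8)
The plan is to prove each of the three assertions separately by combining the optimality criterion of Proposition~\ref{prop:weighing} with the explicit computations already carried out in Corollaries~\ref{cor:johnson}, \ref{cor:eta-hypercube-exact}, and \ref{cor:eta-LKmn}. The unifying observation is that by Proposition~\ref{prop:weighing}, $L(G)$ is optimally incompatible if and only if $G$ is $\Delta$-regular and admits an orientation whose skew-adjacency matrix $S$ satisfies $S^\top S=\Delta\,\id_n$, i.e.\ $S$ is a skew-weighing matrix $W(n,\Delta)$. Thus each assertion reduces to deciding, for the relevant root graph $G$, whether such a skew-weighing matrix exists, and matching this against the combinatorial objects named in the statement.

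First I would treat the hypercube $Q_d$. Here the root graph is $d$-regular on $n=2^d$ vertices, so optimal incompatibility requires a skew-weighing matrix $W(2^d,d)$. But Corollary~\ref{cor:eta-hypercube-exact} already establishes $\eta(L(Q_d))=1/\sqrt d$ for every $d\ge 3$, and one checks directly that this value saturates the bound \eqref{eq:max}: since $m=d\,2^{d-1}$ and $\Delta=d$, the right-hand side of \eqref{eq:max} is $\frac{n}{2m}\sqrt\Delta=\frac{2^d}{2\cdot d\,2^{d-1}}\sqrt d=\frac{1}{\sqrt d}$. Hence $L(Q_d)$ is optimally incompatible for all $d$ (the small cases $d\le 2$ being checked directly). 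By the forward direction of Proposition~\ref{prop:weighing}, this simultaneously guarantees the existence of the skew-weighing matrix $W(2^d,d)$; no separate combinatorial construction is needed.

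For the Johnson graph $J(n,2)=L(K_n)$, the root graph $K_n$ is $(n-1)$-regular, so Proposition~\ref{prop:weighing} demands a skew-weighing matrix of weight $n-1$ on $n-1$ vertices --- but one must be careful: the relevant orientation lives on $K_n$ itself, whose skew-adjacency matrix is an $n\times n$ tournament matrix, and optimality requires $S^\top S=(n-1)\id_n$, i.e.\ precisely a skew-conference matrix of order $n$. I would therefore cite Corollary~\ref{cor:johnson}, which already shows equality in \eqref{eq:eta-triangular} holds iff a skew-conference matrix of order $n$ exists, and then verify that the bound \eqref{eq:eta-triangular} coincides with the optimal-incompatibility bound \eqref{eq:max} by substituting $\Delta=n-1$, $m=\binom n2$, and the vertex count into \eqref{eq:max}. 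The only subtlety --- and the point I would state carefully in the corollary --- is reconciling the phrase ``order $n-1$'' in the corollary statement with the natural object being a conference matrix of order $n$; I would resolve this by matching conventions, noting that a skew-conference matrix of order $n$ is exactly the skew-adjacency matrix of an oriented $K_n$, so the two formulations agree. For the rook's graph $K_r\square K_s=L(K_{r,s})$, I would invoke the regularity condition: $K_{r,s}$ is regular only when $r=s$, so Proposition~\ref{prop:weighing} rules out optimality unless $r=s$; and when $r=s$, Corollary~\ref{cor:eta-LKmn} shows equality holds iff an $s\times s$ Hadamard matrix exists (a partial Hadamard matrix with $r=s$ being a full Hadamard matrix).

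The main obstacle I anticipate is purely bookkeeping rather than conceptual: one must confirm in each case that the value of $\eta$ computed in the earlier corollary \emph{equals} the universal bound \eqref{eq:max}, not merely some bound, so that ``equality in the earlier corollary'' is genuinely synonymous with ``optimally incompatible'' in the sense of Definition~\ref{def:optimal}. This requires a short arithmetic check relating $(n,m,\Delta)$ of the root graph to the closed-form robustness in each family, and careful tracking of which ``order'' the combinatorial matrix has (order $n$ vs.\ $n-1$ for the conference case, $r\times s$ vs.\ $s\times s$ for Hadamard). I would also remark, as the text does, that the hypercube case yields the existence of $W(2^d,d)$ as a \emph{corollary} of the physics rather than an input, which is worth flagging since it is the one family where optimality is unconditional.
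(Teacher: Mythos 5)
Your proposal is correct and takes essentially the same route as the paper, whose proof is simply a citation of Corollaries~\ref{cor:johnson}, \ref{cor:eta-hypercube-exact} and \ref{cor:eta-LKmn}; the arithmetic checks you add (that $\tfrac{n}{2m}\sqrt{\Delta}$ evaluates to $\tfrac{1}{\sqrt{n-1}}$, $\tfrac{1}{\sqrt{d}}$ and $\tfrac{1}{\sqrt{s}}$ for $K_n$, $Q_d$ and $K_{s,s}$ respectively) are exactly what the paper leaves implicit, and your use of Prop.~\ref{prop:weighing} to dispose of the non-regular case $r<s$ matches the discussion preceding the corollary. One caution: you should not try to ``reconcile by matching conventions'' the phrase ``order $n-1$'' with the order-$n$ object --- a skew-conference matrix of order $n-1$ is a genuinely different (and inequivalent) existence condition from one of order $n$; the correct condition, forced by Cor.~\ref{cor:johnson} and by $S^\top S=(n-1)\id_n$ for the $n\times n$ skew-adjacency matrix of $K_n$, is a skew-conference matrix of order $n$, and the ``$n-1$'' in the corollary statement is best treated as a typo rather than papered over.
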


\begin{proof}
    This follows from Corollaries \ref{cor:johnson}, \ref{cor:eta-hypercube-exact} and \ref{cor:eta-LKmn}.
\end{proof}

In each of these examples, the underlying graph $G$ is $k$--regular, as required by Prop. \ref{prop:weighing}. In the line graph picture, this translates into $(2k-2)$--regularity, meaning that each quadratic observable anti-commutes with exactly $2(k-1)$ others. Thus, if optimal incompatibility is achieved, we require anti-commutativity to be distributed homogeneously across the observables. However, as $k$--regular graphs do not always admit an orientation with
$S^\top S=k \id$, regularity alone does not guarantee
optimal incompatibility. For instance, although $C_n$ is 2-regular, Cor.~\ref{prop:eta-cycle} implies
$\eta(C_n)\to 2/\pi<1/\sqrt{2}$ as $n\to\infty$, which is strictly below the value of the bound. 

A complete classification of $k$--regular graphs with an orientation that satisfies \eqref{eq:orthS} remains an open problem except for small cases. For $3$--regular graphs, the only connected examples are $K_4$ and $Q_3$
\cite{gong12}. Classifications, which contain infinite families, are also known for $4$- and $5$-regular graphs \cite{chen13,guo17}.

\section{Merged Johnson graphs}\label{sec:fermionic}

We now consider the incompatibility of degree--$k$ Majorana observables---a class of observables that are relevant in condensed-matter physics, quantum chemistry, and quantum computation \cite{babbush23,clinton24,zhao21,wan23,chapman22,mcnulty24}. We consider a system consisting of $n$ Majorana modes, corresponding to $n/2$ fermionic modes, represented by Hermitian Majorana operators $\Gamma_1,\ldots,\Gamma_n$ satisfying the relations in \eqref{eq:majoranas}.

For a fixed integer $k$, let $\mathscr{I}$ be a $k$--subset of $[n]$, and denote by $\binom{[n]}{k}$ the set of all such subsets. 
A degree--$k$ observable is defined as
\begin{equation}\label{eq:ferm_monomials}
A_\mathscr{I} := i^{\alpha} \prod_{j\in \mathscr{I}} \Gamma_j\,,
\end{equation}
where the product is taken in increasing index order. The family of all degree--$k$ observables is then
\begin{equation}\label{eq:ferm_monomials_set}
\mathcal{A}_k=\{A_\mathscr{I}: \mathscr{I}\in\binom{[n]}{k}\}\,.
\end{equation}
By a suitable choice of the phase exponent, e.g., $\alpha=\lfloor k/2\rfloor$, the observables satisfy $A_\mathscr{I}^\dagger=A_\mathscr{I}$ and $A_\mathscr{I}^2=\id$.

In the quadratic case $k=2$, the anti-commutativity graph of $\mathcal{A}_2$ coincides with the Johnson graph $J(n,2)$ treated in Cor.~\ref{cor:johnson}. For higher degrees $k>2$, the corresponding graphs are no longer line graphs, and Theorem~\ref{thm:spectra} does not apply.

For arbitrary $k$, the anti-commutation relations within $\mathcal{A}_k$ are
\begin{equation}\label{eq:car}
\{A_\mathscr{I},A_{\mathscr{I}'}\}=0 
\ \iff\ 
\begin{cases}
|\mathscr{I}\cap \mathscr{I}'|\ \text{is odd}, & k\ \text{even},\\[2pt]
|\mathscr{I}\cap \mathscr{I}'|\ \text{is even},& k\ \text{odd}.
\end{cases}
\end{equation}
Let $V=\binom{[n]}{k}$. The generalised Johnson graph $J(n,k,i)$ has vertex set $V$ with an edge between
$\mathscr I$ and $\mathscr I'$ if and only if $|\mathscr I\cap\mathscr I'|=i$. 
For $L\subseteq\{0,1,\dots,k-1\}$, the merged Johnson graph $J(n,k,L)$ is the union of $J(n,k,i)$ over $i\in L$.
Thus, the anti-commutativity graph of $\mathcal{A}_k$ is
\begin{equation}\label{eq:merged_graph}
G=
\begin{cases}
J\!\left(n,k,\{1,3,\ldots,k-1\}\right), & k\ \text{even},\\[4pt]
J\!\left(n,k,\{0,2,\ldots,k-1\}\right), & k\ \text{odd}.
\end{cases}
\end{equation}

A recent result by Linz~\cite{linz24} shows that for fixed $k$, the Lovász number of merged Johnson graphs satisfies
\begin{equation}\label{eq:theta_merged}
\vartheta\!\left(J(n,k,L)\right) = \Theta\!\big(n^{\,k-|L|}\big).
\end{equation}
Since $|V|=\binom{n}{k}=\Theta(n^k)$, Theorem~\ref{thm:lovasz} yields the upper bound
\begin{equation}
\eta\!\left(J(n,k,L)\right) = \mathcal{O}\!\big(n^{-|L|/2}\big).
\end{equation}
For the specific anti-commutativity graphs~\eqref{eq:merged_graph}, we obtain the bounds
\begin{equation}\label{eq:theta_merged_bound}
\eta(G)=
\begin{cases}
\mathcal{O}\!\big(n^{-k/4}\big), & k\ \text{even},\\[4pt]
\mathcal{O}\!\big(n^{-(k+1)/4}\big), & k\ \text{odd}.
\end{cases}
\end{equation}

The clique number of~\eqref{eq:merged_graph}, when applied to Prop.~\ref{prop:clique}, yields a considerably weaker bound than~\eqref{eq:theta_merged_bound} (see Appendix~\ref{A:lsystems}). For large $n$, the fractional chromatic number can be determined using knowledge of the automorphism group of $J(n,k,L)$, which was established in~\cite{jones05} (see Appendix~\ref{A:degreeq}). Applying Prop.~\ref{prop:frac} then yields the asymptotic lower bounds $\eta(G)=\Omega(n^{-k/2})$ when $k,n$ are even and $2\leq k\leq n/2$ (see Cor. \ref{cor:johnson_fractional} in Appendix~\ref{A:degreeq}).

For fixed $k=\mathcal{O}(1)$, this lower bound can be improved to $\eta(G)=\Omega(n^{-k/4})$ by a joint measurement constructed using fermionic Gaussian unitaries \cite{mcnulty24}. Together with the upper bound in Eq.~\eqref{eq:theta_merged_bound}, this yields tight asymptotic bounds, extending the results previously established for $k\leq 10$~\cite{mcnulty24}.

\begin{proposition}\label{prop:inc_rob_merged}
Assume $n$ and $k$ are even, with $k$ fixed and independent of $n$.  
Then the incompatibility robustness of the observables $\mathcal{A}_{k}$ with anti-commutativity graph~\eqref{eq:merged_graph} satisfies $\eta(G)=\Theta(n^{-k/4})$.
\end{proposition}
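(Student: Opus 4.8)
The plan is to establish matching bounds of order $n^{-k/4}$: the upper bound is already recorded in \eqref{eq:theta_merged_bound}, so the real work is a constructive lower bound $\eta(G)=\Omega(n^{-k/4})$; combining the two gives $\Theta(n^{-k/4})$.

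For the upper bound I would specialise the estimates preceding the statement. For $k$ even the label set in \eqref{eq:merged_graph} is $L=\{1,3,\dots,k-1\}$, so $|L|=k/2$; Linz's asymptotic \eqref{eq:theta_merged} then gives $\vartheta(G)=\Theta(n^{k-|L|})=\Theta(n^{k/2})$, while $|V|=\binom{n}{k}=\Theta(n^{k})$. Inserting these into Theorem~\ref{thm:lovasz} gives $\eta(G)\le\sqrt{\vartheta(G)/|V|}=\Theta(n^{-k/4})$, which is bookkeeping once Linz's theorem is invoked.

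The substance is the lower bound, and here the commuting-partition methods of Section~\ref{sec:lovasz} do not suffice: the fractional chromatic number of \eqref{eq:merged_graph} yields only $\eta(G)=\Omega(n^{-k/2})$ (Cor.~\ref{cor:johnson_fractional}), short of the target by a factor $n^{k/4}$. I would instead build a genuinely quantum parent from fermionic Gaussian unitaries, exploiting the covariance of $\{A_\mathscr{I}\}$: a Gaussian unitary $U_O$ with $O\in O(n)$ acts by $U_O\,\Gamma_j\,U_O^\dagger=\sum_l O_{jl}\,\Gamma_l$, so it rotates a degree-$k$ monomial into a real combination of degree-$k$ monomials. The parent POVM draws $O$ from a suitable orthogonal ensemble, measures the $n/2$ commuting occupation-number observables $i\Gamma_{2j-1}\Gamma_{2j}$ in the rotated frame, and records the outcome together with $O$; each target effect $\M_\mathscr{I}^\eta$ is then recovered as a marginal by post-processing these $\pm1$ outcomes into a product of $k/2$ of them. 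By covariance this reproduces each $A_\mathscr{I}$ with a common shrinking factor $\eta$, and a Gaussian (Wick) second-moment computation shows that each quadratic pairing contributes a factor of order $n^{-1/2}$, for a net $\eta=\Theta(n^{-k/4})$ consistent with the $k=2$ value $1/\sqrt{n-1}$ of Cor.~\ref{cor:johnson}.

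The main obstacle is making this construction rigorous: one must verify that the averaged effects are positive semidefinite and that the induced shrinking factor is genuinely $\Omega(n^{-k/4})$ rather than smaller. Both reduce to controlling the action of the orthogonal group on the degree-$k$ Majorana sector---precisely the moment analysis carried out in \cite{mcnulty24}---and it is here that the fixed-$k$ assumption $k=\mathcal{O}(1)$ enters, ensuring that the relevant binomial and spectral factors concentrate at the claimed scale. Once positivity and the $\Omega(n^{-k/4})$ factor are established, the marginal conditions \eqref{eq:postprocessing} hold by construction, and with the upper bound this yields $\eta(G)=\Theta(n^{-k/4})$.
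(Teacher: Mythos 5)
Your proposal matches the paper's argument: the upper bound comes from Linz's asymptotic $\vartheta(J(n,k,L))=\Theta(n^{k-|L|})$ with $|L|=k/2$ inserted into Theorem~\ref{thm:lovasz}, and the lower bound $\Omega(n^{-k/4})$ is supplied by the fermionic Gaussian joint measurement of \cite{mcnulty24}, whose positivity and moment analysis the paper likewise defers to that reference rather than reproving. Your additional observations---that the fractional-colouring bound falls short by $n^{k/4}$ and that the construction is consistent with the $k=2$ value $1/\sqrt{n-1}$---are correct sanity checks but not needed beyond what the paper already states.
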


This result establishes the asymptotic scaling conjectured in~\cite[Conj.~1]{mcnulty24}, confirming that the fermionic Gaussian joint measurement constructed there is asymptotically optimal.

The incompatibility robustness of these measurements limits the efficiency at which estimation tasks such as partial fermionic tomography and energy approximations of electronic-structure Hamiltonians can be achieved with joint measurement strategies \cite{zhao21,wan23,mcnulty24,majsak24}. For arbitrary $n$ and $k$, it remains open to find the optimal joint measurements for this task.

\section{Summary and outlook}\label{sec:summary}

We have shown that for families of binary observables, measurement incompatibility is intimately linked to the structural properties of graphs encoding their anti-commutation relations. In particular, the Lovász number, clique number, fractional chromatic number, energy, and skew-energy emerge as graph invariants that bound---sometimes tightly---the incompatibility robustness of these observables. For certain graph families, these quantities yield exact or asymptotic expressions for the robustness.

We have also seen, through our analysis of line graphs, connections between joint measurability and extremal problems in combinatorics and matrix theory, involving structures such as weighing matrices, Hadamard matrices, and skew-conference matrices. It remains an open question whether, and under what conditions, these combinatorial objects arise in the explicit construction of joint measurements that saturate the incompatibility bounds. Such constructions would be useful for various estimation tasks in quantum computing (see, e.g., \cite{mcnulty22,mcnulty24,majsak24}).

Some interesting open questions arise when considering how modifications of the anti-commutation pattern between observables affect incompatibility. For instance, how does the robustness change when edges or vertices are removed? Interestingly, semidefinite programming suggests that an even path $P_n$, which has fewer anti-commuting pairs (and thus more mutual commutativity) than a cycle $C_n$, can be more incompatible (cf. Fig. \ref{fig:eta-side-by-side}). More generally, how do graph operations---such as complementation, Cartesian or tensor products, or graph joins---affect the behaviour of $\eta(G)$?

Another interesting direction is to relax the constraints on the observables represented by a graph. While this work focused on binary observables that are unitary and either commute or anti-commute, it is natural to consider what changes when these assumptions are lifted. In particular, one may consider settings where non-adjacent observables do not necessarily commute, or where the observables are not unitary. For instance, the bound in Theorem~\ref{thm:lovasz} continues to hold even without assuming that non-edges correspond to commuting pairs. In this broader setting, it is an open question when the incompatibility robustness remains invariant under different realisations of the anti-commutativity graph, and for which graph families it can be quantified analytically.

\section*{Acknowledgements}

The author acknowledges support from PNRR MUR Project No. PE0000023-NQSTI. The author thanks Saverio Pascazio, Glen Bigan Mbeng and Micha\l{} Oszmaniec for useful discussions on earlier versions of the manuscript.

\bibliographystyle{apsrev4-2}
\bibliography{ref_graphs}

\onecolumngrid

%
%
%
\appendix
\section{Graph theory preliminaries}\label{A:prelims}

Let $G=(V,E)$ be a graph with vertex set $V=\{1,\ldots,n\}$ and edge set $E\subseteq \binom{V}{2}$. We sometimes use the notation $V(G)$ and $E(G)$ to emphasize the dependence of the vertex and edge sets on $G$. In this appendix, we introduce the essential graph-theoretic definitions neccessary in this work, including relevant graph parameters, graph symmetries, and oriented graphs. A comprehensive overview of algebraic graph theory can be found in \cite{godsil01}.

\subsection*{Graph parameters}

The \emph{independence number} $\alpha(G)$ is the cardinality of the largest independent set $S\subseteq V$ in which no two vertices are adjacent. The \emph{clique number} $\omega(G)$ is the size of the largest complete subgraph (clique) of $G$, and satisfies $\omega(G)=\alpha(\overline G)$, where $\overline G$ is the complement of $G$. The \emph{chromatic number} $\chi(G)$ is the smallest number of colours that can be assigned to the vertices of $G$ such that no two adjacent vertices share the same colour. Equivalently, the chromatic number of $G$ equals the \emph{clique covering number} $\theta(\overline G)$ of its complement, where the clique covering number $\theta(G)$ is the minimum number of cliques needed to cover all vertices of $G$.    An $a:b$--colouring of $G$ assigns $b$ colours to each vertex from a palette of $a$ colours, such that the colours assigned to adjacent vertices are disjoint (see Fig. \ref{fig:cycle_5}).

\begin{definition}\label{def:frac_colouring}
The \emph{fractional chromatic number} $\chi_f(G)$ is the infimum of $a/b$ such that an $a:b$--colouring of $G$ exists, for positive integers $a,b$.
\end{definition}

The \emph{Lovász number}, first introduced as an upper bound on the Shannon capacity of a graph \cite{lovasz79}, is a semidefinite programming relaxation of the independence number. Unlike $\alpha(G)$ and $\theta(G)$ (both NP-hard to compute), it is computable in polynomial time.
\begin{definition}\label{def:lovasz}
   The \emph{Lovász number} is given by
   \begin{equation}\label{Aeq:lov}
\vartheta(G):=\max_{X\in\mathbb{R}^{n\times n}} \{\tr{X \mathbb{J}}\,|\,X\geq 0\,,\,\tr{X}=1\,,\,X_{ij}=0\,\,\, \text{for all}\,\,\{i,j\}\in E\}\,,
\end{equation}
where $\mathbb{J}$ is the matrix of all ones.
\end{definition}
\noindent The well-known sandwich theorem gives $\alpha(G)\leq \vartheta(G)\leq\theta(G)$, or equivalently, $\omega(G)\leq \vartheta(\overline G)\leq\chi(G)$ \cite{knuth93}. 

For a graph $G=(V,E)$ encoding, via Def.~\ref{def:anti_graph}, the anti-commutativity 
relations of a set of binary observables $\{A_v : v\in V\}$, let 
$a:V\to\mathbb{R}$ be an assignment of real weights to the vertices. 
The corresponding linear combination of the set of observables is
\[
H(a)=\sum_{v\in V} a_v A_v\,.
\]
In recent work by Hastings and O'Donnell \cite{hastings21}, a related sandwich-type bound, $\alpha(G)\leq\Psi(G)\leq\vartheta(G)$, was established for the quantity 
\begin{equation}\label{Aeq:max_evalue}
\Psi(G)\;:=\;\max_{\;a:V\to\mathbb{R},\;\|a\|_2=1}
\bigl\|H(a)\bigr\|_{\infty}^2\,,
\end{equation}
where $\|\cdot\|_\infty$ denotes the operator norm and $\|\cdot\|_2$ the $\ell^2$-norm. The upper bound $\Psi(G)\leq\vartheta(G)$ is relevant in this work, and we reproduce its proof in Appendix \ref{A:theorem_proof}.

\subsection*{Graph symmetries}

An automorphism of a graph $G=(V,E)$ is an isomorphism from $G$ to itself, i.e., a permutation of the vertices that preserves adjacency and non-adjacency relations. The set of all automorphisms of $G$ forms a group known as the \emph{automorphism group}, denoted by $\text{Aut}(G)$ (see, e.g., \cite{godsil01}). This group is a subgroup of the symmetric group $\text{Sym}(V)$. For example, the complete graph $K_n$ has $\text{Aut}(K_n)\cong \text{Sym}([n])$, as any permutation of its vertices preserves its structure.

Two  symmetries relevant in this work are vertex- and edge-transitivity. These indicate that the automorphism group acts transitively on the vertices or edges of the graph, so that any two can be mapped to each other by an automorphism.

\begin{definition}\label{def:transitive}
Let $G=(V,E)$ be a graph and $\mathrm{Aut}(G)$ its automorphism group. The graph is \emph{vertex-transitive} if $\mathrm{Aut}(G)$ acts transitively on $V$, 
i.e., for any two distinct $v,v'\in V$ there exists $f\in\mathrm{Aut}(G)$ with $f(v)=v'$. 
It is \emph{edge-transitive} if $\mathrm{Aut}(G)$ acts transitively on $E$, 
i.e., for any $e,e'\in E$ there exists $f\in\mathrm{Aut}(G)$ with $f(e)=e'$.
\end{definition}

For example, cycles $C_n$, complete graphs $K_n$, hypercubes $Q_d$, and (generalised) Johnson graphs are both vertex- and edge-transitive. Complete bipartite graphs $K_{r,s}$ are edge-transitive for all $r,s$, but vertex-transitive only when $r=s$. Rook’s graphs $K_r\square K_s$ are vertex-transitive for all $r,s$, and edge-transitive when $r=s$. Paths $P_n$ are neither vertex- nor edge-transitive for $n\geq 4$.

\subsection*{Oriented graphs}

An \emph{oriented graph} $G^\sigma$ is obtained from a simple undirected graph $G=(V,E)$ by assigning a direction to each edge, 
so that every $\{u,v\}\in E$ is replaced by either $u\to v$ or $v\to u$.

\begin{definition}\label{def:oriented}  
An \emph{oriented graph} $G^\sigma$ of an underlying simple graph $G=(V,E)$ is specified by a function
\[
\sigma:\{(u,v),(v,u):\{u,v\}\in E\}\to\{\pm1\},
\]
satisfying $\sigma(u,v)=-\sigma(v,u)$ for all $\{u,v\}\in E$.  
We interpret $\sigma(u,v)=1$ as the edge $\{u,v\}$ being directed $u\to v$.    
\end{definition}
The \emph{adjacency matrix} $A\in\mathbb{R}^{|V|\times|V|}$ of a graph $G$ and the \emph{skew-adjacency matrix} $S(G^\sigma)\in\mathbb{R}^{|V|\times|V|}$ of $G^\sigma$ are defined in the main text (cf. \eqref{eq:skew-adjacency}). The eigenvalues of $S(G^\sigma)$ are invariant under the process of \emph{switching} \cite{zaslavsky13}, which can reverse the direction of all edges incident to a vertex.

\begin{definition}\label{def:switching}
Let $G^\sigma$ be an oriented graph. Given a switching function 
$\theta:V\to\{\pm1\}$, define a new orientation $\sigma_\theta$ by 
\[
\sigma_\theta(u,v)=\theta(u)\,\sigma(u,v)\,\theta(v)\,,
\]
for each $\{u,v\}\in E$. We call $G^\sigma$ and $G^{\sigma_\theta}$ \emph{switching equivalent}, and the equivalence classes under this relation are the \emph{switching classes}.
\end{definition}

If $D(\theta)=\mathrm{diag}(\theta(v_1),\ldots,\theta(v_{|V|}))$, then 
$S(G^{\sigma_\theta})=D(\theta)^{-1}\,S(G^\sigma)\,D(\theta)$, therefore switching preserves the spectrum of $S(G^\sigma)$ \cite{zaslavsky13}.

\section{Observables from anti-commutativity graphs}\label{A:arbitrary_graphs}

We now prove that it is always possible to construct observables that satisfy the anti-commutation relations defined by an arbitrary simple graph $G$. We will assume that the graph is \emph{connected}. If a graph is disconnected, its connected components---such as isolated vertices---share no incompatibility and can be treated independently.

\begin{proposition*}[Restatement of Prop.~\ref{prop:arb_graphs}]\label{prop:arb_graphs_restated}
For every simple connected graph $G = (V,E)$, there exists a set of binary observables whose anti-commutativity graph is $G$.  
The construction~\eqref{eq:clifford_monomial_main} uses at most $|E| + |V|$ Majorana operators, therefore the observables act on a Hilbert space of dimension at most $2^{\lfloor \frac{|V| + |E|}{2} \rfloor}$.
\end{proposition*}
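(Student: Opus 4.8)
The plan is to turn the sketch around \eqref{eq:clifford_monomial_main} into a complete construction and then verify, one property at a time, that it realises $G$. First I would fix an enumeration of $E$ and assign to each edge a distinct Majorana operator; for a vertex $v$ I take $\mathscr{I}(v)$ to be the labels of the edges incident to $v$, and if $\deg(v)$ is odd I append a single fresh auxiliary label, unique to $v$, so that $|\mathscr{I}(v)|$ is even. The candidate observable is then $A_{\mathscr{I}(v)} = i^{|\mathscr{I}(v)|/2}\prod_{j\in\mathscr{I}(v)}\Gamma_j$ as in \eqref{eq:clifford_monomial_main}.

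Next I would confirm that each $A_{\mathscr{I}(v)}$ is a genuine binary observable. Taking adjoints reverses the product and introduces the sign $(-1)^{\binom{|\mathscr{I}(v)|}{2}}$ from the anticommutation relations \eqref{eq:majoranas}, while squaring the monomial produces the same sign; a short parity check (using that $|\mathscr{I}(v)|$ is even) shows that the prefactor $i^{|\mathscr{I}(v)|/2}$ is chosen exactly so that $A_{\mathscr{I}(v)}^\dagger = A_{\mathscr{I}(v)}$ and $A_{\mathscr{I}(v)}^2 = \id$. This is precisely why the auxiliary labels are needed: evening out every support is what makes the phase choice consistent.

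The heart of the argument is the anti-commutation analysis. I would invoke the general sign rule for products of distinct Majorana operators \cite{bravyi10}, namely $A_{\mathscr{I}}A_{\mathscr{I}'} = (-1)^{|\mathscr{I}||\mathscr{I}'| - |\mathscr{I}\cap\mathscr{I}'|}A_{\mathscr{I}'}A_{\mathscr{I}}$. Because every support is even, the factor $|\mathscr{I}(v)||\mathscr{I}(v')|$ is even and the sign reduces to $(-1)^{|\mathscr{I}(v)\cap\mathscr{I}(v')|}$, recovering the even-degree case of \eqref{eq:car}: the pair anti-commutes iff the supports meet in an odd number of labels. It then remains to match intersections with adjacency. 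A label common to $\mathscr{I}(v)$ and $\mathscr{I}(v')$ must correspond to an edge incident to both $v$ and $v'$, which can only be the edge $\{v,v'\}$; auxiliary labels never contribute since each belongs to a single vertex. Hence $|\mathscr{I}(v)\cap\mathscr{I}(v')| = 1$ when $v\sim v'$ and $0$ otherwise, so the anti-commutativity graph is exactly $G$.

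Finally I would count resources and read off the dimension. There are $|E|$ edge labels and at most one auxiliary label per vertex, so the construction uses at most $|E|+|V|$ Majorana operators; invoking the irreducible representation of $\mathrm{Cl}_m(\mathbb{C})$ on $\mathbb{C}^{2^{\lfloor m/2\rfloor}}$ recalled in the preliminaries, together with monotonicity of $m\mapsto 2^{\lfloor m/2\rfloor}$, gives the bound $\dim\mathcal{H}\le 2^{\lfloor(|V|+|E|)/2\rfloor}$. I expect the main obstacle to be not any deep step but the parity bookkeeping: one must check simultaneously that making all supports even renders the commutation sign a function of $|\mathscr{I}(v)\cap\mathscr{I}(v')|$ alone, and that the auxiliary operators never corrupt either the observable relations or the intersection count. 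Connectedness is inessential to the construction itself and is assumed only so that disconnected graphs, whose components carry independent incompatibility, may be treated separately.
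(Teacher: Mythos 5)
Your construction, the parity bookkeeping for Hermiticity and unitarity, the reduction of the anti-commutation sign to $(-1)^{|\mathscr{I}(v)\cap\mathscr{I}(v')|}$ via the even-support case of \eqref{eq:car}, the identification of shared labels with the unique common edge, and the final resource count all coincide with the paper's proof in Appendix~\ref{A:arbitrary_graphs}; your version merely spells out the sign rule and the Hermiticity check in slightly more detail. The argument is correct and takes essentially the same route.
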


\begin{proof}
Let $G = (V,E)$ be a simple graph.  Assign a distinct label from $1,\dots,|E|$ to each edge. For each vertex $v\in V$, let $\mathscr{I}(v)$ denote the set of integer labels corresponding to edges incident to $v$. If $\deg(v)$ is odd, we append a unique auxiliary index $d_v \in \{|E|+1, \dots, |V|+|E|\}$ to $\mathscr{I}(v)$, disjoint from all edge labels, which ensures $\mathscr{I}(v)\subseteq \{1,\ldots,|E|\}\cup \{d_v\}$ has even cardinality. 

Define the observable
\begin{equation}\label{eq:clifford_monomial}
A_{\mathscr{I}(v)}:=i^{|\mathscr{I}(v)|/2}\prod_{j\in \mathscr{I}(v)}\!\!\Gamma_j\,,
\end{equation}
where $\Gamma_j$ are the Majorana operators in \eqref{eq:majoranas}, and the product is taken in increasing order of indices. The phase ensures that $A_{\mathscr{I}(v)}$ is Hermitian and satisfies $A_{\mathscr{I}(v)}^2 = \id$.

The anti-commutation relations any between degree--$k$ Majorana monomials $\mathscr{I}$ and $\mathscr{I}'$ are given in Eq. \eqref{eq:car}. In particular, if $|\mathscr{I}|$ and $|\mathscr{I}'|$ are even, then $A_\mathscr{I}$ and $A_{\mathscr{I}'}$ anti-commute if and only if $|\mathscr{I} \cap \mathscr{I}'|$ is odd. In the above construction we have $|\mathscr{I}(v)\cap \mathscr{I}(v')|=1$ if and only if $\{v,v'\}\in E$, and zero otherwise. It follows that $A_{\mathscr{I}(v)}$ and $A_{\mathscr{I}(v')}$ anti-commute if and only if $\{v,v'\} \in E$. Hence, the anti-commutativity graph of the set $\mathcal{A} = \{A_{\mathscr{I}(v)} \mid v \in V\}$ is $G$.

The total number of Majorana operators used is $|E|$ (one for each edge) plus at most $|V|$ (one auxiliary index per vertex if needed). Moreover, the construction can be implemented efficiently, using only the edge set and degree information of the graph.
\end{proof}

To illustrate the above construction, consider the path graph $P_4$ on four vertices, labelled $V=\{v_1,v_2,v_3,v_4\}$. It has three edges and two vertices of odd degree. We assign the labels $1$, $2$, $3$ to the edges $\{v_1,v_2\}$, $\{v_2,v_3\}$, $\{v_3,v_4\}$, respectively. The sets labelling the edges incident to each vertex are
\begin{equation}
\mathscr{I}(v_1) = \{1, 4\}\,, \quad \mathscr{I}(v_2) = \{1, 2\}\,, \quad \mathscr{I}(v_3) = \{2, 3\}\,, \quad \mathscr{I}(v_4) = \{3, 5\}\,,
\end{equation}
where we have added auxiliary labels $4$ and $5$ to $\mathscr{I}(v_1)$ and $\mathscr{I}(v_4)$, respectively, to ensure even cardinality. These define the observables 
\begin{equation}
A_{\mathscr{I}(v_1)}=i\Gamma_1\Gamma_4\,, \quad A_{\mathscr{I}(v_2)} = i\Gamma_1\Gamma_2\,, \quad A_{\mathscr{I}(v_3)}= i\Gamma_2\Gamma_3\,, \quad A_{\mathscr{I}(v_4)} = i\Gamma_3\Gamma_5\,,
\end{equation}
which are easily verified to have the anti-commutativity graph $G=P_4$.

\begin{remark}
The above construction, which at worst requires a Hilbert space of dimension $2^{\lfloor \frac{|V|+|E|}{2}\rfloor}$, may significantly exceed the minimal dimension $2^{r/2}$ required to realise a set of observables whose anti-commutativity graph is $G$, where $r$ denotes the rank of the adjacency matrix of $G$ over $\mathbb{F}_2$.
\end{remark}

To justify this remark, we associate with each graph $G$ a complex algebra $\mathfrak{C}(G)$ as described in the proof of Prop. \ref{prop:eta_invariance}. For an arbitrary graph $G$, the algebra $\mathfrak{C}(G)$ has complex vector space dimension $2^{|V|}$, and a basis is given by all monomials in the generators $x_v$, for $v\in V$. Since $\mathfrak{C}(G)$ is semisimple (see, e.g., \cite{khovanova08}), the Artin-Wedderburn theorem \cite{lam91} implies that it is isomorphic to a direct sum of $k$ copies of a matrix algebra $M_{d}(\mathbb C)$:
\begin{equation}\label{eq:isomorphism}
\mathfrak C(G)\;\cong\;\bigoplus_{i=1}^k M_{d}(\mathbb C)\,.
\end{equation}
Let $B$ denote the adjacency matrix of $G$, and let $r=\rank_{\mathbb F_2}(B)$ be the rank over the field $\mathbb{F}_2$. Since $B$ is symmetric with zero diagonal, it defines an alternating bilinear form over $\mathbb{F}_2$, and thus $r$ is necessarily even. It is known (see, e.g., \cite{khovanova08}) that the centre of $\mathfrak{C}(G)$ has dimension $2^{|V|-r}$. This fixes the number of simple components in the decomposition \eqref{eq:isomorphism} to $k=2^{|V|-r}$. Thus, by comparing dimensions, we obtain $2^{|V|} = k \cdot d^2$, which implies $d = 2^{r/2}$.

It follows that an irreducible representation of $\mathfrak{C}(G)$ has dimension $d = 2^{r/2}$, corresponding to the size of each simple matrix block in \eqref{eq:isomorphism}.
This provides a lower bound on the Hilbert space dimension required to realise the full set of (anti-)commutation relations encoded by the graph $G$, in agreement with the above remark. 

As an example, consider the complete graph $G = K_n$, whose graph algebra $\mathfrak{C}(K_n)$ coincides with the standard Clifford algebra $\mathrm{Cl}_n(\mathbb{C})$. The irreducible representations of this algebra have dimension $2^{\lfloor n/2 \rfloor}$. In particular, when $n$ is even, the adjacency matrix has rank $r = n$, and the algebra is simple, i.e., $\mathfrak{C}(K_n) \cong M_{2^{n/2}}(\mathbb{C})$. When $n$ is odd, we have $r = n - 1$, and $\mathfrak{C}(K_n)$ decomposes as a direct sum of two simple components, each isomorphic to $M_{2^{(n-1)/2}}(\mathbb{C})$.

\section{Proof of Theorem 1}\label{A:theorem_proof}

\begin{restate}[Restatement of Thm.~\ref{thm:lovasz}]\label{thm:lovasz_restated}
Let $G=(V,E)$ be the anti-commutativity graph of $\mathcal{A}=\{A_v : v\in V\}$, and let $\vartheta(G)$ be its Lovász number. Then
\begin{equation}\label{eq:lovasz_inc_bound}
\eta(G) \;\le\; \sqrt{\vartheta(G)/|V|}\,.
\end{equation}
\end{restate}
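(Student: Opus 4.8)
The plan is to combine two ingredients: an operator--norm upper bound on $\eta(G)$ coming from semidefinite duality of the joint--measurability program, and the relaxation of signings to real vectors together with the Hastings--O'Donnell inequality $\Psi(G)\le\vartheta(G)$. Concretely, I would first establish the norm bound \eqref{eq:eta_norm_bound}, then pass from $\pm1$ signings to the continuous maximisation defining $\Psi(G)$ in \eqref{eq:odonnell}, and finally bound $\Psi(G)$ by the Lovász number, reproducing the Hastings--O'Donnell estimate \cite{hastings21}.

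For the norm bound I would write joint measurability at noise level $\eta$ as a feasibility SDP in the parent effects $\E(b)$, indexed by $b\in\{\pm1\}^{|V|}$, subject to $\E(b)\succeq0$, $\sum_b\E(b)=\id$, and the marginal conditions $\sum_b b_v\E(b)=\eta A_v$ for all $v$. Rather than invoking strong duality, I would argue by direct weak duality: for any Hermitian $\{Y_v\}$ and $Z$ with $\sum_v\tr{Y_vA_v}=-1$ and $Z+\sum_v b_vY_v\succeq0$ for every $b$, one has
\begin{equation*}
\tr{Z}-\eta=\sum_{b}\tr{\Big(Z+\sum_{v} b_vY_v\Big)\E(b)}\ge0,
\end{equation*}
since $\eta=-\sum_v\tr{Y_v(\eta A_v)}=-\sum_b\tr{(\sum_v b_vY_v)\E(b)}$ and $\tr{Z}=\sum_b\tr{Z\E(b)}$. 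It then remains to exhibit a good certificate. Taking $Y_v=-\tfrac{1}{d|V|}A_v$ (so that $\sum_v\tr{Y_vA_v}=-\tfrac{1}{d|V|}\sum_v\tr{A_v^2}=-1$, using $A_v^2=\id$) and $Z=\tfrac{1}{d|V|}\big(\max_{b}\lambda_{\max}(\sum_v b_vA_v)\big)\id$ makes $Z+\sum_v b_vY_v\succeq0$ for all $b$, whence $\eta(G)\le\tr{Z}=|V|^{-1}\max_b\lambda_{\max}(\sum_v b_vA_v)\le|V|^{-1}\max_b\|\sum_v b_vA_v\|_\infty$, which is \eqref{eq:eta_norm_bound}.

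Next I would relax the signings. Every $a\in\{\pm1\}^{|V|}$ has $\ell^2$--norm $\sqrt{|V|}$, so the signings form a subset of the radius-$\sqrt{|V|}$ sphere over which \eqref{eq:odonnell2} maximises; hence $\max_{a\in\{\pm1\}^{|V|}}\|\sum_v a_vA_v\|_\infty^2\le|V|\,\Psi(G)$, and \eqref{eq:eta_norm_bound} gives $\eta(G)\le\sqrt{\Psi(G)/|V|}$. Finally I would prove $\Psi(G)\le\vartheta(G)$: for a unit vector $\psi$ and unit $b$, expand $\langle\psi|(\sum_v b_vA_v)^2|\psi\rangle=b^\top X b$ with $X_{vw}=\mathrm{Re}\,\langle\psi|A_vA_w|\psi\rangle$. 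Here $X_{vv}=1$; on an edge $A_vA_w=-A_wA_v$ forces $\langle\psi|A_vA_w|\psi\rangle$ to be purely imaginary, so $X_{vw}=0$; and $X\succeq0$ because it is the real part of the Gram matrix of the vectors $A_v|\psi\rangle$. Thus $\Psi(G)\le\max\{\lambda_{\max}(X):X\succeq0,\ X_{vv}=1,\ X_{vw}=0\ \forall\{v,w\}\in E\}$, and this maximum equals $\vartheta(G)$ by the standard equivalent eigenvalue formulation of the Lovász number dual to \eqref{Aeq:lov}. Chaining the two inequalities yields \eqref{eq:lovasz_inc_bound}.

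The main obstacle is the norm bound \eqref{eq:eta_norm_bound}, and specifically identifying the dual certificate that produces the correct prefactor $|V|^{-1}$. A naive approach---testing a single signing against the parent and using $\langle a,b\rangle\le|V|$ together with $\E(b)\succeq0$---only yields the wrong-direction estimate $\eta(G)\le|V|/\max_a\|\sum_v a_vA_v\|_\infty$ (giving the useless value $\sqrt{n}$ for $K_n$). The gain comes from the \emph{global} dual with $Y_v$ proportional to $A_v$, which couples all signings simultaneously; verifying its feasibility and carrying out the trace computation is the crux. A secondary technical point is the positive semidefiniteness of $X=\mathrm{Re}(\text{Gram})$ and the invocation of the precise max-eigenvalue SDP form of $\vartheta(G)$, both of which are routine once set up.
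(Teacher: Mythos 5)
Your proposal is correct and follows the same three-step skeleton as the paper: first the operator-norm bound \eqref{eq:eta_norm_bound}, then the relaxation of $\pm1$ signings to the $\ell^2$-sphere defining $\Psi(G)$ in \eqref{eq:odonnell}, and finally $\Psi(G)\le\vartheta(G)$. The two supporting lemmas, however, are established by different (though equivalent) arguments. For \eqref{eq:eta_norm_bound} the paper (Prop.~\ref{Aprop:inc_rob_bound}) computes $\eta=\frac{1}{d|V|}\sum_v\sum_{a_v}\tr{a_vA_v\M_v(a_v)}$ directly, substitutes the parent POVM, and applies $\tr{AB}\le\no{A}_\infty\tr{B}$; your weak-duality certificate $Y_v=-\tfrac{1}{d|V|}A_v$, $Z=\tfrac{1}{d|V|}\max_b\lambda_{\max}(\sum_v b_vA_v)\,\id$ is feasible (I checked the normalisation $\sum_v\tr{Y_vA_v}=-1$ and positivity for every $b$) and reproduces exactly the same chain of inequalities in dual form, so nothing is gained or lost. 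For the Lov\'asz step, the paper (Prop.~\ref{lem:hastings}, following Hastings--O'Donnell) works with degree-2 pseudoexpectations and the trace formulation \eqref{Aeq:lov}, whereas you use an actual top eigenvector $\psi$, the real part of the Gram matrix of $\{A_v\ket{\psi}\}$, and the equivalent characterisation $\vartheta(G)=\max\{\lambda_{\max}(X):X\succeq0,\ X_{vv}=1,\ X_{vw}=0\ \forall\{v,w\}\in E\}$. Your version is slightly more elementary (no SOS/pseudoexpectation machinery), at the cost of invoking that eigenvalue characterisation, which is standard (it follows from the paper's Definition~\ref{def:lovasz} by conjugating a feasible $X$ with the diagonal of its top eigenvector in one direction and normalising the diagonal in the other) but should be cited or proved rather than merely asserted as ``dual to \eqref{Aeq:lov}''---it is an equivalent reformulation, not the SDP dual. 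With that small point addressed, the argument is complete.
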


To prove Thm \ref{thm:lovasz}, we require the following two results. The first (Prop. \ref{Aprop:inc_rob_bound}) generalises a bound on the incompatibility robustness for generators of the Clifford algebra (i.e., pairwise anti-commuting operators), previously derived in \cite{kunjwal14}. We adapt the argument from \cite{kunjwal14,liang11,kunjwal13} to cover observables $\mathcal{A}=\{A_v\,|\,v\in V\}$ defined in Eq. \eqref{eq:binary_observables}. The second result (Prop. \ref{lem:hastings}) establishes $\Psi(G)\leq\vartheta(G)$, where $\vartheta(G)$ and $\Psi(G)$ are given in \eqref{Aeq:lov} and \eqref{Aeq:max_evalue}, respectively. The inequality was proved in \cite{hastings21} but we include the argument here for completeness.

\begin{proposition}\label{Aprop:inc_rob_bound}
Let $\mathcal{A}=\{A_v\,|\,v\in V\}$ be a set of binary observables, as defined in Eq. \eqref{eq:binary_observables}, with anti-commutativity graph $G$, and let $a:V\rightarrow \{\pm 1\}$ be function assigning the sign $a_v\in\{\pm 1\}$ to each vertex. Then
\begin{equation}
\eta(G)\leq\frac{1}{|V|}\max_{a:V\rightarrow \{\pm 1\}} \Big\|\sum_{v\in V}a_vA_v\Big\|_{\infty}\,,
\end{equation}
where $\no{\cdot}_{\infty}$ denotes the operator norm.
\end{proposition}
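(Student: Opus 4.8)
The plan is to certify the bound by a dual, operator-inequality argument: assume joint measurability at a given noise level $\eta$, and test the positive parent POVM against a single well-chosen family of positive operators. Concretely, I would start from the assumption that $\mathcal{A}^\eta$ is jointly measurable, so by \eqref{eq:postprocessing} there is a parent POVM $\{\E(\va)\}_{\va\in\{\pm1\}^V}$ with $\E(\va)\succeq0$ and $\sum_{\va}\E(\va)=\id$ whose marginals reproduce \eqref{eq:unsharp_povms}. Taking the difference of the two marginals of the $v$-th observable and using $\M_v^\eta(+)-\M_v^\eta(-)=\eta A_v$ yields the single structural identity the argument needs,
\begin{equation}
\sum_{\va}a_v\,\E(\va)=\eta A_v\qquad(v\in V).
\end{equation}

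Next I would introduce the maximal signed norm $M:=\max_{a:V\to\{\pm1\}}\no{\sum_{v\in V}a_vA_v}_\infty$ and observe that, for every fixed signing $\va\in\{\pm1\}^V$, the Hermitian operator $\sum_v a_vA_v$ obeys $\sum_v a_vA_v\preceq M\,\id$, so that $M\,\id-\sum_v a_vA_v\succeq0$. Pairing this positive operator with the positive POVM element $\E(\va)$ gives $\tr{\big(M\,\id-\sum_v a_vA_v\big)\E(\va)}\ge0$, since the trace of a product of positive semidefinite operators is nonnegative. Summing over all $\va$ then reduces the claim to routine trace bookkeeping: the normalization $\sum_{\va}\E(\va)=\id$ gives $M\sum_{\va}\tr{\E(\va)}=M\,d$ with $d=\dim\mathcal{H}$, while interchanging the sums and applying the identity above together with $A_v^2=\id$ gives
\begin{equation}
\sum_{\va}\tr{\Big(\sum_{v}a_vA_v\Big)\E(\va)}=\sum_{v}\tr{A_v\,\eta A_v}=\eta|V|\,d.
\end{equation}
Combining the two evaluations yields $M\,d-\eta|V|\,d\ge0$, hence $\eta\le M/|V|$. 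Since this holds for every $\eta$ at which $\mathcal{A}^\eta$ is jointly measurable—and the feasible set is the closed interval $[0,\eta(G)]$, so the maximal value $\eta(G)$ is itself feasible—the claimed bound $\eta(G)\le|V|^{-1}M$ follows.

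The main obstacle here is conceptual rather than computational: it lies in identifying the correct dual witness, namely recognising that one should test the parent POVM against the outcome-indexed family $M\,\id-\sum_v a_vA_v$ and exploit $M\,\id\succeq\sum_v a_vA_v$. Once this pairing is in place, the marginal and normalization identities together with $A_v^2=\id$ make the remaining steps mechanical. This mirrors the operator-inequality technique used for pairwise anti-commuting (Clifford) generators in \cite{kunjwal14,liang11,kunjwal13}, which is recovered here as the special case $G=K_{|V|}$, where $\no{\sum_v a_vA_v}_\infty=\sqrt{|V|}$ for every signing and the bound becomes $\eta\le|V|^{-1/2}$.
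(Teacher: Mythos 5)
Your proof is correct and follows essentially the same route as the paper's: both arguments test the parent POVM against the signed sums $\sum_v a_v A_v$, use the marginal identity together with $A_v^2=\id$ and the normalisation $\sum_{\va}\E(\va)=\id$, and close with the operator-norm inequality (your positivity of $\tr{(M\,\id-\sum_v a_vA_v)\E(\va)}$ is exactly the paper's $\tr{AB}\le\no{A}_\infty\tr{B}$ step). The only differences are presentational — you extract the identity $\sum_{\va}a_v\E(\va)=\eta A_v$ up front rather than expanding $\eta$ as an averaged trace first — and your closing remark that the feasible set is closed, so $\eta(G)$ is attained, is a point the paper leaves implicit.
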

\begin{proof}
Let $V=\{1,\ldots,n\}$ and let $\{\M_v^\eta\,|\,v\in V\}$ be a collection of binary POVMs
\begin{equation}
\M^{\eta}_v(a_v)=\tfrac{1}{2}(\id+\eta a_vA_v)\,,
\end{equation}
with outcomes $a_v\in\{\pm 1\}$. Since $A_v^2=\id$ for all $v\in V$, we have:
\begin{eqnarray}
\eta&=&\frac{1}{d}\sum_{a_v\in\{\pm 1\}}\tr{a_vA_v\M_v(a_v)}\\
&=&\frac{1}{d|V|}\sum_{v\in V}\sum_{a_v}\tr{a_vA_v\M_v(a_v)}\,.
\end{eqnarray}
Now, suppose $\{\M^{\eta}_v\,|\,v\in V\}$ are jointly measurable with parent POVM $\E$ such that
\begin{equation}
\M^{\eta}_v(a_v)=\sum_{a_1,\ldots,a_{v-1},a_{v+1},\ldots,a_{n}}\E(a_1,\ldots,a_{n})\,.
\end{equation}
Substituting this into the expression for $\eta$, we get:
\begin{eqnarray}
\eta&=&\frac{1}{d|V|}\sum_{v\in V}\sum_{a_v}\text{tr}\Big[a_vA_v\sum_{a_1,\ldots,a_{v-1},a_{v+1},\ldots,a_{n}} \E(a_1,\ldots,a_{n})\Big]\\
&=&\frac{1}{d|V|}\sum_{v\in V}\sum_{a_1,\ldots,a_n} \text{tr}\Big[a_vA_v\E(a_1,\ldots,a_n)\Big]\\
&=&\frac{1}{d|V|}\sum_{a_1,\ldots,a_{n}}\text{tr}\Big[\Big(\sum_{v\in V} a_vA_v\Big)\E(a_1,\ldots,a_{n})\Big]\\
&\leq &\frac{1}{d|V|}\sum_{a_1,\ldots,a_{n}}\Big\|\sum_{v\in V}a_vA_v\Big\|_{\infty}\tr{\E(a_1,\ldots,a_{n})}\\
&\leq&\frac{1}{d|V|}\max_{a_1,\ldots,a_{n}}\Big\|\sum_{v\in V}a_vA_v\Big\|_{\infty}\sum_{a_1,\ldots,a_{n}}\tr{\E(a_1,\ldots,a_{n})}\\
&=& \frac{1}{|V|}\max_{a_1,\ldots,a_{n}}\Big\|\sum_{v\in V}a_vA_v\Big\|_{\infty}
\end{eqnarray}
where the first inequality follows from $\tr{AB} \leq \no{A}_{\infty}\tr{B}$, with $A$ Hermitian and $B$ positive semidefinite, and the second inequality follows by taking the maximum norm over all choices of $a_v$. Finally, from the condition $\sum_{a_1\ldots,a_{n}} \E(a_1, \ldots,a_{n})=\id$ we obtain the required bound.
\end{proof}

Before reproducing the next result, we introduce some notation. Let $\mathfrak{C}(G)$ denote the resulting algebra of the anti-commutativity graph, whose generators are the operators associated with the vertices. For example if $G$ is the complete graph, then $\mathfrak{C}(G)$ is the standard Clifford algebra. An element $H \in \mathfrak{C}(G)$ is called a sum-of-squares (SOS) element of degree--$2k$ if it can be written as $H=\sum_iL_i^\dagger L_i$, where $L_i\in\mathfrak{C}(G)$ has degree at most $k$.

\begin{definition}
A degree--$2k$ pseudoexpectation $\widetilde{\mathbb{E}}$ of $\mathfrak{C}(G)$ is a linear functional $\widetilde{\mathbb{E}}[\cdot]:\mathfrak{C}(G)\mapsto \mathbb{C}$, with $\widetilde{\mathbb{E}}[\id]=1$ and $\widetilde{\mathbb{E}}[H]\geq 0$ for all degree--$2k$ SOS $H$. We can write  $\widetilde{\mathbb{E}}[H]=\tr{X H}$ for a Hermitian $X\in\mathfrak{C}(G)$.
\end{definition}

\begin{proposition}[Hastings and O'Donnell \cite{hastings21}]\label{lem:hastings}
Let $G$ be the anti-commutativity graph of the set of observables $\mathcal{A}=\{A_v\,|\,v\in V\}$, then
\begin{equation}
\Psi(G)\leq\vartheta(G)\,.
\end{equation}
\end{proposition}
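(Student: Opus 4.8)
The plan is to reduce the operator-norm quantity $\Psi(G)$ to a state-optimization problem and then exhibit an explicit feasible point of the semidefinite program \eqref{Aeq:lov} whose objective value is at least $\Psi(G)$. First I would rewrite $\Psi(G)$ in terms of expectation values. Since each $H(a)=\sum_v a_v A_v$ is Hermitian, $\no{H(a)}_{\infty}=\max_{\ket\psi}|\langle\psi|H(a)|\psi\rangle|$, so writing $m_v(\psi):=\langle\psi|A_v|\psi\rangle\in\mathbb{R}$ and maximizing the linear form $\sum_v a_v m_v(\psi)$ over $\no{a}_2=1$ by Cauchy--Schwarz gives
\begin{equation*}
\Psi(G)=\max_{\ket\psi}\,\sum_{v\in V}\langle\psi|A_v|\psi\rangle^2 .
\end{equation*}
This is the identity $\Psi(G)=r^2$ noted after Theorem~\ref{thm:lovasz}. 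I would then fix a maximizer $\ket\psi$ and set $m_v:=\langle\psi|A_v|\psi\rangle$, so that $\Psi(G)=\sum_v m_v^2$.

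Next I would build a feasible $X$ for \eqref{Aeq:lov} from the unit vectors $\ket{\phi_v}:=A_v\ket\psi$ (which are unit because $A_v^2=\id$). The key structural input is that their overlaps $\langle\phi_v|\phi_w\rangle=\langle\psi|A_vA_w|\psi\rangle$ are \emph{purely imaginary whenever} $\{v,w\}\in E$: anti-commutativity gives $\langle\psi|A_vA_w|\psi\rangle=-\langle\psi|A_wA_v|\psi\rangle=-\overline{\langle\psi|A_vA_w|\psi\rangle}$. Regarding $\mathcal H\cong\mathbb{C}^d$ as a real inner-product space with $\langle x,y\rangle_{\mathbb{R}}:=\mathrm{Re}\,\langle x|y\rangle$, the real Gram matrix $\Gamma_{vw}:=\mathrm{Re}\,\langle\psi|A_vA_w|\psi\rangle$ is real symmetric and positive semidefinite (it is the Gram matrix of the realified vectors $\hat\phi_v$), has unit diagonal, and vanishes on every edge of $G$. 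With $D:=\mathrm{diag}(m_v)/\sqrt{\Psi(G)}$ I would set $X:=D\,\Gamma\,D$.

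It then remains to verify feasibility and evaluate the objective. By construction $X\succeq 0$ (a congruence of the positive semidefinite $\Gamma$), $X_{vw}=0$ for $\{v,w\}\in E$ since $\Gamma_{vw}=0$ there, and $\tr{X}=\sum_v m_v^2/\Psi(G)=1$, so $X$ is admissible in \eqref{Aeq:lov}. For the objective, $\tr{X\mathbb{J}}=\sum_{vw}X_{vw}=\Psi(G)^{-1}\big\|\sum_v m_v\hat\phi_v\big\|_{\mathbb{R}}^2$. Taking $\ket\psi$ itself as the handle gives $\langle\psi,\hat\phi_v\rangle_{\mathbb{R}}=m_v$, whence Cauchy--Schwarz yields $\big\|\sum_v m_v\hat\phi_v\big\|_{\mathbb{R}}\ge\big|\langle\psi,\sum_v m_v\hat\phi_v\rangle_{\mathbb{R}}\big|=\sum_v m_v^2=\Psi(G)$. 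Therefore $\tr{X\mathbb{J}}\ge\Psi(G)$, and since $X$ is feasible, $\vartheta(G)\ge\Psi(G)$.

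The main obstacle — and the only genuinely non-classical ingredient — is the passage from the complex Hilbert space to the real-valued SDP \eqref{Aeq:lov}: one must check that taking real parts of the Hermitian Gram matrix preserves positive semidefiniteness while annihilating \emph{exactly} the edge entries, which is precisely where the hypotheses $A_v=A_v^{*}$, $A_v^2=\id$, and anti-commutation on edges are used. Everything else (the two Cauchy--Schwarz reductions and the congruence argument) is routine. An equivalent route I would keep in reserve is to recognize the $\hat\phi_v$ as an orthonormal representation of the complement $\bar G$ with handle $\ket\psi$ and invoke the standard representation characterization $\vartheta(G)=\max\sum_v\langle c,\hat\phi_v\rangle_{\mathbb{R}}^2$; this gives the same bound but relies on a characterization of $\vartheta(G)$ not stated in the excerpt, so I prefer the explicit feasible-point construction above.
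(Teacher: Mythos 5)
Your proof is correct and follows essentially the same route as the paper: the feasible point you build, $X=D\Gamma D$ with $\Gamma_{vw}=\mathrm{Re}\,\langle\psi|A_vA_w|\psi\rangle$, is exactly the paper's matrix $\hat X=\mathrm{Re}\,\bigl(a_va_{v'}\widetilde{\mathbb{E}}[A_vA_{v'}]\bigr)$ once the degree-2 pseudoexpectation is realised by the optimal state $\ket\psi$, and both arguments hinge on the same observation that anti-commutation plus Hermiticity forces the edge entries of the moment matrix to be purely imaginary, so taking real parts preserves positivity and unit trace while enforcing the SDP constraints. The only differences are presentational: you first reduce $\Psi(G)$ to $\max_\psi\sum_v\langle\psi|A_v|\psi\rangle^2$ and lower-bound the objective by a Cauchy--Schwarz argument with the handle $\ket\psi$, whereas the paper phrases the construction for an arbitrary pseudoexpectation and evaluates the objective as $\widetilde{\mathbb{E}}[H^\dagger H]$ directly.
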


\begin{proof}
Let $|V|=n$, and let $\widetilde{\mathbb{E}}$ be any degree-2 pseudoexpectation on $\mathfrak{C}(G)$, which can be treated as a positive semidefinite (PSD) matrix in $\mathbb{C}^{n \times n}$. Consider an arbitrary unit-norm vector $\bm{a} = (a_1, \dots, a_{n}) \in \mathbb{R}^n$ and define the observable $H = \sum_{v\in V} a_v A_v$. Using the pseudoexpectation, define the matrix $X$ with entries
\begin{equation}
X_{vv'} = a_v a_{v'} \widetilde{\mathbb{E}}[A_v A_{v'}].
\end{equation}
Since $\widetilde{\mathbb{E}}$ is a pseudoexpectation, it satisfies the positivity condition $\widetilde{\mathbb{E}}[H^{\dagger}H] \geq 0$, which ensures that $X$ is PSD. From the properties of $\widetilde{\mathbb{E}}$, the matrix $X$ satisfies $\tr{X} = 1$ and $\tr{X \mathbb{J}} = \widetilde{\mathbb{E}}[H^{\dagger} H]$. Moreover, if $\{v,v'\} \in E$, then the anti-commutation relation $A_v A_{v'} = -A_{v'} A_v$ implies that $X_{vv'}$ is purely imaginary. 

To compare with $\vartheta(G)$, define $\hat{X} := \operatorname{Re}(X)$, which is PSD and satisfies $\text{tr}[\hat{X}] = 1$. If $\{v,v'\} \in E$, then $X_{vv'}$ is purely imaginary, therefore $\hat{X}_{vv'} = 0$. Thus, $\hat{X}$ satisfies the constraints in the definition of $\vartheta(G)$. Since $\tr{X \mathbb{J}} = \widetilde{\mathbb{E}}[H^{\dagger} H]$, we obtain $\widetilde{\mathbb{E}}[H^{\dagger} H] \leq \vartheta(G)$, therefore $\Psi(G) = \max_{\bm{a}} \no{H}_{\infty}^2 \leq \vartheta(G)$, completing the proof.
\end{proof}

\section{Measurement symmetries of line graphs}\label{A:sym}

In this appendix we show that the measurements associated with line graphs $L(G)$ satisfy additional symmetry properties when $G$ is edge-transitive (cf. Lemma~\ref{lem:uniform_rigid}). These symmetries are used to prove the lower bound in Theorem~\ref{thm:spectra}, as shown in the subsequent appendix. Throughout this section we adopt the assemblage/bundle notation of~\cite{nguyen20}.

\subsection*{Uniform and rigid measurements}

Consider a family of measurements indexed by a set $M$. Let $\Omega$ denote the set of all outcomes of all measurements, and define the map $\pi:\Omega \to M$ which sends each outcome to the
measurement it belongs. The tuple $(\Omega,\pi,M)$ is called a bundle \cite{nguyen20}, and the outcome set of
measurement $x\in M$ is known as a fibre over $x$, written as $\pi^{-1}(x)=\{ z\in\Omega : \pi(z)=x \}$. Outcomes of a measurement $x$ can be written as $z=(a| x)$, where $a$ is the outcome value. A \emph{measurement assemblage} on a Hilbert space $\mathcal{H}$
is a family of positive operators $\{E_z\}_{z\in\Omega} \subset \mathcal{B}(\mathcal{H})$
such that $\sum_{z\in\pi^{-1}(x)} E_z = \id$ for all $x\in M$. Thus each fibre $\pi^{-1}(x)$ defines a POVM.

A \emph{symmetry} of the assemblage is a group $\mathcal{G}$ acting on the bundle
$(\Omega,\pi,M)$ and on $\mathcal{H}$ in the following way \cite{nguyen20}.  First, each
$g\in\mathcal{G}$ induces a permutation $z\mapsto g\cdot z$ such that
$\pi(g\cdot z)=g\cdot\pi(z)$.  Second, there is a unitary representation
$U:\mathcal{G}\to\mathrm{U}(\mathcal{H})$, $g\mapsto U_g$, satisfying
\begin{equation}
E_z
=U_g\,E_{g^{-1}\cdot z}\,U_g^\dagger
\qquad\forall\,z\in\Omega,\ g\in\mathcal{G}.
\end{equation}

\begin{definition}
An assemblage is \emph{uniform} if $\mathcal{G}$ acts transitively
on $\Omega$, i.e., for any $z_1,z_2\in\Omega$ there exists
$g\in\mathcal{G}$ such that $g\cdot z_1 = z_2$.
\end{definition}

For $z\in\Omega$, the \emph{stabiliser} of $z$ is
$\mathcal{G}_z=\{g\in\mathcal{G}:g\cdot z=z\}$.  For each $g\in\mathcal{G}_z$ we have
$E_z=U_gE_zU_g^\dagger$, therefore $E_z$ commutes with $U(\mathcal{G}_z)$.

\begin{definition}
The assemblage is \emph{rigid} if the commutant satisfies
\begin{equation}
\mathrm{Comm}(U(\mathcal{G}_z))
:=\{X\in\mathcal{B}(\mathcal{H}):U_g X U_g^\dagger=X\ \forall g\in\mathcal{G}_z\}=\mathrm{span}\{E_z,\ \id-E_z\}\,.
\end{equation}
Equivalently, the representation $U$ restricted to $\mathcal{G}_z$ decomposes into two irreducible subrepresentations.
\end{definition}

\subsection*{Uniform and rigid line graph measurements}

Let $G=(V,E)$ be a simple graph. Assign one Majorana operator $\Gamma_v$ to each
$v\in V$, acting on a Hilbert space $\mathcal{H}$ and satisfying the relations
\eqref{eq:majoranas}. For each edge $e=\{u,v\}\in E$ define the quadratic observable
$A_e := i\,\Gamma_u\Gamma_v$, and the corresponding binary effects
\begin{equation}\label{Aeq:effects}
E_{(+\mid e)}= \tfrac12(\id + A_e),\qquad
E_{(-\mid e)} = \tfrac12(\id - A_e).
\end{equation}
The observables $A_e$ and $A_f$ anti-commute if and only if $e$ and $f$ share exactly
one vertex, therefore the anti-commutativity graph of $\{A_e\,|\,e\in E\}$ is the line graph
$L(G)$.

We now regard $\{E_{(\pm| e)}\}_{e\in E}$ as a measurement assemblage in the sense
of~\cite{nguyen20}. The outcome set is
\[
\Omega \;=\; \{(+| e),(-| e) : e\in E\},
\]
and the bundle map $\pi:\Omega\to E$ is defined by $\pi(\pm| e)=e$, so that each
fibre $\pi^{-1}(e)=\{(+| e),(-| e)\}$ specifies the two-outcome POVM associated
with the observable $A_e$.

Any graph automorphism $\tau\in\mathrm{Aut}(G)$ induces a permutation matrix
$P_\tau\in O(n)$ acting on the Majorana vector
$\Gamma=(\Gamma_1,\ldots,\Gamma_n)^\top$ by
$\Gamma' = P_\tau \Gamma$ such that $\Gamma'_j = \Gamma_{\tau(j)}$.
A standard result in fermionic linear optics states that every real orthogonal
transformation of Majorana operators is implemented by a fermionic Gaussian unitary \cite{bravyi02}. Hence there exists a Gaussian unitary
$U_\tau$ such that
\begin{equation}
U_\tau\,\Gamma_j\,U_\tau^\dagger = \Gamma_{\tau(j)}\,,
\end{equation}
for every $j\in [n]$. For the quadratic observables $A_e = i\Gamma_u\Gamma_v$ this implies
\begin{equation}\label{eq:group_action}
U_\tau\,A_e\,U_\tau^\dagger = \pm A_{\tau(e)},
\end{equation}
where the sign arises if a reordering of the Majoranas in $A_{\tau(e)}$ is needed.

Let $T_v(X) := \Gamma_vX\Gamma_v^\dagger$ denote local Majorana flips, which act on the
quadratic observables as
\begin{equation}\label{eq:flip}
T_v(A_e)=
\begin{cases}
-\,A_e,& v\in e,\\[2pt]
\ \,A_e,& v\notin e.
\end{cases}
\end{equation}
The flips $\{T_v\}_{v\in V}$ generate an elementary abelian group
$\mathcal{F} := \langle\, T_v : v\in V \,\rangle \cong (\mathbb{Z}_2)^{|V|}$, and
$\mathrm{Aut}(G)$ acts on $\mathcal{F}$ by permuting the generators.  
Thus, the full symmetry group is the semidirect product
\begin{equation}
\mathcal{G} \;\cong\; \mathrm{Aut}(G) \ltimes (\mathbb{Z}_2)^{|V|},
\end{equation}
with unitary representation
\begin{equation}\label{eq:group_urep}
U(\mathcal{G})
= \langle\, U_\tau,\ T_v : \tau\in\mathrm{Aut}(G),\ v\in V \,\rangle.
\end{equation}
The sign in~\eqref{eq:group_action} can always be absorbed by
composing $U_\tau$ with suitable flips $T_v$, and therefore do not affect the induced action on the measurement assemblage.

\begin{lemma}\label{lem:uniform_rigid}
If $G$ is edge-transitive, then the assemblage
$\{E_{(\pm\mid e)}: e\in E(G)\}$ (whose anti-commutativity graph is $L(G)$) is uniform and rigid under the action of $U(\mathcal{G})$ defined in \eqref{eq:group_urep}.
\end{lemma}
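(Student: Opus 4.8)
The plan is to verify the two defining properties in turn, handling uniformity first: once uniformity is known, it suffices to check the commutant condition at a single outcome, since the stabilisers of different outcomes are conjugate and conjugation by $U_g$ carries $\mathrm{span}\{E_z,\id-E_z\}$ to $\mathrm{span}\{E_{g\cdot z},\id-E_{g\cdot z}\}$. Throughout I use that $g\in\mathcal{G}$ acts on outcomes by $U_g E_w U_g^\dagger = E_{g\cdot w}$, that the flips act by \eqref{eq:flip}, and that the automorphism unitaries act by \eqref{eq:group_action}.

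For uniformity I would show $\mathcal{G}$ acts transitively on $\Omega=\{(\pm\mid e):e\in E\}$ in two moves. First, for a fixed edge $e=\{u,v\}$, the flip $T_u$ sends $A_e\mapsto -A_e$ by \eqref{eq:flip}, hence interchanges $E_{(+\mid e)}$ and $E_{(-\mid e)}$; this swaps the two outcomes inside the fibre $\pi^{-1}(e)$. Second, given any two edges $e,e'$, edge-transitivity of $G$ supplies $\tau\in\mathrm{Aut}(G)$ with $\tau(e)=e'$, and by \eqref{eq:group_action} the Gaussian unitary $U_\tau$ maps $A_e\mapsto\pm A_{e'}$, carrying $\pi^{-1}(e)$ onto $\pi^{-1}(e')$ up to a sign that the first move corrects. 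Composing these sends any $(\varepsilon\mid e)$ to any $(\varepsilon'\mid e')$, so the action is transitive and the assemblage is uniform.

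For rigidity, fix $z=(+\mid e_0)$ with $e_0=\{u_0,v_0\}$ and write $E_z=\tfrac12(\id+A_{e_0})$, so the target subspace is $\mathrm{span}\{E_z,\id-E_z\}=\mathrm{span}\{\id,A_{e_0}\}$. The inclusion $\mathrm{span}\{\id,A_{e_0}\}\subseteq\mathrm{Comm}(U(\mathcal{G}_z))$ is immediate, since every $g\in\mathcal{G}_z$ fixes $E_z$ and hence commutes with $A_{e_0}=2E_z-\id$. For the reverse inclusion I would not need the full stabiliser but only its abelian subgroup of flips: by \eqref{eq:flip}, $T_v\in\mathcal{G}_z$ for every $v\notin e_0$, while $T_{u_0}T_{v_0}\in\mathcal{G}_z$ because its two sign changes cancel, so these generate a subgroup $\mathcal{F}_z\cong(\mathbb{Z}_2)^{|V|-1}$. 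Since $\mathcal{F}_z\le\mathcal{G}_z$, it is enough to show $\mathrm{Comm}(U(\mathcal{F}_z))=\mathrm{span}\{\id,A_{e_0}\}$; and because the flips act by conjugation by the $\Gamma_v$, this commutant consists exactly of the operators commuting with every $\Gamma_v$, $v\notin e_0$, and with $\Gamma_{u_0}\Gamma_{v_0}$ (equivalently with $A_{e_0}$).

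The remaining step, which I expect to be the main obstacle, is to show this last commutant is exactly two-dimensional. I would argue representation-theoretically. The operators $\{\Gamma_v:v\notin e_0\}$ generate a copy of $\mathrm{Cl}_{|V|-2}(\mathbb{C})$, and $A_{e_0}$ is a Hermitian involution commuting with all of them with $\mathrm{tr}(A_{e_0})=0$. When $|V|$ is even this Clifford subalgebra is a full matrix algebra acting on $\mathcal{H}$ with multiplicity two, so its commutant is $\cong M_2(\mathbb{C})$; inside this $M_2(\mathbb{C})$ the tracelessness of $A_{e_0}$ forces it to be a nontrivial involution, whose commutant within $M_2(\mathbb{C})$ is the two-dimensional algebra $\mathrm{span}\{\id,A_{e_0}\}$, and a double-commutant/intersection argument then gives the claim. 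The odd case is handled identically once one notes that the volume element $\Gamma_1\cdots\Gamma_{|V|}$ acts as a scalar in the irreducible representation, which identifies the two would-be extra monomial commutant elements with $\id$ and $A_{e_0}$. Combining both inclusions yields $\mathrm{Comm}(U(\mathcal{G}_z))=\mathrm{span}\{E_z,\id-E_z\}$, so the assemblage is rigid and, by the opening reduction, rigid at every outcome.
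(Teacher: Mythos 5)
Your proposal is correct, and its skeleton matches the paper's: uniformity is argued identically (edge-transitivity plus Gaussian unitaries $U_\tau$ to move between fibres, with Majorana flips correcting signs and exchanging the two effects within a fibre), and rigidity is reduced in both cases to computing the commutant of the same abelian flip subgroup $\langle T_w\ (w\notin e_0),\ T_{u_0}T_{v_0}\rangle$ of the stabiliser. Where you genuinely diverge is in how that commutant is computed. The paper expands an arbitrary operator in the Majorana monomial basis and runs an explicit parity case analysis (its Cases A and B), finding the four surviving monomials $\id$, $i\Gamma_{u_0}\Gamma_{v_0}$, and two ``volume-like'' monomials, which are then eliminated either by $T_{u_0}T_{v_0}$ ($|V|$ even) or by identifying them with scalars via the volume element in a fixed parity sector ($|V|$ odd). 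You instead invoke the representation theory of $\mathrm{Cl}_{|V|-2}(\mathbb{C})$: for $|V|$ even this subalgebra is a full matrix algebra acting with multiplicity two, so its commutant is an $M_2(\mathbb{C})$ factor, and the traceless involution $A_{e_0}$ cuts this down to the two-dimensional span $\{\id,A_{e_0}\}$; the odd case is absorbed by the same volume-element observation the paper uses. Both arguments are sound and land on the same answer; the paper's is more elementary and self-contained (no double-commutant theorem, no multiplicity counting), while yours is shorter and makes the structural reason for two-dimensionality more transparent. Your opening reduction --- that uniformity lets one verify rigidity at a single outcome because stabilisers of outcomes in one orbit are conjugate --- is valid and is left implicit in the paper, which simply fixes an arbitrary outcome. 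If you write this up, the one step deserving a full sentence rather than a clause is the multiplicity-two claim (i.e.\ that $\mathcal{H}$ restricted to $\mathrm{Cl}_{|V|-2}(\mathbb{C})$ decomposes as two copies of its unique irreducible module), since that is what guarantees the commutant is exactly $M_2(\mathbb{C})$ and not larger.
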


\begin{proof}
We first show the assemblage is uniform. Since $G$ is edge-transitive, $\mathrm{Aut}(G)$
acts transitively on $E$. For each $\tau\in\mathrm{Aut}(G)$, the unitary $U_\tau$ implements
$U_\tau\,\Gamma_v\,U_\tau^\dagger=\pm\Gamma_{\tau(v)}$, hence
$U_\tau\,A_{\{u,v\}}\,U_\tau^\dagger=\pm A_{\tau(e)}$ by \eqref{eq:group_action}.
If the sign is negative, a flip $T_w$ for $w\in e$ can be applied to obtain
$A_{\tau(e)}$ (cf.\ \eqref{eq:flip}). Thus, any edge $e$ can be mapped to any
other edge $f$ by an element of $U(\mathcal{G})$. For a fixed edge $e=\{u,v\}$ the flips
$T_u,T_v$ satisfy $T_u(A_e)=-A_e=T_v(A_e)$ and therefore exchange the effects
$E_{(+|e)}\leftrightarrow E_{(-|e)}$, as defined in \eqref{Aeq:effects}. Combining these actions, any outcome
$(\pm|e)$ can be mapped to any other $(\pm|f)$, proving uniformity.

To show rigid symmetry, fix $z=(+|e)$ with $e=\{u,v\}$, and let $\mathcal{G}_z$ be
its stabiliser. Consider the subgroup
\begin{equation}
H \;:=\; \langle\, T_w,\ T_uT_v \,:\, w\in V\setminus\{u,v\} \rangle \;\subseteq\; U(\mathcal{G}_z),
\end{equation}
with $T_w$ defined in \eqref{eq:flip} and $T_uT_v=\text{Ad}(\Gamma_u\Gamma_v)=\text{Ad}(-iA_e)$.  Thus $H$ stabilises $z$ and $\mathrm{Comm}(U(\mathcal{G}_z)) \subseteq\
\{ X : T X T^\dagger = X \text{ for all } T\in H \}$.

Consider an arbitrary operator $X=\sum_{\mathscr{I}\subseteq V}c_\mathscr{I}\,\Gamma_\mathscr{I}$, with
$\Gamma_\mathscr{I}:=\prod_{x\in \mathscr{I}}\Gamma_x$. For $w\in V\setminus\{u,v\}$, the relation
$T_w X T_w = X$ is equivalent to $[X,\Gamma_w]=0$. For a monomial $\Gamma_\mathscr{I}$,
\begin{equation}\label{eq:comm-criterion}
[\Gamma_{\mathscr{I}},\Gamma_w]=0
\quad\Longleftrightarrow\quad
\begin{cases}
w\notin \mathscr{I} \ \text{and } |\mathscr{I}| \text{ is even},\\
w\in \mathscr{I}    \ \text{and } |\mathscr{I}| \text{ is odd}.
\end{cases}
\end{equation}

Fix $\mathscr{I}$ with $c_{\mathscr{I}}\neq 0$ and impose \eqref{eq:comm-criterion} for
\emph{all} $w\in R:=V\setminus \{u,v\}$. Since $|\mathscr{I}|$ is a fixed integer, there cannot exist
$w,w'\in R$ with $w\in\mathscr{I}$ and $w'\notin\mathscr{I}$. Hence, one of the following two mutually exclusive cases occurs:

\smallskip
\noindent\emph{Case A:} $w\notin\mathscr{I}$ for all $w\in R$. Then $\mathscr{I}\subseteq\{u,v\}$.
By \eqref{eq:comm-criterion} (with any $w\in R$), $|\mathscr{I}|$ must be even. Therefore
$\mathscr{I}=\varnothing$ or $\mathscr{I}=\{u,v\}$, giving the monomials
$\id$ and $i\Gamma_u\Gamma_v$.

\smallskip
\noindent\emph{Case B:} $w\in\mathscr{I}$ for all $w\in R$. Then $R\subseteq\mathscr{I}$, with $\mathscr{I}=R\cup S$ and $S\subseteq\{u,v\}$. Now $|\mathscr{I}|=|R|+|S|$, and
\eqref{eq:comm-criterion} (with any $w\in R$) requires $|\mathscr{I}|$ to be odd. There are two subcases:

\begin{itemize}
\item[(i)] If $|V|$ is even, then $|R|=|V|-2$ is even, therefore $|S|$ is odd. Hence $|S|=1$ and
$S=\{u\}$ or $S=\{v\}$, yielding the monomials
$\Gamma_u\Gamma_{R}$ and $\Gamma_v\Gamma_{R}$, where $\Gamma_R=\prod_{x\in R}\Gamma_x$.

\item[(ii)] If $|V|$ is odd, then $|R|$ is odd, therefore $|S|$ is even. Hence $|S|\in\{0,2\}$,
giving the monomials $\Gamma_{R}$ (for $S=\varnothing$) and
$i\Gamma_u\Gamma_v\Gamma_{R}$ (for $S=\{u,v\}$).
\end{itemize}
Thus, by combining both cases, we have
\begin{equation}
X\in
\begin{cases}
\mathrm{span}\{\id,\ i\Gamma_u\Gamma_v,\ \Gamma_u\Gamma_{R},\ \Gamma_v\Gamma_{R}\}, & |V|\ \text{even},\\[2pt]
\mathrm{span}\{\id,\ i\Gamma_u\Gamma_v,\ \Gamma_{R},\ \Gamma_u\Gamma_v\Gamma_{R}\}, & |V|\ \text{odd}.
\end{cases}
\end{equation}

For $|V|$ even, $T_uT_v=\operatorname{Ad}(\Gamma_u\Gamma_v)$ flips the sign of both
$\Gamma_u\Gamma_{R}$ and $\Gamma_v\Gamma_{R}$, therefore invariance
forces their coefficients to vanish, leaving a two-dimensional span. 

For $|V|$ odd, we consider a fixed irreducible (parity) sector. Let $W:=\prod_{x\in V}\Gamma_x$.
Then $W$ commutes with each $\Gamma_k$, therefore $W=\omega\,\id$ with $|\omega|=1$ on that sector.
Writing $W=\varepsilon\,\Gamma_u\Gamma_v\,\Gamma_{R}$ with $\varepsilon\in\{\pm1\}$ (from the chosen ordering), we obtain $\Gamma_{R}=\varepsilon\,\omega\,(\Gamma_u\Gamma_v)^{-1}
= -\,\varepsilon\,\omega\,\Gamma_u\Gamma_v$ and 
$\Gamma_u\Gamma_v\Gamma_{R}
= \varepsilon\,\omega\,\id$. Thus, in the odd case the span
$\mathrm{span}\{\id,\,i\Gamma_u\Gamma_v,\,\Gamma_{R},\,\Gamma_u\Gamma_v\Gamma_{R}\}$
reduces to $\mathrm{span}\{\id,\,i\Gamma_u\Gamma_v\}$. In either case only $\id$ and
$i\Gamma_u\Gamma_v=A_e$ remain, therefore
$\mathrm{Comm}(U(\mathcal{G}_z))=\mathrm{span}\{\id,A_e\}$, and the assemblage is rigid.

\end{proof}

\section{Proof of Theorem 2}\label{sec:A:theorem_proof2}

\begin{restate}[Restatement of Thm.~\ref{thm:spectra}]
\label{thm:spectra_restated}
Let $L(G)$ be the line graph of a graph $G=(V,E)$ with maximum skew-energy $\mathcal{E}_s^{\max}(G)$. Then
\begin{equation}\label{Aeq:skew_bound_eigs}
   \eta(L(G))\;\le\; \frac{1}{2|E|}\mathcal{E}_s^{\max}(G)\,.
\end{equation}
If $G$ is edge-transitive then equality holds.
\end{restate}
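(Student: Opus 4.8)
The plan is to prove the inequality~\eqref{Aeq:skew_bound_eigs} by feeding the explicit quadratic realisation of the line-graph observables into the norm bound of Prop.~\ref{Aprop:inc_rob_bound}, and to obtain the edge-transitive equality from the symmetry furnished by Lem.~\ref{lem:uniform_rigid}. By Prop.~\ref{prop:eta_invariance} I may use the realisation $A_e=i\Gamma_u\Gamma_v$ of the line-graph observables guaranteed by Prop.~\ref{prop:quadratic-line}. Since the vertices of $L(G)$ are the edges of $G$, Prop.~\ref{Aprop:inc_rob_bound} reads $\eta(L(G))\le |E|^{-1}\max_{a:E\to\{\pm1\}}\|\sum_{e}a_e A_e\|_\infty$. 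The key step is the observation that, after fixing an ordering of the endpoints of each edge, a signing $a:E\to\{\pm1\}$ corresponds bijectively to an orientation $\sigma$ of $G$ via $S_{uv}(G^\sigma)=a_e$ (cf.~\eqref{eq:skew-adjacency}), under which
\[
\sum_{e\in E} a_e A_e \;=\; \tfrac{i}{2}\sum_{u,v\in V} S_{uv}(G^\sigma)\,\Gamma_u\Gamma_v
\]
is a quadratic (free-fermion) Majorana Hamiltonian.

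I would then diagonalise this Hamiltonian. A real orthogonal transformation brings the skew-symmetric $S(G^\sigma)$ to block form with $2\times 2$ blocks of entries $\pm\mu_k$; because orthogonal maps of the Majorana vector preserve~\eqref{eq:majoranas}, this is a Bogoliubov transformation carrying $\sum_e a_e A_e$ to $\sum_k \mu_k\,(i\tilde\Gamma_{2k-1}\tilde\Gamma_{2k})$, a sum of commuting operators with eigenvalues $\pm1$. Hence $\|\sum_e a_e A_e\|_\infty=\sum_k\mu_k$, while the eigenvalues of $S(G^\sigma)$ are $\pm i\mu_k$, so $\mathcal{E}_s(G^\sigma)=2\sum_k\mu_k$. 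Thus $\|\sum_e a_e A_e\|_\infty=\tfrac12\mathcal{E}_s(G^\sigma)$ (the odd-$|V|$ case merely contributes a spectator zero mode), and maximising over signings---equivalently, over orientations---gives~\eqref{Aeq:skew_bound_eigs}.

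For the equality when $G$ is edge-transitive, the plan is to show the signing witness behind Prop.~\ref{Aprop:inc_rob_bound} is dual-optimal, so the bound is attained. By Lem.~\ref{lem:uniform_rigid} the assemblage $\{E_{(\pm\mid e)}\}$ is uniform and rigid under $U(\mathcal{G})$ with $\mathcal{G}=\mathrm{Aut}(G)\ltimes(\mathbb{Z}_2)^{|V|}$. Following the symmetrisation scheme of~\cite{nguyen20}, I would write the robustness as an SDP and twirl its dual over $\mathcal{G}$: covariance lets me restrict to a $\mathcal{G}$-invariant optimal dual solution, uniformity merges all outcomes into a single orbit, and rigidity forces the invariant operator attached to each outcome $z$ into $\mathrm{span}\{E_z,\id-E_z\}=\mathrm{span}\{\id,A_e\}$. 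The dual therefore collapses to a two-parameter program, whose optimum I would compute and match to $\tfrac{1}{2|E|}\mathcal{E}_s^{\max}(G)$, the optimal parameters realising the maximal-skew-energy orientation $\sigma^\ast$.

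The hard part is this last equality. The delicate points are to confirm that the two-parameter reduction preserves precisely the direction $\sum_e a^\ast_e A_e$ required to saturate~\eqref{eq:eta_norm_bound}, and that the associated symmetric solution is genuinely feasible rather than merely providing a bound. I expect the feasibility check to reduce, through the free-fermion diagonalisation above, to controlling the extreme eigenvalues $\pm\tfrac12\mathcal{E}_s^{\max}(G)$ of $\sum_e a^\ast_e A_e$; edge-transitivity is exactly what ensures that a single uniform weight over one outcome orbit suffices, making the reduced problem genuinely two-dimensional and the maximal-skew-energy orientation optimal. Verifying that no positivity obstruction blocks attainment of the bound is the crux of the argument.
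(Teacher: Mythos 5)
Your proposal is correct and follows essentially the same route as the paper: the upper bound via Prop.~\ref{Aprop:inc_rob_bound} applied to the quadratic realisation $A_e=i\Gamma_u\Gamma_v$, the identification of signings with orientations so that $\sum_e a_e A_e=\tfrac{i}{2}\Gamma^\top S(G^\sigma)\Gamma$, and the free-fermion (orthogonal/Bogoliubov) diagonalisation giving $\|\sum_e a_e A_e\|_\infty=\tfrac12\mathcal{E}_s(G^\sigma)$. For the edge-transitive equality the paper likewise invokes Lem.~\ref{lem:uniform_rigid} together with the tightness result of~\cite{nguyen20} for uniform and rigid assemblages; the two-parameter dual reduction you outline is precisely what that citation supplies, so the ``crux'' you flag is resolved by the quoted result rather than needing a separate feasibility argument.
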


\begin{proof}
Let $G=(V,E)$ be a coupling graph with $V=\{1,2,\ldots,n\}$, and let $\{\Gamma_v\}_{v\in V}$ be a set of Majorana operators. The quadratic observables generated by $G$ are $\{\,i\Gamma_u\Gamma_v\,|\,\{u,v\}\in E\}$ with $L(G)$ as their anti-commutativity graph. Let $a:E\rightarrow \{\pm 1\}$ be a sign function assigning $a_{uv}\in\{\pm 1\}$ to each edge $\{u,v\}\in E$. Equivalently, it assigns $a_{uv}$ to each vertex of $L(G)$. By Prop.~\ref{Aprop:inc_rob_bound},
\begin{equation}\label{eq:eta_line}
\eta(L(G))\;\le\;|E|^{-1}\,\max_{a:E\rightarrow \{\pm 1\}}\,\|H(a)\|_\infty,
\end{equation} 
where $H(a)=i\sum_{\{u,v\}\in E} a_{uv}\,\Gamma_u\Gamma_v$. We can rewrite
\begin{equation}\label{eq:K-form}
H(a)=\frac{i}{2}\sum_{\{u,v\}\in E}\!\bigl(a_{uv}\,\Gamma_u\Gamma_v - a_{uv}\,\Gamma_v\Gamma_u\bigr)
=\frac{i}{2}\sum_{u,v\in V} S_{uv}(a)\,\Gamma_u\Gamma_v
=\frac{i}{2}\,\Gamma^\top S(a)\,\Gamma,
\end{equation}
where $\Gamma^\top=(\Gamma_u)_{u\in V}$ and
\begin{equation}\label{eq:S-a-def}
S_{uv}(a)=
\begin{cases}
\;\;a_{uv}, & u<v \text{ and } \{u,v\}\in E,\\[2pt]
-a_{uv}, & v<u \text{ and } \{u,v\}\in E,\\[2pt]
\;\;0, & \text{otherwise}.
\end{cases}
\end{equation}
Thus $S(a)$ is skew--symmetric, with $\pm1$ in the off--diagonal positions corresponding to edges and zeros elsewhere. In particular, $S(a)$ is the skew--adjacency matrix of an orientation of $G$. 

Since $S(a)$ is real skew-symmetric, there exists an orthogonal matrix
$O\in\mathbb{R}^{n\times n}$ such that
\begin{equation}\label{eq:rotated_majoranas}
O^\top S(a)\,O
\;=\;
\bigoplus_{j=1}^{r}
\begin{pmatrix}
0 & \epsilon_j\\[2pt]
-\epsilon_j & 0
\end{pmatrix}
\end{equation}
and, if $n$ is odd, an additional $1\times1$ zero block, where
$r=\lfloor n/2\rfloor$ and $\epsilon_j\ge 0$. The eigenvalues of $S(a)$ are then $\{\pm i\epsilon_1,\ldots,\pm i\epsilon_r\}$ together with a single $0$ eigenvalue if $n$ is odd. Define the rotated Majorana operators by $\Gamma':=O^\top\Gamma$, such that
$\Gamma'_k = \sum_{v} O_{vk}\Gamma_v$ and the family $\{\Gamma'_k\}_{k\in V}$
still satisfies the Majorana relations~\eqref{eq:majoranas}. Substituting into
\eqref{eq:K-form} and applying \eqref{eq:rotated_majoranas} gives
\begin{equation}
H(a)\;=\;i\sum_{j=1}^{r}\epsilon_j\,\Gamma'_{2j-1}\Gamma'_{2j}\,.
\end{equation}
The operators $\{\Gamma'_{2j-1}\Gamma'_{2j}\}_{j=1}^r$ pairwise commute and each
squares to the identity, so they can be diagonalised simultaneously. The spectrum
of $H(a)$ is therefore $\{\sum_{j=1}^{r} s_j\,\epsilon_j \,|\, s_j\in\{\pm1\}\}$. Therefore, the operator norm is $
\|H(a)\|_\infty=\sum_{j=1}^r\epsilon_j$. Labelling the eigenvalues of $S(a)$ by $\lambda_1,\ldots,\lambda_n$, we obtain
\begin{equation}
\|H(a)\|_\infty
= \frac{1}{2}\sum_{j=1}^{n}|\lambda_j|\,.
\end{equation}

Let $\sigma$ be the orientation of $G$ for which $G^\sigma$ has skew-adjacency matrix $S(a)$ in \eqref{eq:K-form}. Its skew-energy (cf. Def. \ref{def:energies}) is therefore $\mathcal{E}_s(G^\sigma)=\sum_{j}|\lambda_j|$. It follows that $
\|H(a)\|_\infty=\mathcal{E}_s(G^\sigma)/2$. Maximising over all signings $a$ is equivalent to maximising over all orientations $\sigma$, therefore \eqref{eq:eta_line} becomes $\eta(L(G))\leq \tfrac{1}{2}|E|^{-1}\max_{\sigma}{E}_s(G^\sigma)$, proving the claim.

If $G$ is edge-transitive, Lemma~\ref{lem:uniform_rigid} implies that the associated quadratic Majorana measurement assemblage is uniform and rigid. As shown in~\cite{nguyen20}, the upper bound in Prop.~\ref{Aprop:inc_rob_bound} is tight for uniform and rigidly symmetric measurement assemblages. Since Prop.~\ref{Aprop:inc_rob_bound} reduces to~\eqref{Aeq:skew_bound_eigs} in the line-graph setting, we conclude that equality holds when $G$ is edge-transitive.
\end{proof}

\section{Incompatibility robustness for families of line graphs}\label{A:line_examples}

\subsection*{Cycles}

\begin{corollary*}[Restatement of Cor. \ref{prop:eta-cycle}]\label{Aprop:eta-cycle}
For cycle graphs $C_n$, incompatibility robustness is given by
 \begin{equation}\label{Aeq:eta-cycle}
        \eta(C_n)= \begin{cases}
 \frac{1}{n}\,\cot(\pi/2n), & n \text{ odd},\\
\frac{2}{n}\,\csc(\pi/n), & n \text{ even.}
\end{cases}
    \end{equation}
As $n\rightarrow\infty$, $\eta(C_n)\rightarrow\frac{2}{\pi}$.
\end{corollary*}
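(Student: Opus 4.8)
The plan is to read the statement as an application of Theorem~\ref{thm:spectra}. A cycle is its own line graph, $C_n = L(C_n)$, and $C_n$ is edge-transitive with $|E|=n$, so the theorem applies \emph{with equality} and the problem collapses to a single spectral quantity:
\begin{equation*}
\eta(C_n) \;=\; \frac{1}{2n}\,\mathcal{E}_s^{\max}(C_n).
\end{equation*}
Thus the whole task reduces to computing the maximum skew-energy of an oriented cycle.

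Next I would exploit the fact that a connected graph with $n$ vertices and $n$ edges has exactly $2^{\,n-n+1}=2$ switching classes, so the maximisation over orientations defining $\mathcal{E}_s^{\max}$ reduces to comparing two representatives. The natural switching invariant here is the product of the arc signs around the cycle, so I would take the coherent orientation $1\to 2\to\cdots\to n\to 1$ (product $+1$) and a single reversed edge (product $-1$) as the two representatives. Each skew-adjacency matrix is then (nega)circulant, and the discrete Fourier transform gives the skew-spectra $\{2i\sin(2\pi k/n)\}_{k=0}^{n-1}$ for the coherent class and $\{2i\sin(\pi(2k+1)/n)\}_{k=0}^{n-1}$ for the reversed class; these are precisely the two skew-spectra of oriented cycles recorded in \cite{adiga10}.

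I would then sum the moduli of the eigenvalues and evaluate the two skew-energies in closed form using standard finite trigonometric sums, such as $\sum_{k=1}^{m-1}\sin(k\pi/m)=\cot(\pi/2m)$ and $\sum_{k=0}^{m-1}\sin((2k+1)\theta)=\sin^2(m\theta)/\sin\theta$. The coherent class gives $2\cot(\pi/2n)$ for odd $n$ and $4\cot(\pi/n)$ for even $n$, while the reversed class gives the same $2\cot(\pi/2n)$ for odd $n$ but $4\csc(\pi/n)$ for even $n$. Selecting the larger value yields $\mathcal{E}_s^{\max}(C_n)=2\cot(\pi/2n)$ (odd) and $4\csc(\pi/n)$ (even), and dividing by $2n$ produces~\eqref{Aeq:eta-cycle}. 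The limit $\eta(C_n)\to 2/\pi$ then follows from the small-angle expansions $\cot(\pi/2n)\sim 2n/\pi$ and $\csc(\pi/n)\sim n/\pi$.

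The main obstacle is the parity-dependent comparison of the two switching classes. For odd $n$ the two skew-spectra coincide as multisets, because $2$ is invertible modulo $n$, so the even and odd arithmetic progressions of angles produce the same set of $|\sin|$ values; for even $n$ the reversed class strictly dominates, since $\csc\theta-\cot\theta=\tan(\theta/2)>0$ on $(0,\pi/2)$. Coaxing the absolute-value sums into the clean $\cot$/$\csc$ forms, and verifying which class is the maximiser in each parity, is the delicate step; everything else is a routine trigonometric evaluation.
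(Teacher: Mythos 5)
Your proposal is correct and follows essentially the same route as the paper: apply Theorem~\ref{thm:spectra} with equality (since $C_n=L(C_n)$ is edge-transitive), reduce the maximisation to the two switching classes of the oriented cycle, compare the parity-dependent skew-energies $2\cot(\pi/2n)$, $4\cot(\pi/n)$ and $4\csc(\pi/n)$, and take the small-angle limit. The only difference is that you derive the two skew-spectra explicitly from the (nega)circulant structure and evaluate the trigonometric sums yourself, whereas the paper simply cites the skew-energy values from the reference of Adiga et al.; your version is more self-contained but mathematically identical.
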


\begin{proof}
$C_n$ is edge-transitive, therefore Thm.~\ref{thm:spectra} implies
$\eta(C_n)
=\frac{1}{2n}\mathcal{E}_s^{\max}(C_n)$.
A cycle has exactly two switching classes (determined by the sign of the unique cycle), and their skew-energies are given in \cite{adiga10}. For $n$ odd, both classes have $
\mathcal{E}_s(C_n^\sigma)=2\,\cot(\frac{\pi}{2n})$, with $\sigma$ the orientation. For $n$ even, the two values are $
\mathcal{E}_s(C_n^\sigma)=4\,\cot(\frac{\pi}{n})$ and $
\mathcal{E}_s(C_n^{\sigma'})=4\,\csc(\frac{\pi}{n})$. Taking the maximum gives the desired formula \eqref{eq:eta-cycle}. Using $\cot x=\tfrac1x-\tfrac{x}{3}+O(x^{3})$ and $\csc x=\tfrac1x+\tfrac{x}{6}+O(x^{3})$ with $x=\tfrac{\pi}{2n}$ (odd) and $x=\tfrac{\pi}{n}$ (even), we obtain $\eta(C_n)=\tfrac{2}{\pi}-\tfrac{\pi}{6n^{2}}+O(n^{-4})$ (odd) and $\eta(C_n)=\tfrac{2}{\pi}+\tfrac{\pi}{3n^{2}}+O(n^{-4})$ (even), hence $\eta(C_n)\to 2/\pi$.
\end{proof}

\subsection*{Johnson graphs}

\begin{corollary*}[Restatement of Cor. \ref{cor:johnson}]\label{prop:eta-triangular}
For Johnson graphs $J(n,2)$, the incompatibility robustness of the corresponding set of $\binom{n}{2}$ observables satisfies
\begin{equation}\label{Aeq:eta-triangular}
\eta\big(J(n,2)\big) \;\le\; \frac{1}{\sqrt{\,n-1\,}}\,,
\end{equation}
with equality if and only if a skew--conference matrix of order $n$ exists.
\end{corollary*}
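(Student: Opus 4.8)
The plan is to recognise $J(n,2)$ as the line graph $L(K_n)$ and invoke Theorem~\ref{thm:spectra}. Since $K_n$ is edge-transitive, the theorem holds with equality, so with $|E(K_n)| = \binom{n}{2}$ I would write
\begin{equation}
\eta\big(J(n,2)\big) = \frac{1}{2|E|}\,\mathcal{E}_s^{\max}(K_n) = \frac{1}{n(n-1)}\,\mathcal{E}_s^{\max}(K_n).
\end{equation}
The problem thus reduces entirely to bounding the maximum skew-energy of the complete graph.

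First I would note that the skew-adjacency matrix $S$ of any orientation $K_n^\sigma$ is precisely an $n\times n$ tournament matrix: real, skew-symmetric, with zero diagonal and off-diagonal entries in $\{\pm 1\}$. Its eigenvalues are purely imaginary and, counted with multiplicity, satisfy $\sum_j |\lambda_j|^2 = \mathrm{tr}(S^\top S) = n(n-1)$, since every off-diagonal entry has square $1$. A Cauchy--Schwarz estimate on the $n$-tuple of moduli then gives $\mathcal{E}_s(K_n^\sigma) = \sum_j |\lambda_j| \le \sqrt{n\sum_j|\lambda_j|^2} = n\sqrt{n-1}$, which is the known bound on tournament skew-energies \cite{ito17}. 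Substituting $\mathcal{E}_s^{\max}(K_n) \le n\sqrt{n-1}$ yields the claimed upper bound $\eta(J(n,2)) \le 1/\sqrt{n-1}$.

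The harder part is the equality characterisation. The Cauchy--Schwarz step is tight exactly when all moduli $|\lambda_j|$ share a common value $\alpha$; combined with $\sum_j|\lambda_j|^2 = n(n-1)$ this forces $\alpha = \sqrt{n-1}$ and, because a real skew-symmetric matrix of odd order necessarily has a zero eigenvalue, forces $n$ even with no zero eigenvalue. Since $S^\top S = -S^2$ is symmetric with all eigenvalues equal to $\alpha^2$, tightness is equivalent to the identity $S^\top S = (n-1)\,\id_n$, which for a tournament matrix is exactly the defining condition of a skew-conference matrix of order $n$ (equivalently, one may apply \cite[Cor.~2.6]{adiga10} with $\Delta = n-1$). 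Conversely, any skew-conference matrix is the skew-adjacency matrix of an orientation of $K_n$ attaining $\mathcal{E}_s = n\sqrt{n-1}$, saturating the bound. The main obstacle is thus translating the Cauchy--Schwarz tightness condition into the clean matrix identity $S^\top S = (n-1)\,\id_n$ and identifying it with the skew-conference property; the upper bound itself is routine once Theorem~\ref{thm:spectra} is in hand.
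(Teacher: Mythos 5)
Your proposal is correct and follows essentially the same route as the paper: identify $J(n,2)=L(K_n)$, use edge-transitivity of $K_n$ to turn Theorem~\ref{thm:spectra} into an equality, and reduce everything to the maximum skew-energy of tournament matrices. The only difference is that the paper cites the bound $\sum_j|\lambda_j(T)|\le n\sqrt{n-1}$ (with equality iff $T$ is a skew-conference matrix) from the literature, whereas you re-derive it via the trace identity and Cauchy--Schwarz --- the same argument the paper itself deploys in Prop.~\ref{prop:weighing} --- so your version is simply a self-contained instance of the paper's proof.
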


\begin{proof}
The skew-adjacency matrix of any oriented complete graph $K_n^\sigma$ is a \emph{tournament matrix}: a real
$n\times n$ skew-symmetric matrix $T=(t_{ij})$ with zero diagonal and
$t_{ij}\in\{\pm1\}$ for $i\neq j$. Let $\mathcal{T}_n$ denote the set of all such matrices. Since $K_n$ is edge-transitive, Thm.~\ref{thm:spectra} implies
\[
\eta\big(J(n,2)\big)
\;=\; \frac{1}{2|E(K_n)|}\,\mathcal{E}_s^{\max}(K_n)
\;=\; \frac{1}{n(n-1)}\,\max_{T\in \mathcal{T}_n}\sum_{j=1}^{n}\big|\lambda_j(T)\big|\,,
\]
where $\lambda_j(T)$ are the eigenvalues of $T$. As shown in \cite[Prop. 4.1]{ito17}, any tournament matrix $T\in\mathcal{T}_n$ satisfies
$\sum_j |\lambda_j(T)| \;\le\; n\sqrt{n-1}$,
with equality if and only if $T$ is a skew-conference matrix. This yields the required bound.
\end{proof}

\subsection*{Line graphs of hypercubes}

\begin{corollary*}[Restatement of Cor. \ref{cor:eta-hypercube-exact}]\label{Acor:eta-hypercube-exact}
Let $d\geq 3$. For the line graph $L(Q_d)$ of the $d$-dimensional hypercube $Q_d$, the incompatibility robustness of the corresponding set of $d2^{d-1}$ observables satisfies
\begin{equation}
\eta\big(L(Q_d)\big)\;=\;\frac{1}{\sqrt{d}}.
\end{equation}
\end{corollary*}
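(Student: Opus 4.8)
The plan is to combine the edge-transitivity of $Q_d$ with Theorem~\ref{thm:spectra} and then exhibit an explicit optimal orientation. Since $Q_d$ is edge-transitive, Theorem~\ref{thm:spectra} gives the exact value $\eta(L(Q_d)) = \tfrac{1}{2|E|}\mathcal{E}_s^{\max}(Q_d)$, where $|E(Q_d)| = d\,2^{d-1}$; it therefore suffices to compute the maximum skew-energy. Because $Q_d$ is $d$-regular on $2^d$ vertices, the chain of inequalities in the proof of Prop.~\ref{prop:optimal_inc} gives $\mathcal{E}_s^{\max}(Q_d) \le \sqrt{2|E|\,|V|} = 2^d\sqrt{d}$. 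By Prop.~\ref{prop:weighing} this bound is attained precisely when $Q_d$ admits an orientation whose skew-adjacency matrix $S$ satisfies $S^\top S = d\,\id_{2^d}$, i.e.\ is a skew-weighing matrix $W(2^d,d)$. The entire problem thus reduces to producing one such orientation.

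To build it I would exploit the recursive structure $Q_d = Q_{d-1}\,\square\,K_2$, splitting the vertices by their last coordinate so that $A(Q_d)=\left(\begin{smallmatrix} A(Q_{d-1}) & I\\ I & A(Q_{d-1})\end{smallmatrix}\right)$. Starting from the single edge of $Q_1=K_2$, with $S_1=\left(\begin{smallmatrix} 0 & 1\\ -1 & 0\end{smallmatrix}\right)$, I set
\begin{equation*}
S_d = \begin{pmatrix} S_{d-1} & I\\ -I & -S_{d-1}\end{pmatrix}.
\end{equation*}
In words: orient the first $Q_{d-1}$-copy by $\sigma_{d-1}$, the second copy by the reversed orientation, and every matching edge from the first copy to the second. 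Two checks are routine: the entrywise absolute value of $S_d$ recovers $A(Q_d)$ (so $S_d$ really does orient $Q_d$), and $S_d^\top=-S_d$ (skew-symmetry). The substantive point is spectral: assuming inductively that $S_{d-1}^2=-(d-1)\id$, a $2\times2$ block multiplication cancels the off-diagonal blocks and gives $S_d^2=-d\,\id_{2^d}$, whence $S_d^\top S_d = -S_d^2 = d\,\id_{2^d}$.

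With this orientation every eigenvalue of $S_d$ has modulus $\sqrt{d}$, so $\mathcal{E}_s(Q_d^\sigma)=2^d\sqrt{d}$, meeting the upper bound and forcing $\mathcal{E}_s^{\max}(Q_d)=2^d\sqrt{d}$. Substituting this and $|E|=d\,2^{d-1}$ into Theorem~\ref{thm:spectra} yields $\eta(L(Q_d)) = \tfrac{2^d\sqrt{d}}{2\,d\,2^{d-1}} = \tfrac{1}{\sqrt{d}}$, as claimed. I expect the construction of the optimal orientation to be the main obstacle; once the recursion for $S_d$ is written down, the inductive identity $S_d^2=-d\,\id$ is the only real computation and the rest is bookkeeping. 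This reproduces, via an explicit doubling argument, the orientation attributed to~\cite{tian11}.
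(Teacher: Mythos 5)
Your proof is correct and follows the same skeleton as the paper's: edge-transitivity of $Q_d$ turns Theorem~\ref{thm:spectra} into an equality, the Cauchy--Schwarz chain gives $\mathcal{E}_s^{\max}(Q_d)\le 2^d\sqrt{d}$, and the whole problem reduces to exhibiting an orientation whose skew-adjacency matrix is a skew-weighing matrix $W(2^d,d)$. The one place you genuinely diverge is that the paper simply cites \cite{tian11} for the existence of such an orientation, whereas you construct it explicitly by the doubling recursion
\begin{equation*}
S_d=\begin{pmatrix} S_{d-1} & I\\ -I & -S_{d-1}\end{pmatrix},\qquad S_1=\begin{pmatrix}0&1\\-1&0\end{pmatrix},
\end{equation*}
and verify $S_d^2=-d\,\id_{2^d}$ by induction (the block product indeed gives $S_d^2=\mathrm{diag}(S_{d-1}^2-I,\,S_{d-1}^2-I)$, the off-diagonal blocks cancel, and the entrywise absolute value and skew-symmetry checks both go through). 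This makes the argument self-contained at the cost of a short computation; it is essentially the standard Clifford-algebra-flavoured construction underlying the cited result, and it even shows the claim holds for $d=1,2$ as well, so the restriction $d\ge 3$ inherited from \cite{tian11} is not needed for your version. No gaps.
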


\begin{proof}
Since $Q_d$ is edge-transitive, Thm.~\ref{thm:spectra} yields $\eta\big(L(Q_d)\big)=\frac{1}{d2^d}\mathcal{E}_s^{\max}(Q_d)$, with
$|E(Q_d)|=d\,2^{d-1}$. A general bound for the maximal skew-energy of any graph is $\mathcal{E}_s^{\max}(G)\le n\sqrt{\Delta}$,
where $n=|V(G)|$ and $\Delta$ is the maximum degree \cite{adiga10}. For $Q_d$ we have $n=2^d$ and
$\Delta=d$. It is known that, for every $d\geq3 $, there exists an orientation $\sigma$ such that $Q_d^\sigma$ saturates this bound \cite{tian11}, i.e., $\mathcal{E}_s^{\max}(Q_d)=n\sqrt{d}=2^d\sqrt{d}$. Therefore, $\eta\big(L(Q_d)\big)
=\frac{1}{\sqrt{d}}$.
\end{proof}

\subsection*{Rook's graphs}

\begin{corollary*}[Restatement of Cor. \ref{cor:eta-LKmn}]\label{Acor:eta-LKmn}
Let $s\geq r \geq 1$. For the rook's graphs $K_r\square K_s$, the incompatibility robustness of the corresponding set of $rs$ observables satisfies
\begin{equation}
\eta\big(K_r\square K_s\big)\;\le\;\frac{1}{\sqrt{s}},
\end{equation}
with equality if and only if a partial Hadamard matrix of size $r\times s$ exists.
\end{corollary*}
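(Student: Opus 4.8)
The plan is to apply Theorem~\ref{thm:spectra} to the coupling graph $K_{r,s}$, whose line graph is the rook's graph $K_r\square K_s$. Since $K_{r,s}$ is edge-transitive, the theorem gives the exact identity $\eta(K_r\square K_s)=\tfrac{1}{2|E|}\mathcal{E}_s^{\max}(K_{r,s})$ with $|E|=rs$, reducing the task to computing the maximum skew-energy of $K_{r,s}$ over all orientations.

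First I would exploit the bipartite structure. Ordering the vertices so that the part of size $r$ precedes the part of size $s$, any orientation $\sigma$ yields a skew-adjacency matrix of block form
\[
S=\begin{pmatrix}0 & C\\ -C^\top & 0\end{pmatrix},\qquad C\in\{\pm1\}^{r\times s},
\]
where $C_{uw}=+1$ if the edge is oriented $u\to w$ and $-1$ otherwise. Then $S^2=\mathrm{diag}(-CC^\top,-C^\top C)$, so the eigenvalues of $S$ are $\{\pm i\sigma_k\}_{k=1}^{r}$ together with $s-r$ zeros, where $\sigma_1,\dots,\sigma_r$ are the singular values of $C$ (equivalently, the square roots of the eigenvalues of the $r\times r$ matrix $CC^\top$). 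Hence $\mathcal{E}_s(G^\sigma)=2\sum_{k=1}^{r}\sigma_k$, and maximising over orientations gives
\[
\eta(K_r\square K_s)=\frac{1}{rs}\,\max_{C\in\{\pm1\}^{r\times s}}\sum_{k=1}^{r}\sigma_k(C).
\]

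Next I would bound the sum of singular values by Cauchy--Schwarz, $\sum_{k=1}^{r}\sigma_k\le\sqrt{r}\,\big(\sum_{k=1}^{r}\sigma_k^2\big)^{1/2}=\sqrt{r}\,\|C\|_F$. Since every entry of $C$ is $\pm1$ we have $\|C\|_F^2=rs$, so $\sum_k\sigma_k\le\sqrt{r}\cdot\sqrt{rs}=r\sqrt{s}$, yielding $\eta(K_r\square K_s)\le\tfrac{1}{rs}\cdot r\sqrt{s}=1/\sqrt{s}$. For the equality condition, Cauchy--Schwarz is tight precisely when all singular values coincide; combined with $\sum_k\sigma_k^2=rs$ this forces $\sigma_k=\sqrt{s}$ for every $k$, i.e.\ $CC^\top=s\,\id_r$. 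A $\pm1$ matrix satisfying $CC^\top=s\,\id_r$ is exactly a partial Hadamard matrix of size $r\times s$ (its rows are pairwise orthogonal, automatically of squared norm $s$), so the bound is attained if and only if such a matrix exists.

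The step I expect to be most delicate is the skew-eigenvalue-to-singular-value correspondence: one must verify the block identity $S^2=\mathrm{diag}(-CC^\top,-C^\top C)$ and track multiplicities correctly, confirming that the $s-r$ surplus eigenvalues are zero and hence contribute nothing to the skew-energy, so that $\mathcal{E}_s(G^\sigma)=2\sum_{k}\sigma_k$ with no stray constant. Once this reduction is established, the Cauchy--Schwarz bound and the identification of equality with the partial Hadamard condition follow directly.
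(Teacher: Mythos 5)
Your proposal is correct and follows essentially the same route as the paper's proof: edge-transitivity of $K_{r,s}$ plus Theorem~\ref{thm:spectra}, the bipartite block form of the skew-adjacency matrix reducing the skew-energy to twice the sum of singular values of a $\pm1$ matrix $C$, Cauchy--Schwarz with $\|C\|_F^2=rs$, and the equality condition $CC^\top=s\,\id_r$ identifying partial Hadamard matrices. The only cosmetic difference is that the paper tracks the rank $\ell=\operatorname{rank}C$ separately before bounding $\sqrt{\ell}\le\sqrt{r}$, whereas you sum over all $r$ singular values directly; both yield the same tightness condition.
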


\begin{proof}
Let $K_{r,s}$ be the complete bipartite graph with $r+s$ vertices and $rs$ edges, such that $L(K_{r,s})=K_r\square K_s$.  
For any orientation $K_{r,s}^{\tau}$, bipartiteness implies that its $(r+s)\times (r+s)$ skew-adjacency 
matrix has the block structure
\begin{equation}
S=\begin{pmatrix}
0 & C\\[2pt]
-\,C^\top & 0
\end{pmatrix},
\end{equation}
where the matrix $C\in\{\pm 1\}^{r\times s}$ encodes the signs determined by the orientation~$\tau$. Let $\{\sigma_j(C)\}_{j=1}^{\ell}$ denote the singular values of $C$,
with $\ell=\operatorname{rank}C\le \min\{r,s\}=r$.
Because $S$ is skew-symmetric, its eigenvalues are
$\{\pm i\,\sigma_j(C)\}_{j=1}^{\ell}\cup\{0,\ldots,0\}$.
Hence the skew-energy is $\mathcal{E}_s(K_{r,s}^{\tau})=2\sum_{j=1}^{\ell}\sigma_j(C)$,
and therefore, by Theorem~\ref{thm:spectra},
\begin{equation}\label{eq:rook_singular_rs}
\eta(L(K_{r,s}))
=\frac{1}{2|E(K_{r,s})|}\,\mathcal{E}_s^{\max}(K_{r,s})
=\frac{1}{rs}\,\max_{C\in\{\pm1\}^{r\times s}}\sum_{j=1}^{\ell}\sigma_j(C).
\end{equation}
By Cauchy-Schwarz and the Frobenius norm,
\begin{equation}\label{eq:cauchy_rs}
\sum_{j=1}^{\ell}\sigma_j(C)
\leq \sqrt{\ell}\sqrt{\sum_{j=1}^{\ell}\sigma_j(C)^2}
= \sqrt{\ell}\,\|C\|_F
\le \sqrt{r}\,\sqrt{rs}
= r\sqrt{s},
\end{equation}
where we used $\|C\|_F^2=\sum_{i,j}C_{ij}^2=rs$. Combining \eqref{eq:rook_singular_rs} and \eqref{eq:cauchy_rs} yields
\begin{equation}
\eta(L(K_{r,s}))\le\frac{1}{rs}\,r\sqrt{s}=\frac{1}{\sqrt{s}}.
\end{equation}
The inequalities in \eqref{eq:cauchy_rs} are tight if and only if 
$\ell = r$ and all nonzero singular values of $C$ are equal, which is equivalent to $\sigma_1(C) = \cdots = \sigma_r(C) = \sqrt{s}$. Since the eigenvalues of $CC^\top$ are $\{\sigma_j(C)^2\}_{j=1}^r$, 
this condition holds if and only if $CC^\top = s\,\id_r$. 
Thus the inequalities in \eqref{eq:cauchy_rs} are tight if and only if 
$C$ is an $r\times s$ partial Hadamard matrix.
\end{proof}

\subsection*{Paths}\label{A:path}

In contrast to odd cycles (cf. Fig. \ref{fig:eta-side-by-side}), paths do not lose incompatibility as the number of vertices increase, therefore
\begin{equation}\label{eq:paths}
    \eta(P_n)\;\geq \;\eta(P_{n+1})\,,
\end{equation}
for all $n\geq 1$. This follows from the observation that $P_n$ can be obtained from $P_{n+1}$ by removing a vertex: a subset of observables cannot be more incompatible than the original set. Cycles, on the other hand, become paths when a vertex is removed, therefore, by repeated application of \eqref{eq:paths} we have, for all $m<n$,
\begin{equation}\label{eq:cycles}
    \eta(P_m)\;\geq \;\eta(C_{n})\,.
\end{equation}

We use this observation, together with Thm. \ref{thm:spectra} and the incompatibility formula for cycles in Cor. \ref{prop:eta-cycle}, to prove the following result.

\begin{corollary*}[Restatement of Cor. \ref{prop:eta_paths}]
    \label{Aprop:eta_paths}
    Let $P_n$ be a path on $n$ vertices. For $n$ even,
\begin{equation}\label{eq:path_even}
 \frac{2}{n+2}\,\csc\!\Big(\frac{\pi}{n+2}\Big) \leq \eta(P_n)\leq \frac{1}{n}\,\cot\!\Big(\frac{\pi}{2(n+2)}\Big) - \frac{1}{n}\,,
    \end{equation}
and for $n$ odd,
  \begin{equation}\label{eq:path_odd}
 \frac{2}{n+1}\,\csc\!\Big(\frac{\pi}{n+1}\Big) \leq \eta(P_n)\leq \frac{1}{n}\,\csc\!\Big(\frac{\pi}{2(n+2)}\Big) - \frac{1}{n}\,.
    \end{equation}
Furthermore, $\eta(P_n)\rightarrow\frac{2}{\pi}$ as $n\rightarrow\infty$.
\end{corollary*}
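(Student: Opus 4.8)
The plan is to derive the upper and lower bounds from two genuinely independent arguments, tied together by the identity $P_n=L(P_{n+1})$ (so that $P_n$ is the line graph of the path $P_{n+1}$, which carries $n$ edges). Because paths with $n\ge 4$ vertices are not edge-transitive, Theorem~\ref{thm:spectra} supplies only the upper inequality, not equality; the lower bound must therefore be obtained by comparison with cycles rather than from the theorem itself.

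For the upper bound I would apply Theorem~\ref{thm:spectra} to the root graph $G=P_{n+1}$, giving $\eta(P_n)=\eta(L(P_{n+1}))\le \tfrac{1}{2n}\,\mathcal{E}_s^{\max}(P_{n+1})$ since $|E(P_{n+1})|=n$. As $P_{n+1}$ is bipartite, the result of \cite{shader09} gives $\mathcal{E}_s^{\max}(P_{n+1})=\mathcal{E}(P_{n+1})$, reducing everything to the ordinary path energy. I would then substitute the closed form from \cite{gutman12}, namely $\mathcal{E}(P_m)=2\cot\!\big(\tfrac{\pi}{2(m+1)}\big)-2$ for $m$ odd and $\mathcal{E}(P_m)=2\csc\!\big(\tfrac{\pi}{2(m+1)}\big)-2$ for $m$ even, evaluated at $m=n+1$. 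For $n$ even (hence $m$ odd) this produces the cotangent bound, and for $n$ odd (hence $m$ even) the cosecant bound, after division by $2n$; the additive $-1/n$ is exactly the $-2$ in the energy formula divided by $2n$.

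For the lower bound I would invoke the monotonicity relation~\eqref{eq:cycles}, $\eta(P_n)\ge\eta(C_{n'})$ for every $n'>n$, which holds because deleting one vertex of $C_{n'}$ yields $P_{n'-1}$ and removing further observables cannot increase incompatibility. Among admissible $n'$, even cycles give the strongest bounds: by Cor.~\ref{prop:eta-cycle} they satisfy $\eta(C_{n'})=\tfrac{2}{n'}\csc(\pi/n')$ and decrease toward $2/\pi$ from above, whereas odd cycles lie below $2/\pi$ and are thus weaker. Choosing the smallest even integer exceeding $n$—namely $n'=n+2$ when $n$ is even and $n'=n+1$ when $n$ is odd—reproduces precisely the stated cosecant lower bounds. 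The limit $\eta(P_n)\to 2/\pi$ then follows by squeezing, since a short Taylor expansion shows both the cotangent/cosecant upper bounds and the even-cycle lower bounds converge to $2/\pi$ (the upper bound at rate $\Theta(1/n)$, the lower at rate $\Theta(1/n^2)$, both from above).

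The main obstacle is structural rather than computational: the loss of edge-transitivity forces the upper and lower bounds to come from disjoint mechanisms, so one cannot simply read off $\eta(P_n)$ from a single spectral identity as in the edge-transitive examples. The delicate point is correctly identifying \emph{which} cycle $C_{n'}$ optimizes the comparison—this relies on the observation that even and odd cycles approach $2/\pi$ from opposite sides, so only even cycles yield admissible lower bounds—together with a careful parity bookkeeping when matching $m=n+1$ to the two branches of the path-energy formula of \cite{gutman12}. Once these parities are aligned, verifying that the two independent expressions pinch to the common limit $2/\pi$ is routine.
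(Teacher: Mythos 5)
Your proposal follows essentially the same route as the paper's proof: apply Theorem~\ref{thm:spectra} to $P_n=L(P_{n+1})$ and substitute the closed-form path energy from \cite{gutman12} for the upper bound, use the cycle comparison \eqref{eq:cycles} with the nearest even cycle ($C_{n+2}$ for $n$ even, $C_{n+1}$ for $n$ odd) for the lower bound, and squeeze to get the limit $2/\pi$. Your parity bookkeeping and the identification of even cycles as the admissible comparators are both correct.

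One step needs a small repair. You write that bipartiteness of $P_{n+1}$ together with \cite{shader09} gives $\mathcal{E}_s^{\max}(P_{n+1})=\mathcal{E}(P_{n+1})$, but that result only guarantees the \emph{existence} of an orientation whose skew-energy equals the energy, i.e.\ $\mathcal{E}_s^{\max}(P_{n+1})\ge\mathcal{E}(P_{n+1})$ --- the wrong direction for deriving an upper bound on $\eta(P_n)$. The paper closes this by observing that a tree contains no cycles and therefore has a single switching class, so all orientations of $P_{n+1}$ share the same skew-spectrum and the maximum is forced to equal $\mathcal{E}(P_{n+1})$. With that one line added, your argument is complete and coincides with the paper's.
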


\begin{proof}
Paths are bipartite, so there exists an orientation $\sigma$ such that $\mathcal{E}_s(P_n^{\sigma})=\mathcal{E}(P_n)$ \cite{shader09}. Since paths contain no cycles, all orientations belong to the same switching class (see \cite[Thm. 3.3]{adiga10}), therefore the skew-energy does not depend on the orientation and $\mathcal{E}_s^{\max}(P_n)=\mathcal{E}(P_n)$. Applying Thm. \ref{thm:spectra} to $P_n=L(P_{n+1})$, we obtain $\eta(P_n)\leq \mathcal{E}(P_{n+1})/2n$. The exact formula for $\mathcal{E}(P_{n+1})$ can be found, for example, in \cite{gutman12} and gives the upper bounds in \eqref{eq:path_even} and \eqref{eq:path_odd}. The lower bounds follow from \eqref{eq:cycles} and the formula for the incompatibility robustness of cycles, given in \eqref{eq:eta-cycle}. We conclude from Cor. \ref{prop:eta-cycle} that paths exhibit the same asymptotic behaviour as cycles.
\end{proof}

\section{Merged Johnson graphs}\label{A:degreeq}

As we have seen, the anti-commutation relations of the quadratic observables $\{i\Gamma_u\Gamma_v\,|\,\{u,v\}\in \binom{[n]}{2}\}$ are described by the Johnson graph $J(n,2)$. Generalised Johnson graphs are the building blocks that describe the anti-commutativity graphs of degree--$k$ Majorana observables $\mathcal{A}_k=\{A_{\mathscr{I}}\,|\,\mathscr{I}\in\binom{[n]}{k}\}$ defined in \eqref{eq:ferm_monomials}.
\begin{definition}\label{def:generalised_johnson}
The \emph{generalised Johnson graph} $J(n,k,i)$ has vertex set $V=\binom{[n]}{k}$, with $\mathscr{I},\mathscr{I}'\in V$ adjacent if and only if $|\mathscr{I}\cap \mathscr{I}'|=i$.
\end{definition}
The Johnson graph $J(n,2)$ is equivalent to $J(n,2,1)$, and is illustrated in Fig. \ref{fig:line} for $n=5$. By considering a graph whose edge set consists of the union of generalised Johnson graphs, we obtain a \emph{merged} Johnson graph \cite{jones05}: 
\begin{definition}\label{def:union_johnson}
The \emph{merged Johnson graph} $J(n,k,L)$ with $L\subseteq \{0,1,\ldots, k\}$ has vertex set $V=\binom{[n]}{k}$, with $\mathscr{I},\mathscr{I}'\in V$ adjacent if and only if $|\mathscr{I}\cap \mathscr{I}'|\in L$.\end{definition}
In particular, the edge set of $J(n,k,L)$ is given by the union of the edge sets of the generalised Johnson graphs $J(n,k,i)$ with $i\in L$. 

The $L$-set necessary to describe the anti-commutativity graph of $\mathcal{A}_k$ depends on whether $k$ is odd or even. For odd $k$, the relevant set is $L_{\text{even}}\equiv \{0,2,4,\ldots,k-1\}$, while for even $k$, the set is $L_{\text{odd}}\equiv \{1,3,5,\ldots,k-1\}$. In particular, the anti-commutativity graph of $\mathcal{A}_k$ is
\begin{equation}\label{eq:anticommutativity_johnson}
G=\begin{cases}
J(n,k,L_{\text{even}}) &\mbox{if} \,\,\, k\,\, \mbox{is odd} \\
J(n,k,L_{\text{odd}}) &\mbox{otherwise} \,.
 \end{cases}
\end{equation}

While a full characterisation of the relevant graph parameters of $G$ in \eqref{eq:anticommutativity_johnson} is unknown, some asymptotic bounds can be derived. To calculate a lower bound on $\eta(G)$ we will make use of a result by Jones \cite{jones05} on the automorphism groups of merged Johnson graphs.

\begin{theorem}[Jones~{\cite[Thm.~2]{jones05}}]\label{thm:jones}
For $2\leq k\leq n/2$ and $L\subset \{1,2,\ldots,k\}$,  the automorphism group of $J(n,k,L)$ (excluding $J(12,4,\{1,3\})$ and $J(12,4,\{2,4\})$) is the symmetric group $S_n$.
\end{theorem}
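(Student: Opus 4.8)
The plan is to prove the two inclusions $S_n\subseteq\mathrm{Aut}(J(n,k,L))$ and $\mathrm{Aut}(J(n,k,L))\subseteq S_n$ separately. The first is immediate: the natural action of $S_n$ on $[n]$ induces an action on $\binom{[n]}{k}$ that preserves every intersection cardinality $|\mathscr I\cap\mathscr I'|$, and hence preserves the edge set $\{\{\mathscr I,\mathscr I'\}:|\mathscr I\cap\mathscr I'|\in L\}$. For $1\le k\le n-1$ this action is faithful, so $S_n$ embeds into $\mathrm{Aut}(J(n,k,L))$. The entire content of the theorem is therefore the reverse inclusion, namely that every graph automorphism is induced by a permutation of the ground set $[n]$.

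For the reverse inclusion I would work inside the Johnson association scheme on $\binom{[n]}{k}$. Its relations $R_0,\dots,R_k$, where $(\mathscr I,\mathscr I')\in R_j$ iff $|\mathscr I\cap\mathscr I'|=k-j$, form a metric ($P$-polynomial) scheme whose Bose--Mesner algebra $\mathcal A$ is spanned both by the adjacency matrices $A_0=I,A_1,\dots,A_k$ and by the primitive idempotents $E_0,\dots,E_k$ projecting onto the $k+1$ common eigenspaces $V_0,\dots,V_k$. The adjacency matrix of the merged graph is $M=\sum_{i\in L}A_{k-i}\in\mathcal A$, with eigenvalue $\lambda_j=\sum_{i\in L}p_{k-i}(j)$ on $V_j$, where the $p_i(j)$ are the Eberlein polynomials giving the eigenvalues of the distance-$i$ graph.

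The decisive step is to show that, outside the two exceptional parameter sets, the $k+1$ numbers $\lambda_0,\dots,\lambda_k$ are pairwise distinct. Granting this, each idempotent $E_j$ equals the Lagrange interpolation polynomial $\prod_{\ell\ne j}(M-\lambda_\ell I)/(\lambda_j-\lambda_\ell)$ in $M$, so any permutation matrix $P$ commuting with $M$ commutes with every $E_j$, hence with every $A_i=\sum_j p_i(j)E_j$. Thus $P$ is an automorphism of the \emph{entire} scheme, and in particular of the ordinary Johnson graph $J(n,k)$ (the relation $A_1$). Invoking the classical fact that $\mathrm{Aut}(J(n,k))=S_n$ for $n\ne 2k$ --- proved by reconstructing the points of $[n]$ from the two families of maximal cliques of $J(n,k)$ --- then yields $\mathrm{Aut}(J(n,k,L))=S_n$, as required.

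The main obstacle is precisely the distinctness of the merged eigenvalues $\lambda_j$, a delicate statement about sums of Eberlein polynomials. I would analyse the map $j\mapsto\lambda_j$ by expressing each $p_i(j)$ through dual Hahn / Eberlein polynomials and estimating the gaps $\lambda_j-\lambda_{j'}$, splitting into cases according to the parity structure of $L$ and disposing of small $n$ by direct computation. Coincidences $\lambda_j=\lambda_{j'}$ with $j\ne j'$ are exactly what let the coherent closure of $M$ be strictly smaller than $\mathcal A$ and thereby admit extra automorphisms; the sporadic cases $J(12,4,\{1,3\})$ and $J(12,4,\{2,4\})$ are where this happens, so the argument must also certify that these are the \emph{only} such coincidences in the range $2\le k\le n/2$. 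A final point requiring separate care is the boundary $n=2k$, where the complementation map $\mathscr I\mapsto[n]\setminus\mathscr I$ preserves all intersection sizes yet is induced by no element of $S_n$; there one either restricts to $k<n/2$ or accounts explicitly for this additional $\mathbb Z_2$ factor.
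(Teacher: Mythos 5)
The paper gives no proof of this statement---it is imported verbatim from Jones \cite{jones05}---so the relevant comparison is with Jones's own argument. Your easy inclusion $S_n\le\mathrm{Aut}(J(n,k,L))$ and the reduction ``if the $k+1$ merged eigenvalues $\lambda_0,\dots,\lambda_k$ are pairwise distinct, then every automorphism preserves each primitive idempotent, hence the whole Johnson scheme, hence lies in $\mathrm{Aut}(J(n,k))=S_n$ for $n>2k$'' are both sound. The genuine gap is exactly the step you defer as ``the main obstacle'': the required distinctness fails at parameters inside the theorem's range and outside the listed exceptions. Take $n=7$, $k=3$ and the single relation ``intersection size $1$'' (distance $2$). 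The distance-$1$ and distance-$3$ graphs of $J(7,3)$ have eigenvalues $(12,5,0,-3)$ and $(4,-3,2,-1)$ on $V_0,\dots,V_3$, so the distance-$2$ graph has eigenvalues $(18,-3,-3,3)$, i.e.\ $\lambda_1=\lambda_2$. Your Lagrange-interpolation step is then unavailable, and nothing else in the proposal replaces it: a repeated eigenvalue of $M$ does not force the coherent closure of $M$ to be a proper fusion of the scheme (the coherent closure is also closed under Schur products), and conversely a coincidence does not by itself certify extra automorphisms. So the dichotomy ``coincidence iff sporadic exception at $(12,4)$'' on which your plan rests is not a theorem, and precisely the degenerate-spectrum cases---the only hard ones---are left unhandled. (As it happens, this $(7,3)$ graph \emph{is} invariant under $S_8$, via the identification of $3$-subsets of $[7]$ with partitions of $[8]$ into two $4$-blocks sending intersection size $1$ to block-meeting pattern $(2,2)$; so the exception list in the statement as transcribed should itself be rechecked against \cite{jones05}. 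Only the trivial inclusion $S_n\le\mathrm{Aut}$, i.e.\ vertex-transitivity, is actually used in Cor.~\ref{cor:johnson_fractional}, so the application is unaffected.)

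For contrast, Jones's proof is group-theoretic rather than spectral: for $n>2k$ the action of $S_n$ on $\binom{[n]}{k}$ is primitive, and one invokes the CFSG-based classification of primitive groups containing $S_n$ in this action (equivalently, of overgroups of $S_n^{(k)}$ in $S_{\binom{n}{k}}$); any proper overgroup must preserve the non-trivial orbital graph $J(n,k,L)$, which eliminates the alternating/symmetric giants and leaves only the sporadic inclusions that account for the exceptional cases. That machinery covers exactly the cases your eigenvalue argument cannot reach. To salvage a self-contained spectral route you would need, at minimum, a complete classification of the coincidences $\lambda_j=\lambda_{j'}$ over all $(n,k,L)$ \emph{plus} an independent reconstruction of $[n]$ from $J(n,k,L)$ in each coincident case---a programme substantially harder than the theorem itself.
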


This allows us to evaluate the fractional chromatic number of $J(n,k,L_{\text{odd}})$.

\begin{corollary}\label{cor:johnson_fractional}
Suppose $n$ and $k$ are even, and $2\leq k\leq n/2$. Then, for all sufficiently large $n$, the fractional chromatic number of the Johnson graph 
$J(n,k,L_{\mathrm{odd}})$ satisfies
\begin{equation}\label{eq:johnson_chromatic}
\chi_f(J(n,k,L_{\text{odd}}))=
\binom{n}{k}\binom{n/2}{k/2}^{-1}\,.
\end{equation}
\end{corollary}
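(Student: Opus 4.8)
The plan is to exploit the fact that $J(n,k,L_{\mathrm{odd}})$ is vertex-transitive, so that its fractional chromatic number is determined by its independence number through the identity $\chi_f(G)=|V|/\alpha(G)$ recalled in the main text. First I would invoke Jones's theorem (Thm.~\ref{thm:jones}): for $2\le k\le n/2$ the automorphism group of $J(n,k,L_{\mathrm{odd}})$ is the full symmetric group $S_n$, apart from the two sporadic exceptions at $n=12,\,k=4$, which are irrelevant for large $n$. Since $S_n$ acts transitively on $\binom{[n]}{k}$, the graph is vertex-transitive, and $\chi_f(G)=|V|/\alpha(G)$ applies with $|V|=\binom{n}{k}$. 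It therefore suffices to prove $\alpha\big(J(n,k,L_{\mathrm{odd}})\big)=\binom{n/2}{k/2}$. This reduces the claim to an extremal set-theoretic statement: the largest family $\mathcal{F}$ of $k$-subsets of $[n]$ whose pairwise intersections all have \emph{even} size (equivalently, avoid $L_{\mathrm{odd}}=\{1,3,\dots,k-1\}$) has size $\binom{n/2}{k/2}$.

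For the lower bound I would exhibit the extremal family directly. Fix a perfect matching of $[n]$ into $n/2$ disjoint pairs, possible since $n$ is even, and let $\mathcal{F}$ consist of all $k$-subsets that are unions of exactly $k/2$ of these pairs. Any two members of $\mathcal{F}$ intersect in a union of common pairs, hence in an even number of elements, so $\mathcal{F}$ is independent; its cardinality is $\binom{n/2}{k/2}$, giving $\alpha\ge\binom{n/2}{k/2}$.

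The matching upper bound is the crux. I would encode each set in an independent family $\mathcal{F}$ by its characteristic vector in $\mathbb{F}_2^n$; since $k$ is even each vector is self-orthogonal, and even pairwise intersections force them to be pairwise orthogonal under the standard bilinear form. Hence $\mathcal{F}$ spans a totally isotropic subspace $U\subseteq\mathbb{F}_2^n$ of dimension at most $n/2$, and $\mathcal{F}$ is contained in the set of weight-$k$ vectors of $U$. The task then becomes: over all totally isotropic subspaces, maximize the number of weight-$k$ vectors, and show this maximum equals $\binom{n/2}{k/2}$, attained by the matching subspace spanned by the $e_{2i-1}+e_{2i}$. An alternative route is to bound $\alpha$ spectrally via the Hoffman ratio bound inside the Johnson association scheme, where the eigenvalues of $J(n,k,L_{\mathrm{odd}})$ are given by (Eberlein) polynomials, and to verify that this bound matches the construction for large $n$.

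I expect this upper bound to be the main obstacle. For $k=2$ it is elementary—independent sets are exactly matchings, so $\alpha=n/2=\binom{n/2}{1}$ for every $n$—but for $k\ge 4$ a totally isotropic subspace can in principle carry many weight-$k$ vectors in configurations unrelated to a single matching, and ruling these out (or verifying tightness of the ratio bound) is precisely where the hypothesis that $n$ be sufficiently large enters. A stability argument, showing that any near-extremal family must essentially arise from a perfect matching, is the natural mechanism; making this rigorous is the delicate step, and it is what confines the equality~\eqref{eq:johnson_chromatic} to the large-$n$ regime.
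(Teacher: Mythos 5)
Your reduction is exactly the paper's: $S_n$ acts transitively on $\binom{[n]}{k}$ preserving intersection sizes, so $J(n,k,L_{\mathrm{odd}})$ is vertex-transitive and $\chi_f(G)=|V|/\alpha(G)$, leaving only the claim $\alpha\big(J(n,k,L_{\mathrm{odd}})\big)=\binom{n/2}{k/2}$. (In fact you do not even need Jones's theorem for this step --- $S_n$ being a subgroup of $\mathrm{Aut}(G)$ already gives vertex-transitivity; the full automorphism group is irrelevant here.) Your lower-bound construction from a perfect matching is correct and is the extremal family. The divergence is in the upper bound: the paper does not prove it, but cites the theorem of Deza, Erd\H{o}s and Frankl (1978) (see also Hastings--O'Donnell, Thm.~3.11), which states precisely that for $n,k$ even and $n$ sufficiently large, any family of $k$-sets with pairwise even intersections has size at most $\binom{n/2}{k/2}$.

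This is where your proposal has a genuine gap. The $\mathbb{F}_2$ argument correctly places an independent family inside a totally isotropic subspace $U$ of dimension at most $n/2$, but that alone bounds $|\mathcal{F}|$ only by the number of weight-$k$ vectors of $U$, and a dimension-$(n/2)$ isotropic subspace has $2^{n/2}$ vectors; showing that no such subspace carries more than $\binom{n/2}{k/2}$ vectors of weight $k$ is essentially equivalent in difficulty to the Deza--Erd\H{o}s--Frankl theorem itself, and you do not carry it out. The Hoffman-bound alternative is likewise only named, not executed (and verifying tightness of the ratio bound for the merged graph $J(n,k,L_{\mathrm{odd}})$ across all $k$ is nontrivial). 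You are candid that this is the crux, but as written the key quantitative input $\alpha\le\binom{n/2}{k/2}$ is not established. The fix is simply to do what the paper does and invoke the known extremal set-theoretic result, which is also where the ``$n$ sufficiently large'' hypothesis genuinely originates.
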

\begin{proof} 
For any pair $\mathscr{I},\mathscr{I}'\in\binom{[n]}{k}$, there exists $\pi\in S_n$ such that $\pi(\mathscr{I})=\mathscr{I}'$. Therefore, by Thm. \ref{thm:jones}, the merged Johnson graphs $J(n,k,L)$, under the given assumptions, are vertex-transitive (see Def. \ref{def:transitive}). Consequently, the fractional chromatic number is given by $\chi_f(G)=|V|/\alpha(G)$, where $\alpha(G)$ is the independence number \cite{scheinerman11}. The relevant independence numbers have been partially characterised in work by Deza, Erd\"os, and Frankl \cite{deza78} (see also \citep[Thm.~3.11]{hastings21}). In particular, if $k$ and $n$ are even, with $n$ sufficiently large, then $\alpha(J(n,k,L_{\text{odd}}))=\binom{n/2}{k/2}$.
\end{proof}

To bound $\eta(G)$ from above, we consider either the Lovász number or the clique number of merged Johnson graphs. While a general bound on the clique number is unknown, we can derive (see Appendix \ref{A:lsystems}) asymptotic bounds for specific $k$ using results from extremal set theory \cite{frankl16}. However, stronger bounds can be derived from the following result.

\begin{theorem}[Linz \cite{linz24}]\label{thm:hastings_lovasz}
Let $k=\mathcal{O}(1)$. For the merged Johnson graph $J(n,k,L)$, the Lovász number satisfies
\begin{equation}
\vartheta(J(n,k,L))=\Theta(n^{k-|L|})\,.
\end{equation}
\end{theorem}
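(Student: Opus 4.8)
The plan is to exploit the fact that every merged Johnson graph $J(n,k,L)$ is a graph in the Johnson association scheme on $V=\binom{[n]}{k}$, and to reduce the semidefinite program~\eqref{Aeq:lov} defining $\vartheta$ to a small linear program in the common eigenvalues of the scheme. First I would invoke Thm.~\ref{thm:jones}: for $2\le k\le n/2$ the automorphism group is $S_n$, which acts transitively both on vertices and on each relation (intersection size). Averaging any optimal $X$ in~\eqref{Aeq:lov} over this group and using convexity of the feasible set together with linearity of the objective, the optimum is attained by a matrix lying in the Bose--Mesner algebra of the scheme. Such a matrix is a linear combination $X=\sum_{d=0}^{k} c_d A_d$ of the relation matrices $A_d$ (where $A_d$ records intersection size $k-d$), simultaneously diagonalised by the scheme idempotents $E_0,\dots,E_k$ with eigenspace dimensions $m_j=\binom{n}{j}-\binom{n}{j-1}=\Theta(n^j)$. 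The constraints ($X$ positive semidefinite, $\operatorname{tr}X=1$, $X$ vanishing on edges) and the objective $\operatorname{tr}(X\mathbb J)$ then all become linear in the $c_d$, so $\vartheta(J(n,k,L))$ collapses to the Delsarte linear program.

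Second, I would feed in the spectral data: the eigenvalue of $A_d$ on the $j$-th eigenspace is an Eberlein polynomial $E_j(d)$, whose dependence on $n$ for fixed $k$ is an explicit polynomial of controlled degree. With these asymptotics in hand the two bounds split cleanly. For the \emph{upper} bound I would construct a feasible solution of the dual program --- a matrix $M$ in the algebra that agrees with $\mathbb J$ on the diagonal and on all non-edges (intersection sizes outside $L$) and is free on the edge relations (sizes in $L$) --- and choose the free coefficients so that $\lambda_{\max}(M)=\max_j\mu_j(M)=\mathcal{O}(n^{k-|L|})$, which by SDP duality bounds $\vartheta$ from above. For the \emph{lower} bound I would either exhibit a feasible primal $X$ supported on non-edges with $\operatorname{tr}(X\mathbb J)=\Omega(n^{k-|L|})$, or use the sandwich $\vartheta(G)\ge\alpha(G)$ together with an explicit family of $k$-subsets of size $\Omega(n^{k-|L|})$ whose pairwise intersections all avoid $L$; such families are supplied by the extremal set-theoretic constructions of Deza, Erd\H{o}s and Frankl, and are consistent with the independence-number counts recorded in Cor.~\ref{cor:johnson_fractional}.

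The main obstacle I expect is the asymptotic solution of this linear program: one must balance the eigenspace dimensions $m_j=\Theta(n^j)$ against the Eberlein eigenvalues $E_j(d)$, which vary in sign and in order of magnitude across the index $j$, and show that the optimal trade-off pins the exponent at exactly $k-|L|$. This is delicate because cruder estimates overshoot --- for instance the Ray-Chaudhuri--Wilson inequality only yields $\alpha\le\binom{n}{k+1-|L|}=\Theta(n^{k+1-|L|})$, one power too large --- so the argument must track the polynomial degrees of the dominant eigenspaces precisely and exploit the cancellation encoded in the specific choice of $L$. Checking that the exceptional cases excluded in Thm.~\ref{thm:jones}, and small $n$, do not affect the asymptotics is a routine final step.
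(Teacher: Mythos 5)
First, note that the paper does not prove this statement at all: it is imported verbatim as an external result of Linz \cite{linz24}, so there is no internal proof to compare against. Your proposal must therefore stand on its own, and as it stands it is a plan rather than a proof. The reduction of the Lov\'asz SDP to the Delsarte linear program in the Bose--Mesner algebra of the Johnson scheme (via averaging over $S_n$, which acts transitively on each intersection relation) is a correct and standard first step, and the dual-feasible-matrix strategy for the upper bound is the right shape. But you explicitly defer the entire quantitative content --- the asymptotics of the Eberlein eigenvalues, the choice of free coefficients making $\lambda_{\max}(M)=\mathcal{O}(n^{k-|L|})$, and the construction of a primal solution of value $\Omega(n^{k-|L|})$ --- to ``the main obstacle I expect.'' That obstacle \emph{is} the theorem; without it nothing has been established.

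More seriously, your fallback for the lower bound is wrong in general. You propose $\vartheta(G)\ge\alpha(G)$ together with Deza--Erd\H{o}s--Frankl families of size $\Omega(n^{k-|L|})$ avoiding $L$. An independent set of $J(n,k,L)$ is an $(n,k,L^{c})$-system with $|L^{c}|=k-|L|$, and the exponent of such a system can be strictly smaller than $|L^{c}|$. Concretely, take $k=4$ and $L=\{0,2\}$: independent sets are $(n,4,\{1,3\})$-systems, which by the Frankl--Ota--Tokushige result quoted in Thm.~\ref{thm:frankl} have size $\Theta(n)$, whereas the theorem asserts $\vartheta(J(n,4,\{0,2\}))=\Theta(n^{2})$. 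So $\vartheta$ can exceed $\alpha$ by a polynomial factor, the independence number cannot certify the lower bound, and the agreement you observe with Cor.~\ref{cor:johnson_fractional} is special to $L=L_{\text{odd}}$ with $k$ even. The lower bound must come from a genuine feasible point of the (symmetrised) SDP, which you name as an alternative but do not construct. A minor additional slip: the Deza--Frankl/Ray-Chaudhuri--Wilson bound on the independence number is $\binom{n}{k-|L|}=\Theta(n^{k-|L|})$, not $\binom{n}{k+1-|L|}$; in any case a bound on $\alpha$ is a lower, not an upper, estimate for $\vartheta$, so it cannot play the role you assign it in the upper-bound discussion.
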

This extends a bound in \cite{hastings21} that showed, for even $k\leq 10$ and $n$ sufficiently large, that $\vartheta(J(n,k,L_{\text{odd}}))\leq  \binom{n/2}{k/2}$.

\section{$L$-systems}\label{A:lsystems}

To find asymptotic bounds on the clique number of merged Johnson graphs $J(n,k,L)$, we consider some results from extremal set theory, particularly $L$-systems (see, e.g., \emph{Invitation to intersection problems for finite sets} by Frankl and Tokushige \cite{frankl16}).

\begin{definition}
Fixing $n$ and $k$, let $L\subset \{0,1,\ldots,k-1\}$. A $(n,k,L)$-system is a family $\mathcal{F}\subset\binom{[n]}{k}$ of $k$--element subsets such that any distinct $F,F'\in\mathcal{F}$ satisfies $|F\cap F'|\in L$.
\end{definition}

A well-studied problem is to find, for a given $n,k$ and $L$, the maximal cardinality $m(n,k,L)$ of all possible $(n,k,L)$-systems. In particular, the task is to find the so-called \emph{exponent} $\tau$ of the $(n,k,L)$-system such that $c\cdot n^{\tau}< m(n,k,L)< c'\cdot n^{\tau}$. In general, evaluating $\tau$ is an open problem, but a full characterisation is known for $k\leq 12$ \cite{frankl96}. The cases relevant in this work are summarised in the following theorem. 

\begin{theorem}[Frankl, Ota, and Tokushige \cite{frankl96}]\label{thm:frankl}
Suppose $k\leq 12$ and let $\tau$ denote the exponent of a $(n,k,L)$-system. If $k$ is even and $L= \{1,3,\ldots k-1\}$ then $\tau=1$. If $k$ is odd and $L=\{0,2,\ldots,k-1\}$ then $\tau=2$.
\end{theorem}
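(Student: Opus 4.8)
The plan is to bound the maximal cardinality $m(n,k,L)$ both above and below by matching powers of $n$, treating the two parity classes ($k$ even with $L=\{1,3,\dots,k-1\}$ and $k$ odd with $L=\{0,2,\dots,k-1\}$) in parallel. For the \emph{lower bound} I would write down explicit generalised--sunflower families: fix a common core $C\subseteq[n]$ and let a bounded number of coordinates vary over a pattern chosen so that every pairwise intersection size lands in $L$. Choosing $|C|=k-1$ and a single free element already produces an intersecting family of size $n-k+1$, hence $\tau\ge 1$; to reach a higher exponent one enlarges the collection of free coordinates, and the number of coordinates that may be freed while keeping all pairwise intersection sizes inside $L$ is exactly what fixes the attainable lower value of $\tau$. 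The arithmetic structure of $L$ (an initial segment of a single parity) is what controls how many coordinates can be released.

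For the \emph{upper bound} I would use the polynomial / linear--algebra method of Ray--Chaudhuri--Wilson type. Associating to each $F\in\mathcal{F}$ a multilinear polynomial of degree $|L|$ encoding the constraint $|F\cap F'|\in L$, one shows that these polynomials are linearly independent, yielding the crude estimate $m(n,k,L)\le\binom{n}{|L|}=\mathcal{O}(n^{|L|})$. Since $|L|$ overshoots the claimed exponent, this must be sharpened: I would reduce modulo a carefully chosen prime (or work over $\mathbb{F}_2$, where the characteristic vectors of the family acquire a prescribed Gram matrix dictated by the parity of $k$) and exploit the structure of $L$ to collapse the independent dimension down to the true order of growth.

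To close the remaining gap and determine $\tau$ \emph{exactly}, I would invoke the shifting (compression) machinery: replace $\mathcal{F}$ by a left--compressed family of the same size whose intersection spectrum is still contained in $L$, and then argue that an extremal compressed family is, up to lower--order terms, one of the sunflower constructions from the first step. This reduces the problem to a finite analysis of the possible traces of $\mathcal{F}$ on a bounded set of coordinates.

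The \textbf{main obstacle} is precisely this matching step. The algebraic bound overshoots while the constructions a priori undershoot, so pinning the exponent requires a complete classification of the extremal compressed configurations, whose number grows rapidly with $k$. This is why a clean value of $\tau$ is available only in the bounded regime $k\le 12$, where the case analysis of Frankl, Ota and Tokushige terminates; beyond it the exact exponent of a general $(n,k,L)$--system remains open.
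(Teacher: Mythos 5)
First, note that the paper offers no proof of this statement: it is imported verbatim as a theorem of Frankl, Ota and Tokushige \cite{frankl96} and used as a black box in Appendix~\ref{A:lsystems}, so there is nothing in the paper to compare your argument against. Judged on its own terms, your proposal is a survey of the right general toolkit for $L$-system exponent problems (sunflower-type lower bounds, Ray--Chaudhuri--Wilson/linear-algebra upper bounds, compression and a finite classification of extremal configurations), but it is not a proof: the decisive step---matching the two bounds to pin down $\tau$ exactly---is explicitly deferred to ``the case analysis of Frankl, Ota and Tokushige'', i.e.\ to the very reference being proved. As written, the argument is circular at exactly the point where the content lies.

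The more substantive issue is that for the two specific $L$-sets in the statement no case analysis is needed, and actually carrying out the $\mathbb{F}_2$ reduction you only gesture at would have settled---and partially falsified---the claim. For $k$ odd and $L=\{0,2,\dots,k-1\}$ every member of $\mathcal{F}$ has odd size and every pairwise intersection is even, so the classical oddtown theorem gives $m(n,k,L)\le n$. For $k$ even and $L=\{1,3,\dots,k-1\}$ the $\mathbb{F}_2$ Gram matrix of the characteristic vectors is $J+I$, which has rank at least $m-1$, giving $m(n,k,L)\le n+1$. In both parity classes the sunflower with core of size $k-1$ (note $k-1\in L$ in both cases) gives $m(n,k,L)\ge n-k+1$, so $\tau=1$ in \emph{both} cases, with no need for shifting or for the restriction $k\le 12$. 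In particular the claimed value $\tau=2$ for odd $k$ contradicts oddtown and cannot be correct as stated; note also that Cor.~\ref{prop:clique_johnson} attaches the quadratic growth to the even-$k$ case instead, so the theorem and its corollary are mutually inconsistent. The right response here is not to reproduce the general Frankl--Ota--Tokushige machinery but to flag that the statement appears to be mis-transcribed from \cite{frankl96} and to supply the two-line linear-algebra argument that determines the exponent for these particular $L$.
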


If $k$ is even and $L=\{1,3,\ldots,k-1\}$, a $(n,k,L)$-system corresponds to a clique of $J(n,k,L_{\text{odd}})$. Therefore, the largest $(n,k,L)$-system is equivalent to the maximum clique, i.e., $m(n,k,L)=\omega(J(n,k,L_{\text{odd}}))$. Similarly, if $k$ is odd and $L=\{0,2,\ldots,k-1\}$, then $m(n,k,L)=\omega(J(n,k,L_{\text{even}}))$. This observation, together with Thm. \ref{thm:frankl}, leads directly to the following asymptotic bound on the clique number.

\begin{corollary}\label{prop:clique_johnson}
Let $k\leq 12$, and $n\geq 2k$. For $k$ even, $\omega(J(n,k,L_{\text{odd}}))=\Omega(n^2)$. For $k$ odd, $\omega(J(n,k,L_{\text{even}}))=\Omega(n)$.
\end{corollary}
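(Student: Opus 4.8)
The plan is to deduce the corollary from the extremal bound of Theorem~\ref{thm:frankl} by means of an exact correspondence between cliques of a merged Johnson graph and $(n,k,L)$-systems. The starting point is purely definitional: by Definition~\ref{def:union_johnson}, a set $S\subseteq V=\binom{[n]}{k}$ is a clique of $J(n,k,L)$ if and only if every pair of distinct members $F,F'\in S$ satisfies $|F\cap F'|\in L$. This is precisely the defining property of an $(n,k,L)$-system, so the largest clique and the largest such system have the same cardinality. Hence, for every admissible $L$,
\begin{equation*}
\omega\big(J(n,k,L)\big)=m(n,k,L)\,.
\end{equation*}

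Next I would specialise $L$ to the two families dictated by the anti-commutation pattern~\eqref{eq:anticommutativity_johnson}: for even $k$ one takes $L_{\text{odd}}=\{1,3,\ldots,k-1\}$, and for odd $k$ one takes $L_{\text{even}}=\{0,2,\ldots,k-1\}$. These are exactly the cases treated in Theorem~\ref{thm:frankl} under its hypotheses $k\leq 12$ and $n\geq 2k$, which are therefore inherited as the hypotheses of the corollary. Writing $\tau$ for the exponent of the relevant $L$-system, the lower half of the defining inequality $c\,n^{\tau}<m(n,k,L)<c'\,n^{\tau}$ furnishes $m(n,k,L)=\Omega(n^{\tau})$; only this lower bound is needed, so the matching upper bound can be discarded.

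Combining the clique identification with the exponents supplied by Theorem~\ref{thm:frankl} then reads off the stated asymptotics: evaluating $\tau$ in the even-$k$ case (with $L_{\text{odd}}$) and in the odd-$k$ case (with $L_{\text{even}}$) and substituting into $\omega\big(J(n,k,L)\big)=m(n,k,L)=\Omega(n^{\tau})$ produces the two claimed lower bounds. I do not expect any genuine obstacle here, since all of the analytic difficulty is carried by the Frankl--Ota--Tokushige theorem, which is invoked as a black box. The only points requiring care are (i) verifying that the clique/$L$-system dictionary is an exact equivalence rather than merely a one-sided inclusion, so that the extremal cardinality transfers without loss, and (ii) confirming that the ranges of $k$ and $n$ in the corollary coincide with those for which the relevant exponents in Theorem~\ref{thm:frankl} have actually been established.
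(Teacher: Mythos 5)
Your reduction of the corollary to Theorem~\ref{thm:frankl} via the identification $\omega(J(n,k,L))=m(n,k,L)$ is exactly the paper's route, and that identification is indeed an exact equivalence: adjacency in Definition~\ref{def:union_johnson} is precisely the condition $|F\cap F'|\in L$ defining an $(n,k,L)$-system. The gap is in the final step, which you assert rather than carry out. Theorem~\ref{thm:frankl} assigns $\tau=1$ to the pair ($k$ even, $L_{\text{odd}}$) and $\tau=2$ to the pair ($k$ odd, $L_{\text{even}}$); substituting these exponents gives $\omega(J(n,k,L_{\text{odd}}))=\Omega(n)$ for $k$ even and $\omega(J(n,k,L_{\text{even}}))=\Omega(n^{2})$ for $k$ odd, i.e.\ the statement of Corollary~\ref{prop:clique_johnson} with the two cases interchanged. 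So the derivation you describe does not produce the corollary as written, and your item (ii)---checking that the exponents really are the ones the corollary needs---is precisely where the argument fails; it needed to be performed, not deferred. (The same mismatch sits between the paper's Theorem~\ref{thm:frankl} and its corollary, so you have faithfully reproduced the paper's reasoning together with its flaw.)

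Nor can the discrepancy be resolved by trusting the corollary over the theorem: the claim $\omega(J(n,k,L_{\text{odd}}))=\Omega(n^{2})$ for even $k$ is false outright. A clique in $J(n,k,L_{\text{odd}})$ with $k$ even is a family of even-size sets whose pairwise intersections all have odd size; over $\mathbb{F}_2$ the Gram matrix of the $m$ characteristic vectors is $J_m-I_m$, which has rank at least $m-1$, forcing $m\le n+1$. Similarly, for odd $k$ a clique in $J(n,k,L_{\text{even}})$ is an oddtown family and has at most $n$ members. Hence both clique numbers are $\Theta(n)$, and the strongest conclusion available along these lines is $\omega=\Theta(n)$ in both parities (which still supports the paper's downstream remark that the clique bound on $\eta$ is weaker than the Lov\'asz bound). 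A correct write-up must either fix the exponents quoted in Theorem~\ref{thm:frankl} or restate the corollary accordingly; as it stands, the substitution you describe cannot "produce the two claimed lower bounds."
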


Cor. \ref{prop:clique_johnson}, together with Prop. \ref{prop:clique}, provide a sub-optimal upper bound on the incompatibility robustness for $\mathcal{A}_k$. A stronger bound can be found by considering the Lovász number of $J(n,k,L)$, as shown in Appendix \ref{A:degreeq}.

\end{document}